\documentclass[sn-basic]{sn-jnl} 
\usepackage{subcaption}
\usepackage{array}
\usepackage{comment}
\usepackage{multirow}
\usepackage{xcolor}

\newcommand{\amos}[1]{\textcolor{blue}{\textbf{A:} #1}}

\newcommand{\bigO}[1]{\mathcal{O}\!\left(#1\right)}
\newcommand{\eres}[3]{\mathcal{R}^{#3}(#1 \leftrightarrow  #2)}
\newcommand{\torus}{\mathbb{T}_n}
\newcommand{\lmax}{\ell_{max}}
\newcommand{\norm}[1]{\lVert #1 \rVert}
\newcommand{\E}{\mathbb{E}}
\newcommand{\R}{\mathbb{R}}
\newcommand{\tdetect}[1][cauchy]{t_{\text{detect}}^{X^{#1}}(S)}
\newcommand{\mdetect}[1][cauchy]{m_{\text{detect}}^{X^{#1}}(S)}
\newcommand{\Var}{\operatorname{Var}}

\newcommand{\tmix}{t_\text{mix}}
\newcommand{\mhit}{m_\text{hit}}
\newcommand{\mmix}{m_\text{mix}}

\newtheorem{theorem}{Theorem}[section]
\newtheorem{fact}[theorem]{Fact}
\newtheorem{claim}[theorem]{Claim}
\newtheorem{corollary}[theorem]{Corollary}
\newtheorem{definition}[theorem]{Definition}
\newtheorem{lemma}[theorem]{Lemma}
\newtheorem{observation}[theorem]{Observation}

\title[Intermittent Cauchy walks enable optimal 3D search across target shapes and sizes]{Intermittent Cauchy walks enable optimal 3D search across target shapes and sizes}

\author[1]{\fnm{M.} \sur{Stromieri}}
\author[2]{\fnm{E.} \sur{Natale}}
\author*[1]{\fnm{A.} \sur{Korman}}\email{amoskorman@cs.haifa.ac.il}

\usepackage{amsmath,amsfonts}
\DeclareMathOperator{\Prob}{\mathbb{P}}

\affil[1]{\orgdiv{Department of Computer Science}, \orgname{University of Haifa}, \orgaddress{\city{Haifa}, \country{Israel}}}
\affil[2]{\orgname{CNRS and Universit\'e Côte d’Azur}, \orgaddress{\city{Nice}, \country{France}}}

\abstract{Target shape, not just size, plays a pivotal role in determining detectability during random search. We analyze intermittent L\'evy walks in three dimensions, and mathematically prove that the widely observed Cauchy strategy (L\'evy exponent $\mu = 2$) uniquely achieves scale-invariant, near-optimal detection across a broad spectrum of target sizes and shapes. In a domain of volume $n$ with boundary conditions, expected detection time for a convex target of surface area $\Delta$ optimally scales as $n/\Delta$. Conversely, L\'evy strategies with $\mu < 2$ are slow at detecting targets with large surface area-to-volume ratios, while those with $\mu > 2$ excel at finding large elongated shapes but degrade as targets become wider. Our results further indicate a continuous geometric transition: volume dictates detection near $\mu = 1$, ceding dominance to surface area as $\mu \to 2$, after which surface area and elongation couple to govern detection. Ultimately, 3D search introduces a pronounced sensitivity to target shape that is absent in lower dimensions.

Our work provides a rigorous foundation for the L\'evy flight foraging hypothesis in 3D by establishing the scale-invariant optimality of the Cauchy walk. Furthermore, our results reveal dimensionality-driven shape vulnerabilities and offer testable predictions for biological and engineered systems.
}

\begin{document}
\maketitle

\section*{Introduction}

L\'evy walks consist of random movements with power-law-distributed step lengths. These patterns have been widely observed across taxa and have long been proposed as efficient strategies for locating sparse, unpredictably distributed targets \cite{shlesinger1986levy,viswanathan1999optimizing}. Heavy-tailed displacement patterns, often with exponents near the Cauchy value
$\mu=2$, were documented in a wide range of organisms \cite{harris2012generalized,ariel2015swarming,reynolds2007displaced,humphries2010environmental,sims2008scaling,boyer2006scale,rhee2011levy,reynolds2007free,bartumeus2003helical,raichlen2014evidence,de2011levy}. These empirical findings were linked to the L\'evy flight foraging hypothesis, which posits that natural selection should favor L\'evy flight movement patterns with
$\mu\approx 2$, due to their presumed optimality in foraging contexts \cite{viswanathan2008levy}.

However, despite early claims of the optimality of Cauchy walks \cite{viswanathan1999optimizing}, the conditions under which particular L\'evy exponents are advantageous remain subtle. In fact, recent results \cite{levernier2020inverse} have shown that under continuous detection, no single exponent is universally optimal in two or more dimensions, contradicting the conclusions of \cite{viswanathan1999optimizing} and undermining the original theoretical basis for the L\'evy flight foraging hypothesis. This raises an algorithmic question: under what conditions are L\'evy motions actually advantageous?

In finite two-dimensional domains, one answer emerged in the context of intermittent search \cite{benichou2011intermittent}, in which non-detecting relocation phases alternate with brief sensing phases. \cite{GuinardKorman2021} showed that when searching in 2D for rare targets of unpredictable size, intermittent Cauchy motion achieves scale-invariant, near-optimal detection performance across orders of magnitude in target size. In contrast, L\'evy walks with $\mu \neq 2$ are efficient only within narrow size ranges. These results provided the first rigorous theoretical foundation for the L\'evy flight foraging hypothesis in dimensions higher than one.

Here, we extend this theory to three dimensions, a setting relevant to multiple biologically realistic search domains, from marine predators seeking patches of fish \cite{humphries2010environmental,sims2008scaling} and marine microzooplankton seeking motile microalgae \cite{bartumeus2003helical} to immune cells patrolling tissues in search of pathogens \cite{harris2012generalized}. In these contexts, the ``target'' is a spatial subregion of elevated density, for example, a prey-rich patch or a pathogen-dense microenvironment.

To model such search scenarios, we extend the two-dimensional framework established in \cite{GuinardKorman2021} to three dimensions. We consider an intermittent L\'evy searcher within a 3D cubic torus with periodic boundary conditions, a setting that effectively represents both a single target in a finite volume and an infinite, regularly spaced array of targets in an unbounded space~\cite{viswanathan1999optimizing,benichou2011intermittent}. To evaluate performance with respect to a target $S$, we employ a competitive analysis approach to define the search {\em overhead}. This overhead is calculated as the ratio between a strategy's mean detection time of $S$ and the {\em optimal mean detection time}, that is, the theoretical minimum achievable by any strategy perfectly tuned to $S$. A search strategy is called {\em efficient} with respect to $S$ if it incurs low overhead. 

Our analysis reveals that search efficiency in three dimensions marks a fundamental departure from the 2D case. While search in two dimensions is largely governed by a single scale, namely, the target's diameter, 3D search performance is dictated by an intricate interplay of volume, surface area, and elongation, introducing a pronounced sensitivity to target shape that is absent in lower dimensions. 

Against this backdrop of highly variable, shape-dependent performance, we identify a unique property at $\mu=2$. We mathematically prove that the Cauchy strategy serves as a ``geometric equalizer'', achieving near-optimal, scale-invariant detection across a diverse spectrum of geometries, including balls, disks, and lines, even in the presence of surface irregularities. Specifically, the Cauchy strategy enables detection of convex targets in an expected time scaling as $n/\Delta$ (where
$n$ denotes the domain volume and
$\Delta$ the target surface area), nearly attaining the theoretical optimum without requiring any tuning to the target. In contrast, all other L\'evy strategies suffer significant efficiency losses as they encounter different geometric regimes: ballistic walks $(\mu < 2)$ are penalized by small spheres as well as flat or elongated targets of any size, while diffusive walks $(\mu > 2)$ degrade when encountering large spherical or disc-like targets (see Fig.~\ref{fig:summary}).

\begin{figure}
    \centering
    \includegraphics[width=0.63\linewidth]{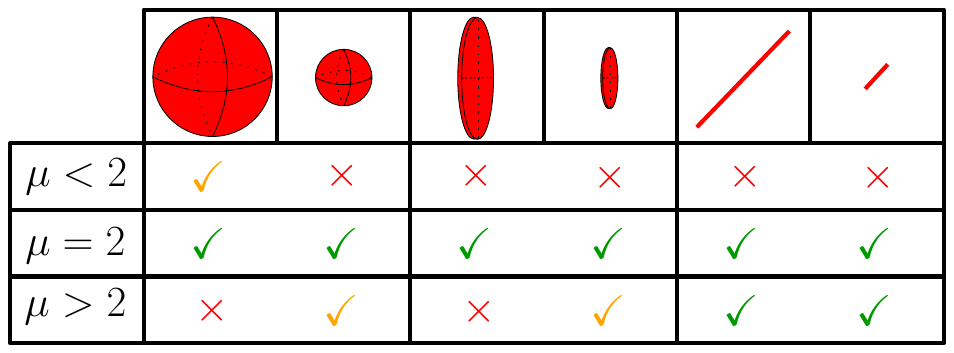}
    \caption{{\bf Search efficiency across L\'evy regimes and target geometries.} Summary of the performance of ballistic ($\mu < 2$), Cauchy ($\mu = 2$), and diffusive ($\mu > 2$) intermittent L\'evy walks across three fundamental target geometries—balls, discs, and lines—at both small and large scales. Performance is measured relative to the optimal search strategy tuned to the specific geometry.
    In our rigorous proofs, efficiency ($\checkmark$) indicates that the performance ratio is tight up to lower-order terms, and inefficiency ($\times$) corresponds to a polynomial gap in system parameters. Green and Red denote efficiency and inefficiency, respectively, as established by rigorous mathematical proofs (and corroborated by simulations); Yellow indicates efficiency supported solely by numerical simulations.}
    \label{fig:summary}
\end{figure}

\section*{Results}\label{sec:results}

We consider a searcher attempting to locate a single (not necessarily connected) target within a finite three-dimensional domain, modeled as a cubic torus \[\torus = \left[-\frac{\sqrt[3]{n}}{{2}}, \frac{\sqrt[3]{n}}{{2}}\right]^3,\] of volume~$n$, with periodic boundary conditions.

The searcher starts from a random point of the torus and moves at constant speed according to a random walk strategy~$X$, alternating between constant-duration pauses and ballistic steps whose lengths~$\ell$ follow a prescribed distribution~$p$ in uniform directions. For $\mu \in (1,3]$, the distribution of the (truncated) L\'evy walk~$X^\mu$ (see details in Methods Section \ref{sec:methods}) is:
\[
p(\ell) \sim \frac{1}{\ell^{\mu}}, \quad \ell \le \frac{\sqrt[3]{n}}{2}.
\]
The searcher possesses a detection radius $d \geq 1$; in detection mode, it successfully detects a given set of points if it is located within distance $d$ of them. Hereafter, we define this entire detectable region—the collection of points within distance $d$ of the desired set—as the target, denoted $S$ (see Fig.~\ref{fig:1patch} and Methods, Section \ref{sec:methods}). 
 We focus on  {\em intermittent search} \cite{benichou2011intermittent}, where detection is only possible during the short pauses between steps, and not while moving during a step. 
  Thus, detection occurs if, at the end of a ballistic step, the searcher is located inside~$S$ (see Fig.~\ref{fig:1patch}).



The {\em detection time} of process~$X$ with respect to~$S$, denoted~$t_{\text{detect}}^{X}(S)$, is the expected time until~$X$ first detects~$S$, averaged over the randomness of both the process~$X$ and the initial position of the searcher. 

For a given target $S$, let $X^\star$ denote the optimal strategy, that is, the one that minimizes  detection time of $S$. The {\em overhead} of a strategy $X$ with respect to $S$ is defined as the ratio between its detection time $t_{\text{detect}}^{X}(S)$ and the optimal detection time $t_{\text{detect}}^{X^\star}(S)$. Consequently, efficiency is considered degraded (or increased) if this overhead increases (or decreases). In our mathematical framework, a search strategy is classified as efficient with respect to a target if its overhead is at most poly-logarithmic in  the system size $n$. Conversely, a strategy is called inefficient if the overhead scales polynomially with $n$.

\begin{figure}[htbp]
\centering

\begin{minipage}[c]{0.46\textwidth}
    \centering
    \subcaptionbox{Patch example\label{fig:1patch}}%
    {\includegraphics[width=\linewidth]{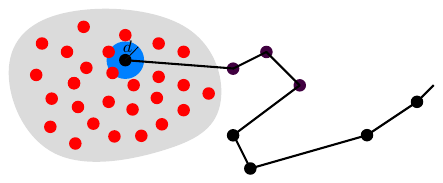}}

    \vspace{5pt}

    \subcaptionbox{Example for $\Delta_P$ and $\Delta_B$\label{fig:1delta}}%
    {\includegraphics[width=\linewidth]{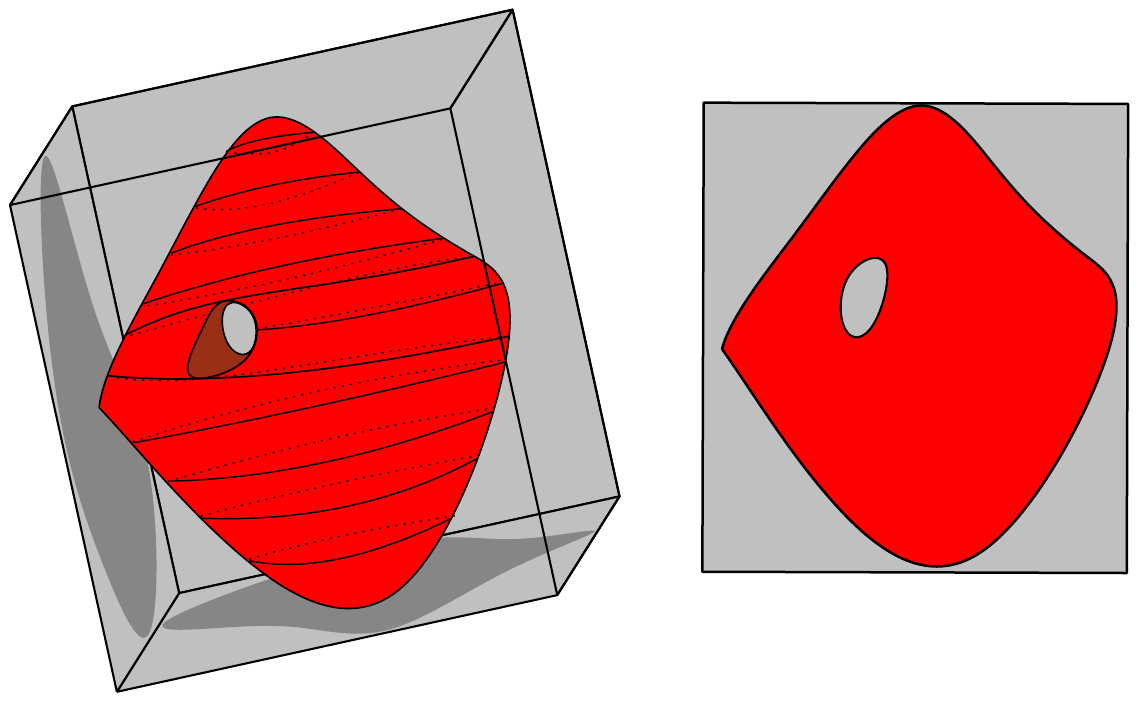}}
\end{minipage}
\hfill
\begin{minipage}[c]{0.53\textwidth}
    \centering
    \subcaptionbox{Tiling\label{fig:1tiling}}%
    {\includegraphics[
        width=\linewidth,
        height=0.5\textheight,
        keepaspectratio
    ]{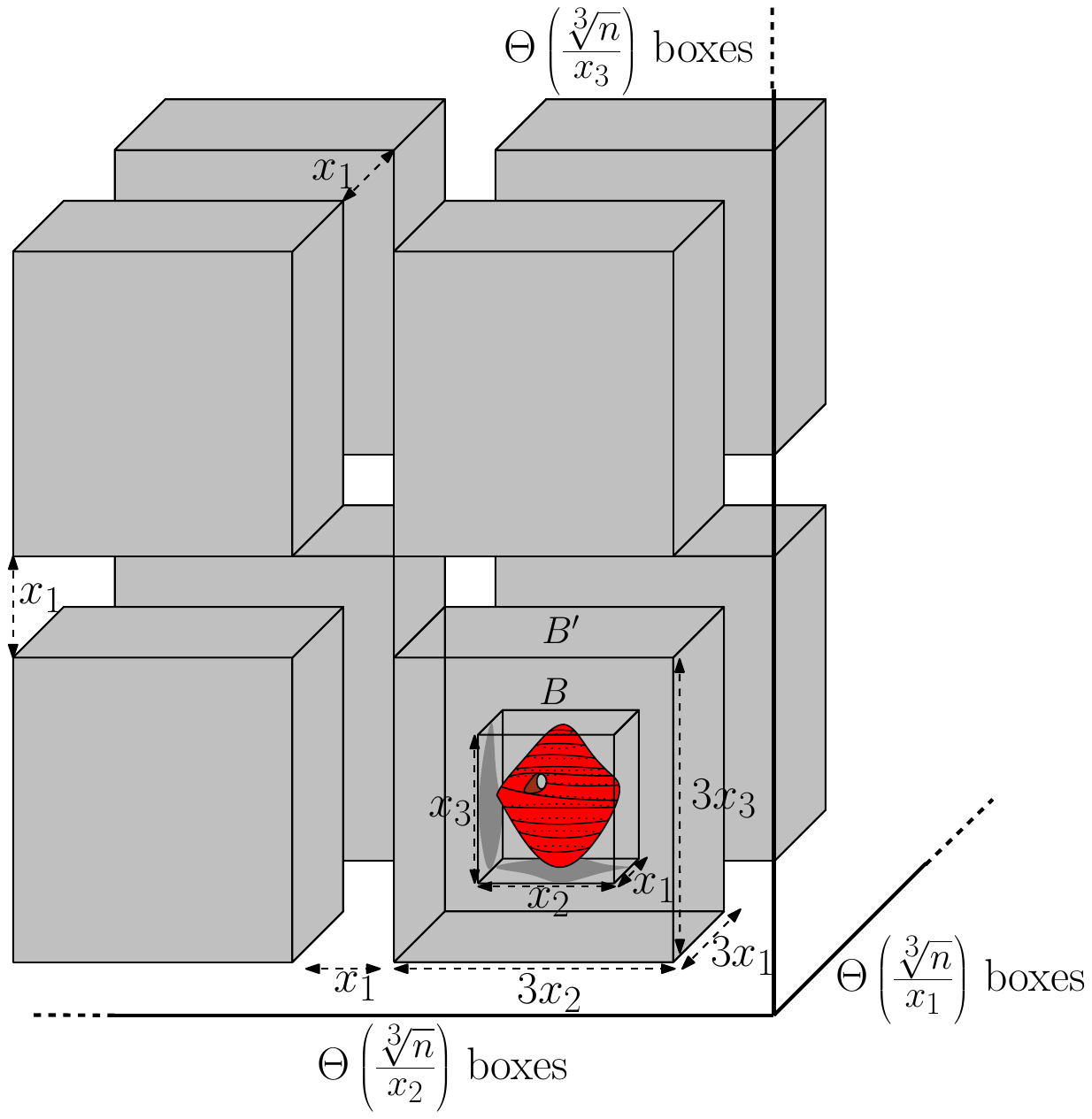}}
\end{minipage}
\caption{{\bf Intermittent search, Bounding box, and Universal lower bound construction.} Fig.~\ref{fig:1patch} provides a visualization of an agent performing an intermittent L\'evy walk (for simplicity of presentation, it is two-dimensional). The points to be detected (e.g., food patches) are represented by red circles, and the surrounding grey area indicates the detectable region, which we refer to as the target $S$; similarly, the blue area around the searcher represents its own detection radius. In this specific instance, the agent successfully detects the prey via this range. Notably, because the search is intermittent, the agent cannot detect targets while in transit; detection occurs only upon stopping at the locations indicated by the black dots. Fig.~\ref{fig:1delta} shows a target (red shape on the left) together with its minimum surface area bounding box \(B = \texttt{Box}(S)\) (grey box on the left), which is not necessarily axis-aligned. The panel on the right illustrates the projection of the shape onto the largest face of the box; In this example, $\Delta_P$ equals the red area, while $\Delta_B$ equals the red area plus the grey area. Figure~\ref{fig:1tiling} depicts the tiling argument used to derive the universal lower bound in Eq.~\eqref{universal-lower}. For clarity, the boxes are shown as axis-aligned; in the general case, however, they need not be aligned with the coordinate axes. }
\label{fig1}
\end{figure}

\begin{figure}[htbp]
\centering

\begin{subfigure}[t]{0.46\textwidth}
  \centering
  \includegraphics[width=\linewidth]{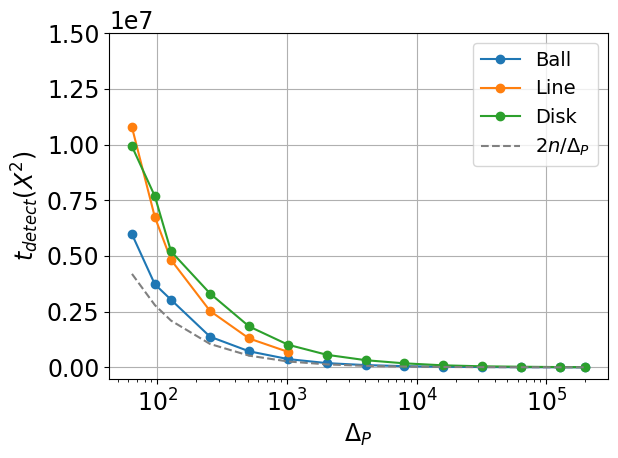}
  \caption{Cauchy searching different shapes}
  \label{fig:1upper_bound}
\end{subfigure}
\hfill
\begin{subfigure}[t]{0.46\textwidth}
  \centering
\includegraphics[width=\linewidth]{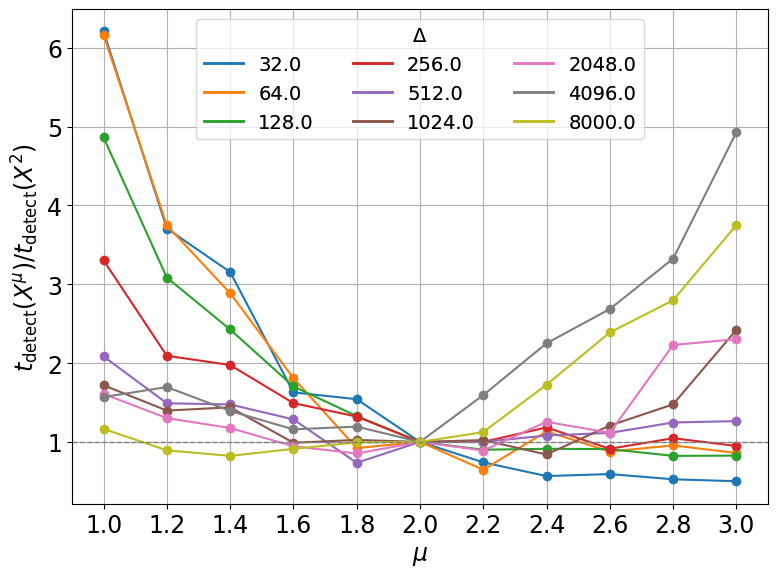}
  \caption{Relative detection time (Ball)}
  \label{fig:2relative_detection_time_ball}
\end{subfigure}

\vspace{6pt}
\begin{subfigure}[t]{0.46\linewidth}
    \centering
    \includegraphics[width=\linewidth]{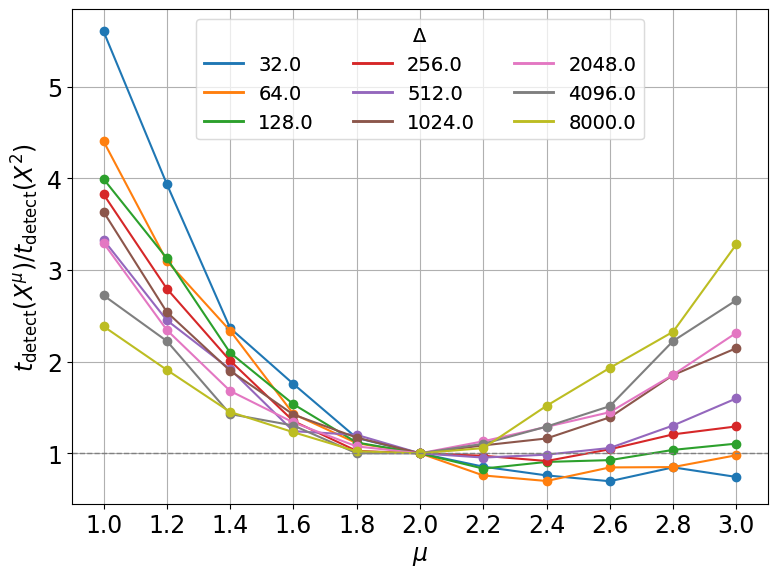}
    \caption{Relative detection time (Disk)}
    \label{fig:2relative_detection_time_disk}
\end{subfigure}
\hfill
\begin{subfigure}[t]{0.46\textwidth}
  \centering
  \includegraphics[width=\linewidth]{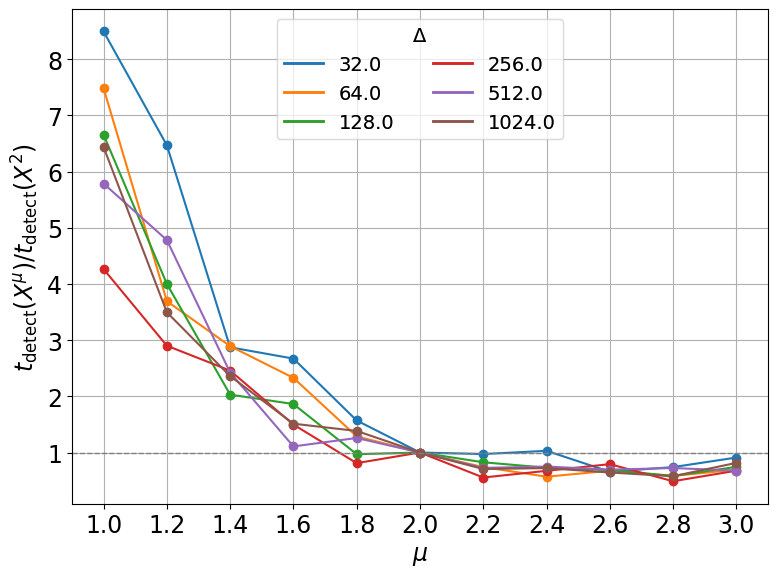}
  \caption{Relative detection time (Line)}
  \label{fig:2relative_detection_time_line}
\end{subfigure}

\caption{{\bf Comparing the detection times of different L\'evy walks.} All simulations correspond to the 3D torus $\torus$ with volume $n=512^3$.
Fig.~\ref{fig:1upper_bound} presents a comparison of the detection times of the Cauchy walk for  \textit{Line}, \textit{Disk}, and \textit{Ball} targets, all sharing the same projected surface area $\Delta_P$, alongside the corresponding universal lower bound (dashed line). Fig.~\ref{fig:2relative_detection_time_ball} displays detection-time ratios normalized by the detection time of the Cauchy walk ($\mu = 2$) for ball-shaped targets with surface area $\Delta$ (color-coded), emphasizing the relative performance of different exponents. Figs. \ref{fig:2relative_detection_time_disk} and \ref{fig:2relative_detection_time_line} display analogous results for a disk and line-shaped targets, using the same experimental setup. Takeaway: $\mu = 2$ stays near baseline across geometries; $\mu \neq 2$ exhibits geometry-dependent
slowdowns.}

\end{figure}

\paragraph{Geometric Framework}
Our analysis suggests that the influence of target shape on search efficiency is substantially more complex in three dimensions than in two. While diameter serves as a natural size descriptor in 2D settings \cite{GuinardKorman2021}, we were unable to identify a single geometric parameter that reliably captures optimal detection across the diverse range of 3D target shapes. Nevertheless, surface area, and related geometric variants, emerge as effective predictors of optimal search time across a broad class of shapes, particularly those that are not highly folded.

Furthermore, different L\'evy exponents are sensitive to distinct geometric features. Intuitively, a L\'evy walk with $\mu \approx 1$ samples the torus almost uniformly in each step; this makes the target's volume the primary determinant of detection time. Conversely, for $\mu \approx 3$, the walk resembles Brownian motion. Because steps are small and local, the walker cannot `jump' into the target's interior but must instead intercept its boundary. Hence, detection time is fundamentally driven by boundary crossing, making surface area a natural candidate for governing detection performance.
As we will demonstrate later, this intuition is insufficient to encompass the full complexity of geometric sensitivity, e.g., the observation that Brownian-like walkers detect elongated targets significantly faster than spherical ones of equivalent surface area.

Combining mathematical analysis with simulations, we demonstrate that geometric sensitivity shifts continuously with the L\'evy exponent: volume dominates detection time for $\mu \approx 1$, gradually yielding to surface area, which becomes the sole governing factor at the critical point $\mu = 2$. Beyond this point, for $\mu > 2$, performance becomes sensitive to both surface area and elongation.

The following definitions will be used to formally relate target morphology to search efficiency. Given a target $S$, let $\texttt{Box}(S)$ denote the smallest 3D box containing $S$. 
\begin{itemize}
    \item \textbf{Largest face area ($\Delta_B$):} This is defined as the surface area of the largest face of $\texttt{Box}(S)$ (Fig.~\ref{fig:1delta}, left side).
    \item \textbf{Projected surface area ($\Delta_P$):} This is defined by considering the orthogonal projections of $S$ onto each of the six faces of $\texttt{Box}(S)$, and taking the largest surface area among these six projections (Fig.~\ref{fig:1delta}, right side).
\end{itemize}

Note that $\Delta_B\geq \Delta_P$ for every shape $S$. A shape is called {\em approximately convex} if $\Delta_P \geq  \Delta_B/36$. The  choice of the factor $1/36$ allows this family to contain all convex shapes (e.g., spheres, ellipsoids, discs, cylinders), possibly with small surface irregularities such as holes or bumps, see proof in SM, Section \ref{sec:approximately_convex}. 

Let $x_1, x_2, x_3$ denote the side dimensions of the bounding box $B = \texttt{Box}(S)$, where, without loss of generality, $x_1 \leq x_2 \leq x_3$. We refer to these quantities  as the target's {\em thickness}, {\em width}, and {\em length}, respectively.
Note that the target's length satisfies
$x_3 = \Delta_B^\delta$, for some $\delta \in \left[\tfrac{1}{2}, 1\right]$, which we refer to as the {\em elongation} of $S$. For example, $\delta \approx \frac{1}{2}$ for a ball-shaped or disc-shaped target, while $\delta \approx 1$ for a long line-shaped target  (see details in Section~\ref{sec:line}).

\paragraph{Universal lower bound}\label{par:univ-lb}

  We proved mathematically (SM, Section~\ref{sec:universal}) that, regardless of the search strategy, the detection time of
 $S$ is always at least
\begin{align}\label{universal-lower}
\Omega\left(\frac{n}{\Delta_B}\right).
\end{align}
We refer to this bound as a {\em universal lower bound}, since it holds for any search strategy, and thus applies in particular to all L\'evy walks.

To provide geometric intuition for the derivation of Eq.~\eqref{universal-lower}, note that the volume of the bounding box $B = \texttt{Box}(S)$ is given by $V = x_1 x_2 x_3$, and the largest face area is $\Delta_B=x_2 x_3$. Analogously, define an enlarged box $B'$ with sides $3x_1, 3x_2, 3x_3$ and volume $V'=\Theta(V)$. Consider $\Omega(n/V')=\Omega(n/V)$ different placements of $B'$ in the torus, so that every two displacements are of distance at least $x_1$ apart (see Fig.~\ref{fig:1tiling}). Before the search starts, instead of choosing the starting point of the searcher at random, we fix it at the origin and choose the position of $S$ randomly in the torus. With constant probability,  $S$ is fully contained in one box $B'$, and all such boxes have equal probability to contain it.
 To find $S$, the searcher needs to visit at least half of these box-displacements on average, and since each new visit takes at least $x_1$ time (the minimal distance between displacements), the total expected time is at least $\Omega(n x_1/V)=\Omega(n/x_2x_3)=\Omega(n/\Delta_B)$. This establishes Eq.~\eqref{universal-lower}.

Next, we seek to identify whether any intermittent L\'evy walk strategy can detect
$S$ in time approaching the universal lower bound. Our analysis of the interplay between target shape and the exponent $\mu$ reveals a phase transition at $\mu=2$.

\paragraph{Ballistic regime ($\mu<2$) is penalized by targets with a large surface to volume ratio}

Consider a L\'evy walk with exponent $\mu \in (1,2)$ 
and a target $S$ of volume~$V$.
Let $\varepsilon=2-\mu$ denote the `distance' from the Cauchy exponent.
We first proved (SM, Section~\ref{sec:lower_number_steps}) a general lower bound of $\Omega(n/V)$ for the expected number of steps before finding the target, which holds for any intermittent L\'evy walk. Applying a variant of Wald's identity (SM, Claim~\ref{claim:timeVSmoves}), we can transform this lower bound into one for the detection time by scaling it by the average step length of the L\'evy process. In the regime $\mu \in (1,2)$, the mean step length is $\Theta(n^{\varepsilon/3})$ (SM, Eq.~\eqref{obs:tau}), leading to the following lower bound on the detection time:
\begin{align}\label{eq:lb-smaller}
t_{\text{detect}}^{X^\mu}(S) = \Omega\left(\frac{n^{1+\varepsilon/3}}{V}\right).
\end{align}
Consider a target $S$ whose volume $V$ is proportional to its largest face area $\Delta_B$.
For example, a ball of constant diameter, a disc of an arbitrary diameter, or a line of arbitrary length (see Methods, Section~\ref{sec:methods}). 
For such a target,  Eq.~\eqref{eq:lb-smaller} implies that detection time is a factor of $\Omega(n^{\varepsilon/3})$ slower than the universal lower bound. This gap widens as $\mu$ decreases further from~$2$, with the deterioration driven by the polynomial degree $\varepsilon/3$.
Due to Eq.~\eqref{eq:scale-inv-surface} (discussed later), detection times for such targets are slower than the ones of the Cauchy walk by the same polynomial factor, up to a lower order term. Since the optimal strategy finds $S$ at least as fast as Cauchy, the overhead (defined w.r.t the optimal strategy) is polynomial in $n$.


These theoretical results are consistent with our numerical simulations.
Fig.~\ref{fig:2relative_detection_time_ball} shows the detection time as a function of the L\'evy exponent $\mu$ for ball-shaped targets of varying surface areas $\Delta$, normalized by the detection time of the Cauchy walk ($\mu = 2$) for each corresponding $\Delta$. In the regime $\mu<2$, small ball targets are found more slowly by such L\'evy processes than by the Cauchy searcher. Figs.~\ref{fig:2relative_detection_time_disk} and~\ref{fig:2relative_detection_time_line} further show that when $\mu < 2$, detection time of disc-targets (Fig.~\ref{fig:2relative_detection_time_disk}) as well as line-targets (Fig.~\ref{fig:2relative_detection_time_line}), increases as $\mu$ decreases, regardless of the target sizes.
Finally, Fig.~\ref{fig:2relative_detection_time_ball} suggests that large-scale balls are detected efficiently in the ballistic regime ($\mu < 2$). This performance may be attributed to the diminishing surface area to volume ratio as the radius of the ball increases.

 These results indicate that increasingly ballistic strategies become less effective at detecting targets with a large surface area to volume ratio, regardless of their size,
 and confirm that the transition from ballistic to diffusive search at $\mu = 2$ is critical for efficient detection of such geometries.

\paragraph{Localized exploitation ($\mu>2$) is penalized for large spherical and large disc-like geometries}

Next, we examined L\'evy walks in the diffusive regime, $\mu \in (2,3]$. We prove that the overhead for spherical or disc-like targets is at least a polynomial factor $\Omega(\Delta^{\frac{\mu-2}{2}})$, where the penalty exponent increases as $\mu$ deviates from 2. 

Specifically, we proved (SM, Corollary \ref{cor: lower_bound_large_mu_surface}) that detection time of a L\'evy walk $X^\mu$ with $\mu\in(2,3]$ with respect to a ball (or a disc) target $S$ of surface area $\Delta$ obeys the following asymptotic lower bounds:
\begin{align*}
t_{detect}^{X^\mu}(S)=
\begin{cases}\Omega\left(n\Delta^{\frac{\mu-2}{2} - 1}\right) \text{ if } \mu \in (2,3) \\
\Omega\left(\frac{n}{\Delta^{1/2}\log \Delta}\right) \text{ if } \mu = 3\end{cases}.
\end{align*}
These lower bounds are based on a tiling argument analogous to the one used in the proof of the universal lower bound (see Fig.~\ref{fig:1tiling}), combined with a more nuanced lower bound on the time needed to move between boxes. Specifically, in the diffusive regime ($\mu > 2$), traversing a distance $\ell$ takes polynomial time in $\ell$, where the exponent of the polynomial is $\mu - 1$ (SM, Claim~\ref{claim:distance_large_mu}). 

Observe that in this regime of $\mu$, detection time for a ball or a disc target with surface area $\Delta$ is larger than the universal lower bound by a polynomial factor in $\Delta$, specifically $\Delta^{\frac{\mu-2}{2}}$. 
For large targets, such that $\Delta$ is polynomial in $n$, this factor is also polynomial in $n$. 
Due to Eq.~\eqref{eq:scale-inv-surface} (discussed later), the detection time of the L\'evy process is also slower than the Cauchy walk by the same polynomial factor, up to the poly-logarithmic factor in~$n$ associated with the Cauchy walk.
This implies a polynomial in $n$  overhead with respect to such targets. 

These theoretical findings are corroborated by our numerical simulations, which demonstrate that for spherical (Fig.~\ref{fig:2relative_detection_time_ball}) and disc-like (Fig.~\ref{fig:2relative_detection_time_disk}) targets, the search efficiency of diffusive strategies ($\mu > 2$) relative to the Cauchy strategy decreases as either the exponent $\mu$ or the target's surface area increases.

These results indicate that overly localized strategies suffer from reduced efficiency with respect to large spherical and disc-like targets, reinforcing the conclusion that the transition from diffusive to ballistic search at $\mu = 2$ is critical for the efficient detection of such targets.

\paragraph{Scale-invariance optimality of the Cauchy Walk}\label{par: scale-inv}

Intermittency renders long relocations ``blind'', meaning detection relies entirely on where the step stops. To have any chance of hitting the target, an intermittent searcher must first select a winning direction; the probability of choosing this ``good angle'' is dictated strictly by the target's 2D silhouette rather than its total surface area, as a blind trajectory simply bypasses internal folds. Consequently, the success of a relocation is controlled by the target's cross-sectional exposure along the approach vector. This highlights the projected surface area $\Delta_P$ parameter, which acts as a proxy for the maximal target area that can be ``observed'' across all possible viewing directions.

We proved (SM, Theorem \ref{thm:upper-bound-cauchy-box}) that for any (not necessarily connected) target $S$ with projected surface area $\Delta_P$, the detection time of the Cauchy walk $X^{\text{cauchy}}$ satisfies:
\begin{align}\label{eq:scale-inv-surface}
    t_{\text{detect}}^{X^{\text{cauchy}}}(S) = \bigO{\frac{n \log^3 n}{\Delta_P}}.
\end{align}

Informally, the proof of Eq.~\eqref{eq:scale-inv-surface} (see SM, Section \ref{sec:detection-time-cauchy}) uses the projected surface area $\Delta_P(S)$ to identify a subset of $S$ whose properties render its hitting time more analytically tractable. More specifically, leveraging the geometrical properties of the projection $S$ onto the bounding box, the proof constructs a subset $S^\star\subseteq S$ which consists of $\Theta(\Delta_P)$ balls of radius 1, with the property that within any given ball of radius $r>1$, there are at most $\bigO{r^2}$ such balls. These properties are particularly helpful for upper-bounding the expected number of returns to $S^\star$, a key step in establishing an upper bound on the hitting time of $S^\star$. Since $S^\star\subseteq S$, this upper bound also serves as an upper bound on the hitting time of $S$.

Recall that the largest face area $\Delta_B$ and the projected surface area $\Delta_P$ of an approximately convex target are proportional. Combined with
Eqs.~\eqref{universal-lower} and~\eqref{eq:scale-inv-surface}, this implies that a Cauchy walk finds such a target in optimal time, up to a poly-logarithmic factor in~$n$.
In particular, for a convex target with surface area $\Delta$, the detection time is at most the optimal bound of $n/\Delta$, times a poly-logarithmic factor in~$n$.

We further showed (SM, Section~\ref{sec:counterexample}) that, unlike the convex case, for arbitrary  shapes, the projected surface area $\Delta_P$ in Eq.~\eqref{eq:scale-inv-surface} cannot be replaced by either the target’s surface area $\Delta$ or the largest face surface area  $\Delta_B$.

Simulation results support our asymptotic mathematical findings.
Fig.~\ref{fig:1upper_bound} depicts the detection times of the Cauchy walk on the torus for both lines, discs, and balls, with equal projected surface area~$\Delta_P$ (see Methods, Section \ref{sec:methods}). These plots demonstrate that Cauchy's detection time is proportional to the universal lower bound, suggesting that the poly-logarithmic factor predicted by Eq.~\eqref{eq:scale-inv-surface} is negligible in practice.

Figs.~\ref{fig:2relative_detection_time_ball},~\ref{fig:2relative_detection_time_disk} and~\ref{fig:2relative_detection_time_line} demonstrate the overall superiority of the Cauchy walk as a robust, geometry-agnostic search strategy. While searchers in the exploration-dominated ($\mu<2$) or exploitation-dominated ($\mu > 2$) regimes perform well for specific target scales and shapes, they suffer substantial efficiency losses when faced with mismatched target sizes and geometries. In contrast, the Cauchy walk ($\mu = 2$) identifies a critical point of geometric invariance, maintaining near-optimal detection times across a spectrum of target geometries: from spheres (Fig.~\ref{fig:2relative_detection_time_ball}) and flat discs (Fig.~\ref{fig:2relative_detection_time_disk}) to highly elongated targets (Fig.~\ref{fig:2relative_detection_time_line}).

\paragraph{Different L\'evy walks are sensitive to different geometrical properties}
Integrating mathematical analysis with numerical simulations, we show that geometric sensitivity evolves continuously with the L\'evy exponent: while volume dictates detection time in the ballistic regime ($\mu \approx 1$), it gradually cedes dominance to surface area, which emerges as the primary governing parameter at the critical point $\mu = 2$. Conversely, in the diffusive regime ($\mu > 2$), detection performance becomes multifaceted, determined by a coupling of surface area and target elongation.

Consider first a L\'evy walker in the regime $\mu\in(1,2)$.
Eq.~\eqref{eq:lb-smaller} provides a lower bound on its detection time concerning any target $S$ with volume $V$. Recall that this lower bound is composed of a lower bound on the number of steps $m$ before finding $S$, namely, $\Omega(\frac{n}{V})$, times the average step length, which is  $\Theta(n^{\varepsilon/3})$, where $\varepsilon=2-\mu$.
For $\mu\approx 1$, the process is close to sampling the torus uniformly in each step, implying that the number of steps $m$ scales like $n/V$. Together, these bounds yield a tight bound of $\Theta(\frac{n^{1+\varepsilon/3}}{V})$ for the detection time,  highlighting volume as the target's critical geometric parameter, governing detection time in scenarios where $\mu \approx 1$.

Fig.~\ref{fig:3detection_time_ratio_volume} illustrates the detection time ratio between a ball and a line of equivalent volume, for different exponents $\mu$. The relatively flat profiles observed for $\mu \leq 1.4$ suggest that, in this regime, detection is primarily volume-driven. Furthermore, Fig.~\ref{fig:3detection_time_ratio_surface} presents the ratio of detection times for a ball and a line with equal surface area $\Delta$. Here, the dashed line represents the volume ratio of the two targets; the proximity of the plots to this line at small $\mu$ values further supports volume as a dominant geometric parameter in this regime. Moreover,
Figs.~\ref{fig:3detection_time_ratio_volume} and~\ref{fig:3detection_time_ratio_surface} demonstrate a relatively smooth transition from volume to surface area, where the latter is represented by the dashed line in Fig.~\ref{fig:3detection_time_ratio_volume}, and the horizontal line $y=1$ in Fig.~\ref{fig:3detection_time_ratio_surface}.

For the Cauchy walk ($\mu=2)$, Eq.~\eqref{eq:scale-inv-surface} shows that the detection time of a convex target with surface area $\Delta$ scales like the universal lower bound $n/\Delta$, up to a poly-logarithmic factor in $n$. This highlights surface area as the critical geometric parameter governing optimal detection time, and, in particular, the most relevant geometric parameter for the case $\mu=2$. Accordingly, as predicted by combining Eqs.~\eqref{universal-lower} and~\eqref{eq:scale-inv-surface}, Fig.~\ref{fig:1upper_bound} demonstrates that despite the substantial difference in volume between the corresponding line, disk, and ball shapes, their detection times by the Cauchy walk remain comparable due to their projected surface area similarity.

Next, we considered L\'evy walks with exponent $\mu \in (2,3]$, and showed that the detection time is strongly affected by both surface area and elongation. Indeed, these searchers are exceptionally efficient at locating highly elongated targets; however, their performance degrades rapidly as the target becomes wider, that is, as the second-largest dimension of the bounding box, increases.

Specifically,
we proved (SM, Theorem \ref{th: lower_bound_large_mu_surface}) that detection time of a target $S$ with elongation $\delta$ obeys the following asymptotic lower bounds:
\begin{align}\label{eq:lb-larger}
t_{\text{detect}}^{X^\mu}(S) =
\begin{cases}
 \Omega\!\left(n \Delta_B^{\varepsilon(1-\delta)-1}\right),
   & \text{if } \mu = 2 + \varepsilon,\ \text{with } 0 < \varepsilon < 1, \\[6pt]
 \Omega\!\left(\dfrac{n}{\Delta_B^\delta \log \Delta_B}\right),
   & \text{if } \mu = 3.
\end{cases}
\end{align}
 In this regime of $\mu$, detection time for approximately convex targets with a largest face area of $\Delta_B$ (or equivalently, a  projected surface area) is slower than the Cauchy walk by a polynomial factor $\Delta_B^c$, where $c = \varepsilon(1-\delta)$, up to a poly-logarithmic factor in~$n$. The exponent $c$ scales positively with $\mu$ and inversely with $\delta$.

While the lower bounds defined in Eq.~\eqref{eq:lb-larger} are maximized for ball or disc  shapes ($\delta=1/2$), they become closer to the universal lower bound of $\Omega(n/\Delta_B)$ as  the elongation parameter $\delta$ approaches 1.
Consistent with this, Fig.~\ref{fig:3varying_delta} illustrates how   detection time of rectangles (see definition in Methods, Section \ref{sec:methods}) increases as $\delta$ decreases (i.e., the target becomes wider), while maintaining the same surface area.

Next, we demonstrate that elongated targets are located with high efficiency by L\'evy searchers in the diffusive regime ($2 < \mu \leq 3$). We provide analytical evidence by analyzing a discretized version of the process (see Methods, Section \ref{sec:methods}), proving that the detection time for a line target of length $L$ is $O((n/L) \log n)$ (see SM, Section~\ref{sec:simpleRW}). For the discrete grid, it is easy to adjust the argument behind the universal lower bound (Eq.~\eqref{universal-lower}) to obtain an $\Omega(n/L)$ lower bound for the detection time of the line. Hence, such L\'evy walks incur only a logarithmic overhead. As $n \to \infty$, this discrete approximation can be expected to converge to the continuous domain setting with detection radius being 1, suggesting that these scaling results hold also for the continuous setting.

The proof is based on the following intuition. Assume the target is a line of length~$L$, oriented parallel to the $x$-axis. Consider the L\'evy walk projected onto the $yz$-plane, and let $v$ be the point representing the projection of the target line onto this plane. This projected process behaves as a L\'evy walk with the same exponent on a 2D domain of area $n^{2/3}$; consequently, the expected time for the projected walk to hit $v$ is $\mathcal{T}:=\bigO{n^{2/3} \log n}$
(SM, Theorem~\ref{th:2d_hit}).
At the moment of impact with $v$, the 3D walk is effectively mixed on the 3D torus
(SM, Theorem~\ref{th:1d_mix}),
implying that all points along the line segment orthogonal to $v$ (which contains the target and has a total length of $n^{1/3}$) are roughly equally likely to be occupied by the walker. This yields a conditional detection probability of $\Theta(L/n^{1/3})$. Hence, by time $\mathcal{T}$, detection occurs with probability $p:=\Theta(L/n^{1/3})$. Summing these independent geometric trials yields a total expected detection time of $\mathcal{T}/p=\bigO{(n/L) \log n}$. Accordingly, our simulations (Fig.~\ref{fig:2relative_detection_time_line}) confirm that this accelerated detection of line targets is comparable to that of the Cauchy walk, and remains robust across this entire diffusive range.


Finally, consistent with the lower bounds in Eq.\eqref{eq:lb-larger},
Figs.~\ref{fig:3detection_time_ratio_volume} and~\ref{fig:3detection_time_ratio_surface} demonstrate that long lines are detected significantly faster than balls
of equal volume (Fig.~\ref{fig:3detection_time_ratio_volume}) or surface area (Fig.~\ref{fig:3detection_time_ratio_surface}). This indicates that surface area, while dominant at $\mu=2$, is no longer a sufficient descriptor once $\mu$ enters the diffusive regime.

These results suggest that, in addition to the surface area, elongation plays a critical role for the detection time in the diffusive regime $\mu\in (2,3]$.

\begin{figure}[htbp]
\centering

\begin{subfigure}[t]{0.48\linewidth}
        \centering
        \includegraphics[width=\linewidth]{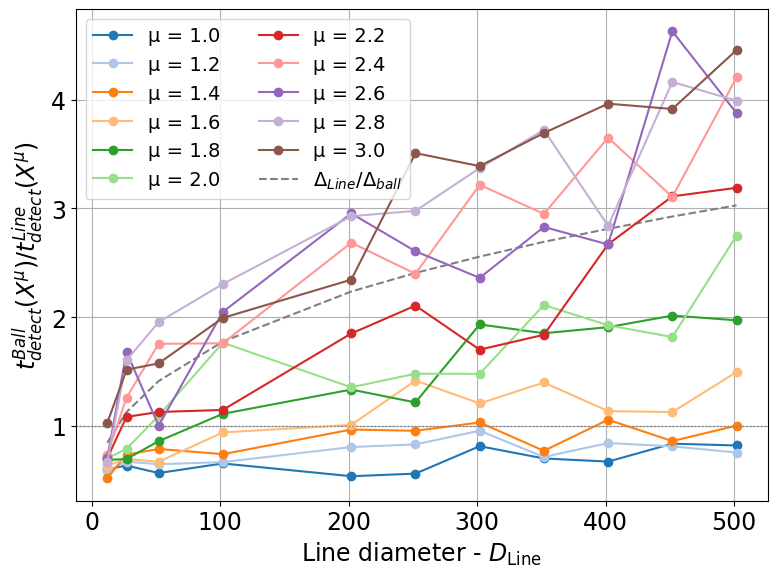}
        \caption{Detection time ratio with fixed volume}
        \label{fig:3detection_time_ratio_volume}
    \end{subfigure}
    \hfill
    \begin{subfigure}[t]{0.48\linewidth}
        \centering
        \includegraphics[width=\linewidth]{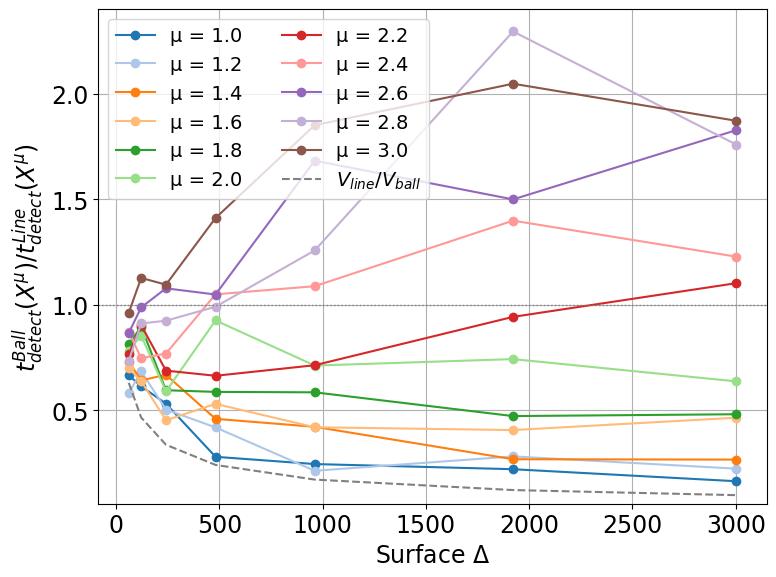}
        \caption{Detection time ratio with fixed surface area}
        \label{fig:3detection_time_ratio_surface}
    \end{subfigure}

\vspace{6pt}

\begin{subfigure}[b]{0.48\textwidth}
    \centering
    \includegraphics[width=\linewidth]{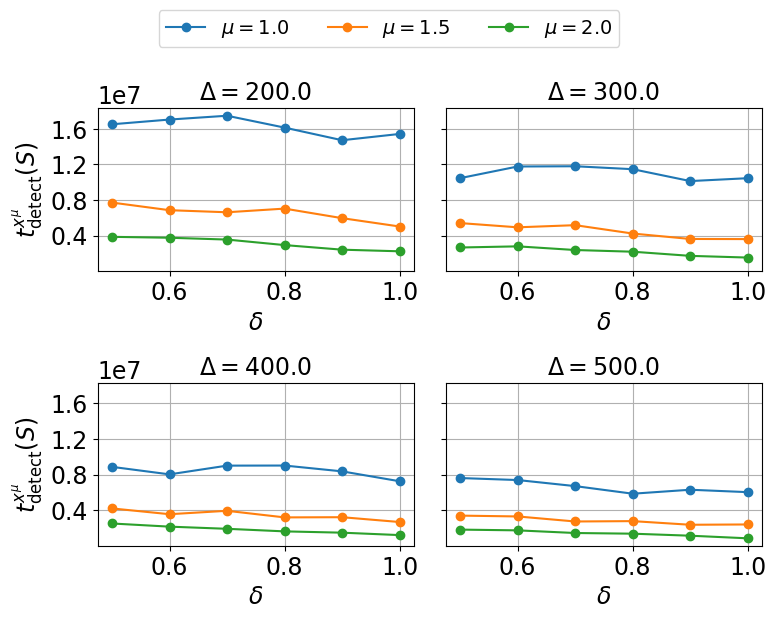}
    \caption{Varying $\delta$, $\mu \in \{1, 1.5, 2\}$}
    \label{fig:3varying_delta_small_mu}
\end{subfigure}\hfill
\begin{subfigure}[b]{0.48\textwidth}
    \centering
    \includegraphics[width=\linewidth]{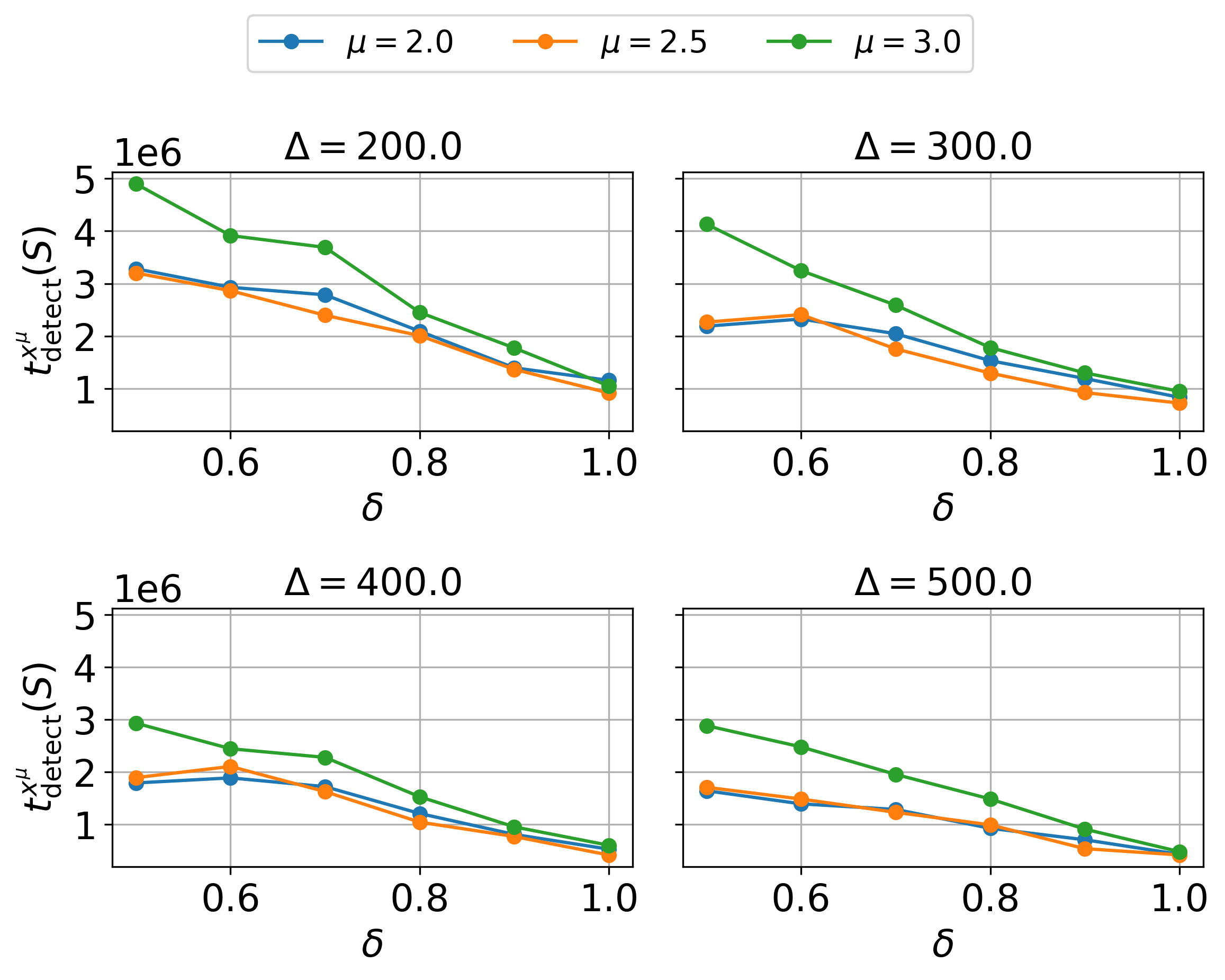}
    \caption{Varying $\delta$, $\mu \in \{2, 2.5, 3\}$}
    \label{fig:3varying_delta}
\end{subfigure}

\caption{{\bf Detection-time phase transition induced by varying $\mu$ and dependency on elongation.} All simulations are performed on a three-dimensional torus $\torus$ with volume $n = 512^3$. Fig.~\ref{fig:3detection_time_ratio_volume} shows the ratio of detection times for a ball and a line of equal volume, for several values of $\mu$. The $x$-axis corresponds to the line length $D_{\mathrm{Line}}$, while the diameter of the corresponding ball is chosen so that the two volumes are equal. The dashed grey line shows the ratio of the surfaces of the two targets, representing a theoretical estimation for the detection time ratio of the Cauchy walk.  Similarly, Fig.~\ref{fig:3detection_time_ratio_surface} presents the ratio of detection times for a ball and a line with the same surface area $\Delta$, shown on the $x$-axis. In this figure, the dashed line represents the ratio of the volumes of the two targets, serving as a theoretical estimation for the L\'evy walk as $\mu \to 1$.
Fig.~\ref{fig:3varying_delta_small_mu} presents detection times for L\'evy walks with $\mu \in \{1, 1.5, 2\}$, where the targets are rectangles with side lengths $\Delta^\delta$ and $\Delta^{1-\delta}$. Similarly, Fig. \ref{fig:3varying_delta} shows the result for the same set of experiments when $\mu \in \{2, 2.5, 3\}$. Takeaway: the governing geometric control shifts from volume to surface, and then to surface and elongation
as $\mu$ increases.}

\end{figure}

\section*{Discussion}
L\'evy walks are among the most extensively studied models of animal movement, yet debate persists over the conditions under which they yield optimal search performance \cite{viswanathan1999optimizing,benichou2011intermittent,james2011assessing,levernier2020inverse}.
Here, we advance the mathematical theory of intermittent L\'evy search from two to three dimensions, a biologically relevant setting that reflects the environments of diverse search processes, from
marine predators \cite{humphries2010environmental,sims2008scaling} to immune cells \cite{harris2012generalized}.

Our results establish the Cauchy strategy ($\mu = 2$) as a uniquely robust intermittent strategy for navigating unpredictable 3D environments. We provide a mathematical proof that $\mu = 2$ achieves near-optimal, scale-invariant detection across a diverse spectrum of target scales and convex morphologies. This performance is characterized by a remarkable stability: while specialized L\'evy exponents may offer marginal gains for specific target geometries, the Cauchy strategy remains near-optimal without parameter tuning, matching the theoretical bound even as target dimensions vary by multiple orders of magnitude. By consistently mirroring the peak performance of specialized strategies across all scenarios, the Cauchy strategy provides a universal search solution that effectively resolves the fundamental trade-off between localized exploitation and global exploration.

Our findings reveal that beyond size, target geometry exerts a profound influence on the detection efficiency of a given L\'evy strategy, with these effects varying significantly across the L\'evy spectrum. For example, exploration-dominated searchers ($\mu < 2$) detect large spherical targets efficiently but are markedly slow at locating elongated targets. Conversely, exploitation-dominated searchers ($\mu > 2$) excel at identifying line targets but struggle to detect large spheres. Notably, both regimes ($\mu \neq 2$) exhibit diminished performance when searching for large discs.

We further demonstrate a transition in geometric sensitivity as the L\'evy exponent varies. 
To illustrate the sensitivity of these search regimes, consider a target of fixed volume $V$ seeking to maximize its surface area while remaining convex. One could stretch this volume into a line of length $L \approx V$, or alternatively, into a thin disc of radius $R \approx \sqrt{V}$. Aiming to evade a L\'evy searcher in the regime $\mu>2$, the choice of geometry is critical: the disc-like configuration is much harder to detect than the line, offering a significant survival advantage. However, this ``geometric sanctuary'' disappears under the Cauchy strategy ($\mu = 2$), which demonstrates comparable effectiveness in detecting both targets, as they share a similar surface area.

Moreover, this reveals a dimensionality-induced reversal in search vulnerability. In 2D, large  targets, including elongated ones, enjoy a form of ``protection" against exploitation-dominated strategies ($\mu > 2$), which detect them significantly more slowly than the Cauchy walk \cite{GuinardKorman2021}. In 3D, however, this protection collapses specifically for elongated shapes, as they become highly vulnerable across the entire spectrum of $\mu \geq 2$. This suggests that ecological niches in 3D dominated by diffusive searchers impose more aggressive selective pressures on prey morphology, favoring spherical shapes over elongated ones.

Our theoretical bounds further sharpen this picture by showing that for approximately convex targets, the optimal detection time is tightly governed by the projected surface area. While non-convex regimes, such as the highly folded structures of DNA, present additional complexities, our results suggest that the projected surface area remains an informative descriptor of detectability. 

While our mathematical framework provides a robust theoretical baseline for 3D intermittent search, biological search is inherently complex. In particular, our model abstracts away continuous sensory tracking, prey mobility, and environmental clutter. Nevertheless, for species governed by intermittent mechanics, our geometric perspective yields highly testable evolutionary predictions. We predict that ``generalist'' 3D intermittent foragers will exhibit a strong evolutionary convergence toward Cauchy-like ($\mu \approx 2$) flight patterns to remain robust against highly variable prey morphologies. Conversely, species specializing in specific target geometries should predictably deviate from this universal optimum—for instance, evolving $\mu > 2$ strategies when hunting predominantly elongated prey, or $\mu < 2$ regimes for dense, voluminous swarms.

Together, our results provide a unified framework linking movement statistics, geometry, and efficiency. By establishing the singular scale and shape invariance optimality of the Cauchy walk in 3D, we provide a rigorous theoretical foundation for the L\'evy flight foraging hypothesis \cite{viswanathan2008levy}. Ultimately, our work suggests that the coevolution of search strategy and target morphology is a fundamental principle of spatial ecology, with wide-ranging applications from marine biology to the design of autonomous robotic swarms.

\section*{Methods}\label{sec:methods}
\paragraph{Model}
The torus $\torus$ is identified with the cubic domain
$[-\sqrt[3]{n}/2,\, \sqrt[3]{n}/2]^3$ of volume $n$, embedded in the Euclidean space
$\mathbb{R}^3$, with periodic boundary conditions, that is, opposite faces of the cube are identified.
Let $\mu \in (1,3]$ and $\ell_{\max} > 1$ denote the maximal admissible step length (possibly $\ell_{\max} = \infty$). A \emph{L\'evy walk} $Z^{\mu}$ on $\mathbb{R}^3$ (or $X^{\mu}$ on $\torus$) with
maximal step length $\ell_{\max}$ is defined as a random walk process whose step lengths are distributed according to:
\begin{equation}\label{eq: pdf-step-length}
p(\ell) =
\begin{cases}
a, & \text{if } \ell \le 1,\\[4pt]
a\, \ell^{-\mu}, & \text{if } 1 < \ell < \ell_{\max},\\[4pt]
0, & \text{if } \ell \ge \ell_{\max},
\end{cases}
\end{equation}
where the normalization constant
\[
a = \left( 1 + \int_{1}^{\ell_{\max}} \ell^{-\mu} \, d\ell \right)^{-1}
\]
ensures that $\int_{0}^{\infty} p(\ell)\, d\ell = 1$. When the process is confined to the torus $\torus$, we set $\ell_{\max} = \sqrt[3]{n}/2$.
In the special case where $\mu = 2$, the corresponding process is referred to as a \emph{Cauchy walk}, and its realization on the torus is denoted by
$X^{\mathrm{cauchy}}$.

The search process starts at a random location of the torus, chosen uniformly at random.
In each step $m$ of the L\'evy walk, the process chooses a direction uniformly at random, and a step length $\ell$ according to the distribution $p(\ell)$. The result is a vector $V(m)$, called {\em relocation} vector. The process then moves over the torus from its current position $X(m-1)$ to $X(m)=X(m-1)+V(m)$, modulo the torus, i.e., respecting its boundary conditions.

The {\em detection radius} of the searcher is $d\geq 1$.

We consider a subset of points $S'$ in the torus, called the {\em raw target}, which may be connected or consist of several disjoint components. For the purpose of our analysis, we define the detectable region~$S \subset \mathbb{T}_n$ as the set of all points within the detection radius~$d$ of some point in $S'$. We hereafter use the term target to refer to this detectable region $S$, rather than the raw target $S'$.
Throughout, we assume that the target's volume is well defined, and that its surface is piecewise smooth so that its surface area, as well as the surface area of its projections onto planes, are well defined. We further note that all our results concerning different parameters of the target, e.g., its surface area, correspond to $S$ and not to the raw target $S'$.

We focus on
\emph{intermittent search}, where it is assumed that the target $S$ cannot be detected while the agent is in motion, that is, during a transition from
$X(m)$ to $X(m+1)$. In this setting, the target is detected at step $m$ if the
agent's position at the end of the step satisfies $X(m) \in S$.


\paragraph{Geometric objects}
We considered four types of raw target geometries: disks, balls, lines, and rectangles. Recall that because of the detection radius $d \geq 1$, the searcher effectively spots the target by interacting with its \emph{detectable region}. As a consequence, a disk target is thickened into a region of total thickness $2d$, while a {\em line} target of length $L$ is expanded into a cylinder-like region of length $L+2d$ and diameter (or width)~$2d$. Similarly, rectangles are 2D rectangular objects extruded with thickness equal to the detection radius $d$ in the third dimension.

A shape is called {\em approximately convex} if $\Delta_P \geq \frac{1}{36} \Delta_B$. As mentioned, the choice of this constant enables this family to contain all convex shapes, see proof in SM, Section \ref{sec:approximately_convex}. In fact, for convex targets we have $\Delta\leq 6 \Delta_P$ and $\Delta \geq \Delta_B/6$, where $\Delta$ is the surface area of the target (SM, Theorem~\ref{thm:convex_apx_convex}).

\paragraph{Experiments}
In our experiments, we implemented the simulations in the C programming language. The agent was initialized at a random position within $\torus$ and tasked with searching for a target region $S \subset \torus$, whose center of mass was fixed at the origin.

Throughout all experiments, we set $d = 1$ and $\lmax = \sqrt[3]{n}/2$. Unlike \cite{GuinardKorman2021}, we did not approximate the step-length distribution using its discrete counterpart. Instead, we employed the Inverse Transform Sampling method to generate floating-point values directly from the continuous probability density function defined in Eq.~\eqref{eq: pdf-step-length}.

Each experiment was repeated at least 200 times. In cases with higher noise, as in Figs.~\ref{fig:3detection_time_ratio_surface} and \ref{fig:3detection_time_ratio_volume}, the experiments were repeated over 1000 times.

\paragraph{Discrete setting}
In the discrete approximation of the L\'evy walk we consider a 3D torus graph with $n$ nodes, denoted $\torus$, with the searcher moving over it according to a truncated L\'evy walk $X$ of parameter $\mu$ following the general construction of \cite{clementi2021search}. Specifically, in the beginning, the searcher's initial location is chosen uniformly at random among all nodes in the torus. If the walker is at node $u$ at time $t$, its next location is determined as follows: first, a length $1 \le \ell \le \ell_{\max}$ is sampled from the power-law distribution $p(\ell) \propto \ell^{-\mu}$. Next, a node $v$ is chosen uniformly at random from the set of nodes at Manhattan distance $\ell$ from $u$. 
When the transition happens, the actual path followed by the agent to reach the designated node is any shortest path between the source and the destination\footnote{In \cite{clementi2021search}, the authors considered a continuous detection setting and therefore defined a probability distribution over the set of shortest paths. However, since we work in the intermittent detection setting, introducing such a distribution is unnecessary.}. Hence, the time spent in this step is $\ell$. If the agent stays still (because of laziness), the time spent on this step is $0$. 

Generally, as in the continuous (non-discrete) setting, the reader should keep in mind the distinction between the number of steps complexity (which includes steps where the agent stays still) and the time complexity, which takes into account the duration of each step. In particular, for a given target $S$, we denote by the {\em hitting step} $\mhit^X(S)$,  the number of steps until the searcher detects $S$, and by $\mmix^X(\varepsilon)$ the {\em mixing step} as the minimum number of steps needed for the total variation distance to the stationary distribution to drop below $\varepsilon$. We use this notion instead of the usual mixing and hitting time to emphasize that it counts steps rather than time. In particular, we write $\mmix^X := \mmix^X(1/4)$ and note that $\mhit^X$ plays the analogous role to $m_\mathrm{detect}^X$ in the continuous space.

\paragraph{Asymptotic notation}
We employ standard Bachmann--Landau notation to capture asymptotic behavior as the torus size $n \to \infty$. For non-negative functions $f,g: \mathbb{N} \to [0,\infty)$, we write $f(n) = O(g(n))$ if there exist $c > 0$ and $n_0 \in \mathbb{N}$ such that $f(n) \leq c g(n)$ for all $n \geq n_0$. For example, $n^2 = \bigO{n^3}$ since $n^2 \leq n^3$ for $n \geq 1$. Likewise, $f(n) = \Omega(g(n))$ if $c g(n) \leq f(n)$ for all $n \geq n_0$, for some $c > 0$, $n_0 \in \mathbb{N}$; e.g., $n^3 = \Omega(n^2)$. The notation $f(n) = \Theta(g(n))$ indicates both bounds hold simultaneously, as in $3n^2 + 2n = \Theta(n^2)$.



\section*{Data availability}
All data used in this study are synthetic and were generated using the simulation and analysis code, which is available at \url{https://github.com/pragmaticscientist/3D-levy-walks}. Readers can reproduce the datasets using this code.

\section*{Code availability}
Simulation and analysis code is available at \url{https://github.com/pragmaticscientist/3D-levy-walks}.

\section*{Acknowledgments}
The authors thank Dr. Yuwen Wang for helpful discussions. 
Portions of the manuscript text were edited with the assistance of AI language models (ChatGPT and Gemini) to improve clarity and grammar. Generative AI tools (Perplexity) were also used to assist with code writing, refactoring, and debugging for analysis pipelines. The authors verified all AI-assisted outputs and remain fully responsible for the integrity and accuracy of the work. This work was supported by the Israel Science Foundation (https://www.isf.org.il) Grant 1574/24.

\section*{Author contributions}
Conceptualisation: MS, EN, AK; Methodology: AK; Formal analysis: MS, EN, AK; Simulations: MS; Visualisation: MS; Writing—main document: AK, Writing—SM: MS.

\section*{Competing interests}
The authors declare no competing interests.

\clearpage

\section*{\centering \Large Supplementary Material}

\section{Model}\label{sec:model}
\subsection{L\'evy walk process on the Torus}

The proof of the following fact is given in the Supplementary Material of~\cite{GuinardKorman2021} (see Claim~2 therein).
\begin{fact}\label{fact:exp+var}
Let $\mu\in (1,3]$ and consider the L\'evy walk $Z^{\mu}$ (or $X^{\mu}$) with maximal step length $\ell_{\max}$.
The expected step length (and consequently, the expected duration of a single step) $\tau=E(\norm{V})$ scales as
\begin{align}\label{obs:tau}
\tau =
\begin{cases}
\Theta(\ell_{\max}^{\,2-\mu}), & \text{if } \mu \in (1,2),\\[4pt]
\Theta(\log \ell_{\max}), & \text{if } \mu = 2,\\[4pt]
\Theta(1), & \text{if } \mu \in (2,3],
\end{cases}
\end{align}
while the variance $\sigma^2$ and the second moment $M$ of the step length satisfy
\begin{align}\label{obs:var}
\sigma^2 = \Theta(M) =
\begin{cases}
\Theta(\ell_{\max}^{\,3-\mu}), & \text{if } \mu \in (1,3),\\[4pt]
\Theta(\log \ell_{\max}), & \text{if } \mu = 3.
\end{cases}
\end{align}
\end{fact}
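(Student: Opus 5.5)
The plan is a direct computation with the explicit density $p$ from Eq.~\eqref{eq: pdf-step-length}. First I pin down the normalization constant. Since $a^{-1} = 1 + \int_1^{\ell_{\max}} \ell^{-\mu}\,d\ell$ and $\mu>1$, we have
\[
\int_1^{\ell_{\max}} \ell^{-\mu}\,d\ell = \frac{1-\ell_{\max}^{1-\mu}}{\mu-1} \le \frac{1}{\mu-1}.
\]
Hence $a\in[\tfrac{\mu-1}{\mu},1]$, so $a=\Theta(1)$ uniformly in $\ell_{\max}$ for fixed $\mu\in(1,3]$. This lets me drop $a$ from all asymptotics.

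Next I write the first two moments as a bounded contribution from $[0,1]$ plus a tail integral:
\[
\tau = a\int_0^1 \ell\,d\ell + a\int_1^{\ell_{\max}}\ell^{1-\mu}\,d\ell, \qquad M = a\int_0^1 \ell^2\,d\ell + a\int_1^{\ell_{\max}}\ell^{2-\mu}\,d\ell.
\]
The first pieces are $a/2$ and $a/3$, i.e.\ $\Theta(1)$. The tail integrals are elementary, $\int_1^{L}\ell^{-\beta}d\ell$, and there are three cases:
\begin{itemize}
\item For $\beta<1$ the integral is $\Theta(L^{1-\beta})$.
\item For $\beta=1$ it equals $\log L$.
\item For $\beta>1$ it is $\Theta(1)$.
\end{itemize}
For $\tau$ we have $\beta=\mu-1$. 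This gives $\Theta(\ell_{\max}^{2-\mu})$ for $\mu<2$, $\Theta(\log\ell_{\max})$ for $\mu=2$, and $\Theta(1)$ for $\mu>2$. The additive $\Theta(1)$ term is absorbed because $\ell_{\max}>1$; to be careful, I would assume $\ell_{\max}\ge 2$, or say "for $\ell_{\max}$ large", so that $\log\ell_{\max}$ and $\ell_{\max}^{2-\mu}$ are bounded below. For $M$ we have $\beta=\mu-2$. This gives $\Theta(\ell_{\max}^{3-\mu})$ for $\mu<3$ and $\Theta(\log\ell_{\max})$ for $\mu=3$. The constants in $\Theta$ depend on $\mu$ (they blow up as $\mu\to 2$ or $\mu\to 3$), which is fine because $\mu$ is fixed.

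Finally, for the variance I use $\sigma^2 = M-\tau^2 \le M$, which gives the upper bound. The lower bound is the only step needing any thought, since $\tau^2$ could in principle cancel $M$. For $\mu\in(1,2)$, $\tau^2=\Theta(\ell_{\max}^{4-2\mu})$ has the same order as $M=\Theta(\ell_{\max}^{3-\mu})$ exactly when $\mu=1$, so a comparison of exponents is insufficient near $\mu=1$. To get a robust lower bound I use instead
\[
\sigma^2 \ge \Pr(\ell\ge \ell_{\max}/2)\,(\ell_{\max}/2-\tau)^2
\]
whenever $\tau\le\ell_{\max}/4$. Here $\Pr(\ell\ge\ell_{\max}/2)=\Theta(\ell_{\max}^{1-\mu})$, which gives $\Theta(\ell_{\max}^{3-\mu})$. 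The condition $\tau\le \ell_{\max}/4$ holds for large $\ell_{\max}$, since $\tau=O(\ell_{\max}^{2-\mu})=o(\ell_{\max})$. For $\mu\ge2$, where $\tau=O(\log\ell_{\max})$, the simpler observation $\tau^2=o(M)$ already suffices: $\log^2\ell_{\max}=o(\ell_{\max}^{3-\mu})$ for $\mu<3$, and $\tau=\Theta(1)$ against $\log\ell_{\max}$ at $\mu=3$.

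The expected duration of a step equals $\tau$ because motion is at unit speed. The statement for $X^\mu$ on $\torus$ follows verbatim, since step lengths there are drawn from the same law with $\ell_{\max}=\sqrt[3]{n}/2$.
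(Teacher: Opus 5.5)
Your computation is correct and is essentially the intended argument. The paper does not prove this fact itself; it defers to Claim~2 in the supplementary material of \cite{GuinardKorman2021}, which is this same elementary integration of the truncated power-law density (the step-length law does not depend on the dimension). One small remark: for fixed $\mu\in(1,2)$ the exponent comparison already suffices, since $4-2\mu<3-\mu$ is equivalent to $\mu>1$. So your tail bound $\Pr(\ell\ge\ell_{\max}/2)\,(\ell_{\max}/2-\tau)^2$ buys uniformity in $\mu$ near $1$ but is not strictly needed.
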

\noindent Recall that in the case of the 3D torus, the maximal step length satisfies $\ell_{\max} = \sqrt[3]{n}/2$.

\subsection{Temporal scaling and step-time normalization}

We assume that the speed of the process is 1, so
 that traveling a distance $\ell$ between two points takes time $\ell$ time.
We further assume that the duration of the scanning phase between successive ballistic displacements takes constant time. In fact, without loss of generality, we may assume that it takes zero time, i.e., $\tau_{\mathrm{scan}} = 0$. Indeed, by Fact~\ref{fact:exp+var},
the expected time required to traverse a single ballistic segment is $\tau = \Omega(1)$, whereas $\tau_{\mathrm{scan}} = \mathcal{O}(1)$.  Consequently, the expected time to complete a full step (ballistic displacement plus scanning) satisfies
\[
\tau + \tau_{\mathrm{scan}} = \Theta(\tau).
\]
 Let $T(m)$ denote the total time elapsed after $m$ steps of the walk:
\[
T(m) = \sum_{s=1}^{m} \|V(s)\|,
\]
where $V(s) = (V_1(s),V_2(s),V_3(s))$ is the relocation vector chosen at step $s$, and
\[
\|V(s)\| = \sqrt{V_1(s)^2 + V_2(s)^2 + V_3(s)^2}
\]
denotes its $\ell_2$-length. Let $m_{\mathrm{detect}}^X(S)$ be the (random) number of steps taken by $X$ until the target $S$ is first detected (equivalently, the first time $m$ such that $X(m)\in S$). By definition, the expected detection time is
\[
t_{\mathrm{detect}}^X(S)
= \mathbb{E}\!\left[T\!\left(m_{\mathrm{detect}}^X(S)\right)\right].
\]

\section{Preliminary results}
In this section, we state several other results that were established in~\cite{GuinardKorman2021} for the two-dimensional scenario, and can be directly extended to the three-dimensional case with minor adaptations to the proofs.

\subsection{Time versus number of steps}

In the two-dimensional scenario, it was established in~\cite{GuinardKorman2021} that this quantity equals to the expected number of steps until detection $m_{\mathrm{detect}}^X(S)$ multiplied by the mean step duration~$\tau$. The same proof also holds in the three-dimensional case, requiring only the substitution of two-dimensional relocation vectors with three-dimensional ones. By the same argument, the result generalizes to arbitrary spatial dimensions. Therefore, the proof is omitted here for brevity.

Formally, the 3D analogue of Claim~3 in the Supplementary Materials in~\cite{GuinardKorman2021} is

\begin{claim}\label{claim:timeVSmoves}
For any intermittent random walk $X$ on $\torus$ and any subset $S \subseteq \torus$,
\begin{align*}
t_{\mathrm{detect}}^X(S) = \mathbb{E}\!\left[m_{\mathrm{detect}}^X(S)\right] \cdot \tau,
\end{align*}
where $\tau = \mathbb{E}[\|V(1)\|]$ denotes the expected step length.
\end{claim}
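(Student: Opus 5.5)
This is a Wald's identity argument. The detection step count $M := m_{\mathrm{detect}}^X(S)$ is a stopping time with respect to the natural filtration $(\mathcal{F}_m)_{m\ge 0}$, where $\mathcal{F}_m$ is generated by the uniform initial position $X(0)$ and the relocation vectors $V(1),\dots,V(m)$. Indeed, the event $\{M \le m\}$ is determined by $X(0),\dots,X(m)$, and these are functions of $X(0)$ and $V(1),\dots,V(m)$. The step lengths $\|V(s)\|$ are i.i.d. with mean $\tau$, since each step draws an independent uniform direction and an independent length from $p$. Moreover, $V(s)$ is independent of $\mathcal{F}_{s-1}$. The torus wrapping does not matter here, because the time spent on step $s$ is $\|V(s)\|$ regardless of the modular reduction of the position.

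\textbf{Step 1: rewrite the elapsed time as a sum over all steps.} We have
\[
T(M)=\sum_{s\ge 1}\|V(s)\|\,\mathbf{1}\{M\ge s\}.
\]
The event $\{M\ge s\}=\{M\le s-1\}^c$ lies in $\mathcal{F}_{s-1}$, so it is independent of $\|V(s)\|$. All terms are nonnegative, so by Tonelli/monotone convergence we may exchange expectation and sum:
\[
\mathbb{E}[T(M)]=\sum_{s\ge1}\mathbb{E}\|V(s)\|\,\Prob(M\ge s)=\tau\sum_{s\ge1}\Prob(M\ge s)=\tau\,\mathbb{E}[M].
\]
This identity holds in $[0,\infty]$, so it covers the case $\mathbb{E}[M]=\infty$ without any integrability assumption. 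Since $\tau<\infty$ by Fact~\ref{fact:exp+var} (step lengths are bounded by $\ell_{\max}$), both sides are finite or infinite together.

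\textbf{Step 2: a subtle point.} The claim involves an arbitrary intermittent walk $X$, so the step-length law is general. The only properties needed are that steps are i.i.d. and independent of the starting point, and that the step taken at time $s$ is independent of the past. For a L\'evy walk these hold by construction. For a general $X$, I would state these as the standing assumptions defining an ``intermittent random walk.'' The argument also yields the sharper, per-step version: $\tau$ could be replaced by the conditional mean of $\|V(s)\|$ given the past, if that is constant.

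\textbf{Step 3: scan time.} As justified in the subsection on temporal scaling, the constant scan duration can be absorbed into $\tau$ up to a constant factor. Alternatively, it adds exactly $\tau_{\mathrm{scan}}\,\mathbb{E}[M]$ by the same computation.

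\textbf{Main obstacle.} The main delicacy is Step 1: establishing the independence of $\{M\ge s\}$ from $V(s)$. This requires that detection at step $s$ be checked only after $V(s)$ is applied, which is exactly the intermittent convention $X(m)\in S$. One also has to handle the possibly infinite expectation, which the nonnegative-terms exchange does cleanly.
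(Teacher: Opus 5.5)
Your proof is correct and follows essentially the paper's argument. The paper treats this claim as a variant of Wald's identity and omits the proof, deferring to Claim~3 of \cite{GuinardKorman2021}; your decomposition $T(M)=\sum_{s\ge 1}\|V(s)\|\,\mathbf{1}\{M\ge s\}$, with $\{M\ge s\}\in\mathcal{F}_{s-1}$ guaranteed by the intermittent detection convention and the Tonelli exchange valid in $[0,\infty]$, is exactly that identity.
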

\subsection{Monotonicity}
A function $f$ on $\mathbb{R}^3$ is said to be \emph{radial} if there exists a function
$\tilde{f} : \mathbb{R}^{+} \to \mathbb{R}$ such that $f(x) = \tilde{f}(\|x\|)$ for every $x \in \mathbb{R}^3$. In this case, $f$ is called \emph{non-increasing} if $\tilde{f}$ is non-increasing. The proof of the following claim follows exactly the same steps as Claim~4 in the Supplementary Materials of~\cite{GuinardKorman2021} and is therefore omitted here.

\begin{claim}\label{cor:monotonicity-variables}
Let $X$ and $Y$ be independent random variables taking values in $\mathbb{R}^3$,
with probability density functions $f$ and $g$, respectively.
Let $h$ denote the probability density function of $X + Y$.
If both $f$ and $g$ are radial and non-increasing,
then $h$ is also radial and non-increasing.
\end{claim}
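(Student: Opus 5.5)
The plan is to prove the claim through a layer-cake decomposition, which reduces everything to convolutions of uniform distributions on balls. A radial non-increasing density $f$ on $\mathbb{R}^3$ can be written as a mixture of uniform densities on centered balls. Specifically,
\[
f(x)=\int_0^\infty \mathbf{1}\{\|x\|<r\}\,d\nu_f(r), \qquad \text{where } \tilde f(s)=\nu_f((s,\infty)),
\]
and $\nu_f$ is the (nonnegative) Lebesgue--Stieltjes measure associated with $-\tilde f$. Equivalently, $f=\int \beta_r\,d\rho_f(r)$, where $\beta_r$ denotes the uniform density on the ball $B(0,r)$ and $\rho_f$ is a probability measure on $(0,\infty)$. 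To make this rigorous, I will first take $\tilde f$ right-continuous. Changing $f$ on the measure-zero set of spheres where $\tilde f$ jumps does not affect the law of $X$, so this costs nothing. The same decomposition is applied to $g$.

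By independence, $h=f*g=\iint \beta_r*\beta_s\,d\rho_f(r)\,d\rho_g(s)$, where Fubini applies because all integrands are nonnegative. Radiality and monotonicity are both preserved under mixtures with nonnegative weights. It therefore suffices to show that $\beta_r*\beta_s$ is radial and non-increasing for every pair $r,s>0$.

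The key computation is that, up to normalization, $(\beta_r*\beta_s)(x)$ is proportional to $\mathrm{vol}\bigl(B(0,r)\cap B(x,s)\bigr)$. This quantity depends only on $\|x\|$, because rotations preserve both balls and volume, so it is radial. For monotonicity, I need to show that $d\mapsto \mathrm{vol}\bigl(B(0,r)\cap B(d e_1,s)\bigr)$ is non-increasing in $d\ge0$. I intend to verify this through the fibers perpendicular to $e_1$. On the plane $\{x_1=t\}$, the section is the intersection of two concentric 2D disks with radii $\sqrt{r^2-t^2}$ and $\sqrt{s^2-(t-d)^2}$. Its area is therefore
\[
\pi\min\bigl(r^2-t^2,\; s^2-(t-d)^2\bigr)_+.
\]
Integrating this over $t$ gives the volume. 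Substituting $u=t-d$, the integrand becomes $\pi\min(r^2-(u+d)^2,\, s^2-u^2)_+$.

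This is the one step that needs care. The integrand is not pointwise monotone in $d$, so the argument goes through Anderson's inequality in one dimension instead. The functions $a(t)=(r^2-t^2)_+$ and $b(t)=(s^2-t^2)_+$ are symmetric and unimodal. For symmetric unimodal $a,b$, the map $d\mapsto\int\min(a(t),b(t-d))\,dt$ is non-increasing in $|d|$. I will prove this with the same layer-cake trick:
\[
\min(a,b)=\int_0^\infty \mathbf{1}\{a>\lambda\}\,\mathbf{1}\{b>\lambda\}\,d\lambda .
\]
Each superlevel set is a centered interval, and the overlap length of two centered intervals when one is shifted by $d$ is clearly non-increasing in $|d|$.

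Alternatively, and more simply, I can apply the same layer-cake reduction directly in 3D: the overlap volume of two balls with centers at distance $d$ is an explicit decreasing function (the standard lens volume formula), and its derivative in $d$ is $-\pi(\text{radius of intersection circle})^2\le 0$. Either route closes the argument, and I would present the lens-volume version for brevity.
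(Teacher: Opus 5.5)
Your proof is correct. The paper does not actually prove this claim: it says the argument follows the same steps as Claim~4 of \cite{GuinardKorman2021} and omits it. So your argument supplies the missing proof rather than departing from a written one. That argument writes $f,g$ as mixtures of uniform densities on centered balls, applies Tonelli, and then shows $d\mapsto\mathrm{vol}\bigl(B(0,r)\cap B(de_1,s)\bigr)$ is non-increasing, via either the fiber/Anderson step or the lens derivative. It also works verbatim in any dimension.

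For comparison, a shorter route skips the decomposition. Take $\|z'\|\le\|z\|$, let $\sigma$ be the reflection across the perpendicular bisector plane of $[z',z]$, and let $H$ be the open half-space containing $z'$, so that $0\in\overline H$. Fold $\sigma(H)$ onto $H$, and use $\sigma(z)=z'$ together with the radiality of $g$. This gives
\[
h(z')=\int_H \bigl(f(y)g(z'-y)+f(\sigma y)g(z-y)\bigr)\,dy,\qquad h(z)=\int_H \bigl(f(y)g(z-y)+f(\sigma y)g(z'-y)\bigr)\,dy .
\]
For $y\in H$ one has $\|y\|\le\|\sigma y\|$ and $\|z'-y\|\le\|z-y\|$, hence $f(y)\ge f(\sigma y)$ and $g(z'-y)\ge g(z-y)$. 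The pointwise inequality $ab+cd\ge ad+cb$ for $a\ge c$, $b\ge d$ then yields $h(z')\ge h(z)$ in $[0,\infty]$. The case $\|z'\|=\|z\|$ gives radiality at the same time.

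What each route buys:
\begin{itemize}
\item The reflection argument is a few lines. It needs no right-continuous regularization of $\tilde f$, no Stieltjes measure, and no volume computation.
\item Your route gives an explicit representation of $h$ as a mixture of normalized ball-intersection volumes. That is more informative if you later want quantitative control of $h$.
\end{itemize}

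If you keep your version, state three details explicitly:
\begin{itemize}
\item Integrability of $f$ forces $\tilde f(s)<\infty$ for $s>0$ and $\tilde f(s)\to 0$ as $s\to\infty$. This is what makes $\nu_f((s,\infty))=\tilde f(s)$ valid.
\item Replacing $f$ by its right-continuous version is harmless because $h(z)=\int f(y)g(z-y)\,dy$ is unchanged for every $z$. The claim concerns this pointwise function, not merely the law of $X+Y$.
\item The formula $V'(d)=-\pi a^2$ holds only for $|r-s|<d<r+s$. Outside that range $V$ is constant for $d\le|r-s|$ and zero for $d\ge r+s$. Continuity of $V$ then gives global monotonicity.
\end{itemize}
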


\begin{corollary}[Monotonicity]\label{cor:monotonicity}
Let $Z$ be a random walk process on $\mathbb{R}^3$, starting at $Z(0)=0$, with a step-length distribution $p$. If $p$ is non-increasing, then for any $m\geq 1$ the distribution $p^{Z(m)}$ of $Z(m)$ is radial and non-increasing. In particular, for any $x,x'$ points in $\mathbb{R}^3$ with $\norm{x'}\leq \norm{x}$, we have $p^{Z(m)}(x)\leq p^{Z(m)}(x')$.  Furthermore, for any $x\in \mathbb{R}^3$ and any $m\geq 1$, $p^{Z(m)}(x) \leq \frac{3}{4\pi\norm{ x}^3}.$
\end{corollary}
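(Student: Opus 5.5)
The plan is to derive both parts of Corollary~\ref{cor:monotonicity} from Claim~\ref{cor:monotonicity-variables} by induction on $m$, and then obtain the pointwise bound from a simple volume argument.

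\textbf{Radial symmetry and monotonicity.} Since $Z(0)=0$, we have $Z(m)=\sum_{s=1}^m V(s)$, where the relocation vectors $V(s)$ are i.i.d. Each $V(s)$ has a uniform direction and length distributed according to $p$, so its density in $\mathbb{R}^3$ is $f(x)=p(\norm{x})/(4\pi\norm{x}^2)$. This density is radial. It is also non-increasing in $\norm{x}$, because $p$ is non-increasing and $1/\norm{x}^2$ is decreasing. For $m=1$ the claim is therefore immediate. For the inductive step, $Z(m)=Z(m-1)+V(m)$ with the two summands independent, both radial and non-increasing, so Claim~\ref{cor:monotonicity-variables} yields that $p^{Z(m)}$ is radial and non-increasing. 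The comparison $p^{Z(m)}(x)\le p^{Z(m)}(x')$ for $\norm{x'}\le\norm{x}$ is just a restatement of this property.

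\textbf{Pointwise bound.} Fix $x$ and set $r=\norm{x}$. By the monotonicity just established, every point $y$ in the closed ball $B(0,r)$ satisfies $p^{Z(m)}(y)\ge p^{Z(m)}(x)$. Integrating over this ball gives
\begin{equation*}
1\ \ge\ \int_{B(0,r)} p^{Z(m)}(y)\,dy\ \ge\ \tfrac{4}{3}\pi r^3\, p^{Z(m)}(x),
\end{equation*}
which rearranges to $p^{Z(m)}(x)\le 3/(4\pi\norm{x}^3)$.

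\textbf{Main obstacle.} The only delicate point is measure-theoretic: whether $f$ is a genuine density and whether the inequalities hold pointwise or merely almost everywhere. The density $f$ has an integrable singularity at the origin, since $p(\ell)=a$ near $0$ and hence $f\sim a/(4\pi\norm{x}^2)$ there, which is integrable in $\mathbb{R}^3$. For $m\ge 2$ the convolution is continuous away from the origin, so selecting the radial non-increasing version from Claim~\ref{cor:monotonicity-variables} makes the pointwise statements valid. The truncation at $\ell_{\max}$ only introduces a downward jump in $p$, which preserves monotonicity.
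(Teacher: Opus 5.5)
Your proposal is correct and follows essentially the same route as the paper: induction on $m$ via Claim~\ref{cor:monotonicity-variables}, then integrating $p^{Z(m)}$ over the ball of radius $\norm{x}$ to obtain the $3/(4\pi\norm{x}^3)$ bound. You also fill in details the paper leaves implicit. These are the check that the single-step density $p(\norm{x})/(4\pi\norm{x}^2)$ is itself radial and non-increasing, and the choice of a version of the density for which the inequalities hold pointwise.
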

\begin{proof}
The fact that $p^{Z(m)}$ is radial and non-increasing follows from Claim \ref{cor:monotonicity-variables} by induction on $m$. Indeed, the step-length vectors $V(1), V(2),\dots$ are independent and, by hypothesis, admit a radial, non-increasing p.d.f. Hence so does $Z(m)=V(1)+V(2)+\dots+V(m)$.
Next, consider $x\in \mathbb{R}^3$ and the ball $B$ of radius $\norm{x}$ centered at $0$. Note that for every point $y\in B$, we have  $\norm{y}\leq \norm{x}$, and hence, by the fact that $Z(m)$ is radial and non-increasing, we have $p^{Z(m)}(x)\leq p^{Z(m)}(y)$.
The upper bound on $p^{Z(m)}(x)$ follows from:  \[1\geq\int_{B}p^Z_m(y)dy\geq p^Z_m(x) \lvert B\rvert = p^Z_m(x) \cdot \frac{4\pi}{3}\norm{ x}^3,\]
completing the proof.
\end{proof}

\subsection{Coordinate projection preserves the L\'evy exponent}
In this section, we show that the projection of a L\'evy walk onto any single axis is itself a L\'evy walk, with the same exponent. While a related proof was presented in~\cite{GuinardKorman2021} for the two-dimensional case, our approach generalizes the result to three dimensions and is more naturally extendable to arbitrary dimensions with only minor modifications.
In what follows, we denote by $Z_1^\mu$ the projection of the L\'evy walk $Z^\mu$ onto an arbitrary coordinate axis.

\begin{theorem}\label{thm:projection}
The projection
    $Z_1^{\mu}$ of $Z^{\mu}$ is a L\'evy walk on $\mathbb{R}$ with parameter $\mu$, in the sense that the p.d.f.~of the step-lengths of $X_1^{\mu}$ is $p(\ell)\sim 1/\ell^\mu$, for $\ell\in [1,\frac{\ell_{max}}{2}]$. Furthermore, the variance of $X_1^{\mu}$ is
\begin{align*}
\sigma'^2= \begin{cases}
\Theta({\ell^{3-\mu}_{max}}) \text{ if } \mu\in(1,3) \\
  \Theta(\log \ell_{max}) \text{ if } \mu=3\end{cases}.
\end{align*}
\end{theorem}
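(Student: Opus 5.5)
We compute the law of the projected step directly. A step of $Z^\mu$ is $V=\ell U$, where $\ell\sim p$ is independent of $U$, the uniform unit vector on the sphere $S^2$. We take the first coordinate of $V$, namely $V_1=\ell U_1$. By Archimedes' theorem, $U_1$ is uniform on $[-1,1]$. Hence $W:=|V_1|=\ell\,|U_1|$ is a product of independent variables, with $|U_1|$ uniform on $[0,1]$. Its density is
\[
q(w)=\int_{w}^{\ell_{\max}} \frac{p(\ell)}{\ell}\,d\ell ,\qquad 0\le w\le \ell_{\max}.
\]
This follows because, conditional on $\ell$, the variable $W$ is uniform on $[0,\ell]$ with density $1/\ell$.

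\textbf{Tail of $q$.} We plug in the form of $p$ from Eq.~\eqref{eq: pdf-step-length}. For $1\le w\le \ell_{\max}/2$ we get
\[
q(w)=a\int_w^{\ell_{\max}}\ell^{-\mu-1}\,d\ell=\frac{a}{\mu}\bigl(w^{-\mu}-\ell_{\max}^{-\mu}\bigr).
\]
Since $w\le \ell_{\max}/2$, we have $\ell_{\max}^{-\mu}\le 2^{-\mu}w^{-\mu}$. Because $\mu>1$, this gives $q(w)\in\bigl[\tfrac{a}{\mu}(1-2^{-\mu})w^{-\mu},\ \tfrac{a}{\mu}w^{-\mu}\bigr]$, so $q(w)=\Theta(w^{-\mu})$. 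The constant $a=\Theta(1)$ for $\mu>1$, uniformly in $\ell_{\max}$. This restriction is exactly why the statement only claims the power law on $[1,\ell_{\max}/2]$. For $w\le 1$, $q$ is bounded by $a\int_0^{\ell_{\max}}p(\ell)\ell^{-1}\,d\ell$ restricted suitably. Strictly, near $0$ the density picks up a $\log(1/w)$ term from the flat part $\ell\le1$, but this only affects short steps and plays no role in the claim.

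\textbf{Variance.} By symmetry of $U_1$, $\mathbb{E}[V_1]=0$. By independence,
\[
\sigma'^2=\mathbb{E}[V_1^2]=\mathbb{E}[\ell^2]\,\mathbb{E}[U_1^2]=\tfrac13\,M .
\]
Here $M$ is the second moment of the step length. Fact~\ref{fact:exp+var}, Eq.~\eqref{obs:var}, gives $M=\Theta(\ell_{\max}^{3-\mu})$ for $\mu\in(1,3)$ and $M=\Theta(\log\ell_{\max})$ for $\mu=3$, which yields the claimed formula. The variance of the projected walk after $m$ steps is $m\sigma'^2$ by independence of the steps, so the statement about "the variance of $X_1^\mu$" is read per step.

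\textbf{Main obstacle.} The computations are short, so the main obstacle is interpretive. "Step-length p.d.f.\ $\sim 1/\ell^\mu$" must be read as $\Theta$-equivalence on $[1,\ell_{\max}/2]$, not exact proportionality. The exact proportionality fails because of the correction term $\ell_{\max}^{-\mu}$ and the behavior below $1$. I would state explicitly that the constants are uniform in $\ell_{\max}$ (equivalently, in $n$).

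For the torus version $X_1^\mu$, I would note that projection commutes with reduction modulo $\sqrt[3]{n}$. This holds because the torus is a product of circles and $\ell_{\max}=\sqrt[3]{n}/2$. The same density therefore applies before wrapping.
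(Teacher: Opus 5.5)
Your proof is correct, and it takes a different and somewhat cleaner route than the paper.

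\begin{itemize}
\item \textbf{The paper's route.} It starts from the three-dimensional step density $p(\norm{v})/(4\pi\norm{v}^2)$ and integrates it over the planar slice $\{v_1=\text{const}\}\cap\{\norm{v}\le\ell_{\max}\}$. It then rescales $(v_2,v_3)=v_1(w_1,w_2)$ to extract the factor $v_1^{-\mu}$. The remaining integral $\int_D(1+\norm{w}^2)^{-(\mu+2)/2}\,dw$ is only bounded: above by its value $2\pi/\mu$ over $\mathbb{R}^2$, and below by its value over the unit disc, which lies inside $D$once $v_1\le\ell_{\max}/2$.
\item \textbf{Your route.} You use Archimedes' theorem to write $|V_1|=\ell\,|U_1|$ as a product of independent variables with $|U_1|$ uniform. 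This reduces everything to a one-dimensional integral and gives the exact density $\frac{a}{\mu}(w^{-\mu}-\ell_{\max}^{-\mu})$ on $[1,\ell_{\max}]$. That is exactly what the paper's slice integral evaluates to if computed in closed form. Your formula also makes it transparent why the $\Theta(w^{-\mu})$ lower bound must stay away from $\ell_{\max}$: the density vanishes there.
\item \textbf{Variance.} The paper defers this part to the two-dimensional reference. Your identity $\mathbb{E}[V_1^2]=\mathbb{E}[\ell^2]\,\mathbb{E}[U_1^2]=M/3$, combined with Fact~\ref{fact:exp+var}, makes it self-contained. Note that $\mathbb{E}[U_1^2]=1/3$ follows from symmetry alone, since $U_1^2+U_2^2+U_3^2=1$.
\end{itemize}

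What the paper's slicing buys is robustness to dimension. The uniformity of $U_1$ is special to $\mathbb{R}^3$; in $\mathbb{R}^d$ its density is proportional to $(1-u^2)^{(d-3)/2}$. Your product argument would therefore need a slightly messier kernel in other dimensions, though it still goes through.

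One cosmetic correction. Your bound on $q$ for $w\le 1$ by $a\int_0^{\ell_{\max}}p(\ell)\ell^{-1}\,d\ell$ is not meaningful, because that integral diverges. The exact value there is $a\log(1/w)+\frac{a}{\mu}(1-\ell_{\max}^{-\mu})$, which your next sentence essentially acknowledges. In any case this range lies outside the claim.
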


\begin{proof}
It is clear that $Z_1^\mu$ is also a random walk that moves incrementally, with the increments between $Z_1^\mu(m)$ and $Z_1^\mu(m+1)$ being the projection $Z_1(m+1)$ of the chosen 3-dimensional vector $V(m+1)=Z^\mu(m+1)-Z^\mu(m)$. These projections are i.i.d.~variables as the vectors $(V(m))_m$ are i.i.d.~variables, and their signs are $\pm$ with equal probability. Next, we verify that $l_1:=\lvert V_1(1) \rvert $ has a L\'evy distribution with parameter $\mu$.

Let $V$ be one step-length drawn according to a L\'evy distribution $p^{\mu}$ in the 3-dimensional torus. Recall that the step-length is distributed as:
\begin{align*} p^{\mu}(\ell)=\begin{cases} a_{\mu} \text{ if } \ell \leq 1 \\
a_{\mu}\ell^{-\mu} \text{ if } \ell\in [1,\lmax) \\
0 \text{ if } \ell \geq \lmax
\end{cases}, \end{align*}
where $a_{\mu}$ is the normalization factor, with $a_{\mu}=\frac{1}{1+\int_{1}^{\ell_{max}} \ell^{-\mu} d\ell }=\frac{1}{1+\frac{1-\ell_{max}^{1-\mu}}{\mu-1}}$.
Hence the distribution of $V=(V_1,V_2, V_3)\in \mathbb{R}^3$ is
\begin{align} p^V(v)= \frac{1}{4\pi}\frac{1}{\norm{ v}^2 } p^{\mu}(\norm{ v})=\begin{cases}\frac{a_{\mu}}{4\pi}\norm{ v}^{-2} \text{ if } \norm{ v}  \leq 1\\ \frac{a_{\mu}}{4\pi}\norm{ v} ^{-\mu-2}\text{ if } \norm{ v}\in [1,\ell_{max}) \\
0 \text{ if } \norm{v} \geq \lmax\end{cases}. \label{eq:law-step-R2}
\end{align}
Recall that $l_1:=\lvert V_1(1) \rvert $ is the distribution of the length of a projected step onto the first coordinate. Let $v=(v_1,v_2, v_3)$ denote a realization of the random vector $V$. For $v_1\in (0,\ell_{max})$, we have
\begin{align*}
    p^{l_1}(v_1) = 2\int_I p^V(v)\, dv = \frac{a_\mu}{2\pi} \int_I \left[ \mathbf{1}(\|v\| < 1) \frac{1}{\|v\|^2} + \mathbf{1}(\|v\| \geq 1) \frac{1}{\|v\|^{\mu + 2}} \right] dv,
\end{align*}
 where $I \subset \mathbb{R}^3$ comprises all vectors $v$ whose total length is constrained by a maximum step size $\ell_{max}$ and the first entry is equal to $v_1$ (i.e. $I$ is disc in the plane orthogonal to $v_1$, centered at $v_1$, and of radius $\sqrt{\ell_{max}^2 - v_1^2}$). Formally, we define this set as:$$I = \left\{ v \in \mathbb{R}^3 : P_x(v) = v_1 \mbox{~and~} \| P_{y,z}(v) \| \leq \sqrt{\ell_{max}^2 - v_1^2} \right \}.$$
Here, $P_x$ denotes the projection onto the $x$-axis, while $P_{y,z}$ represents the projection onto the $yz$-plane. The inequality governing the $yz$-component is derived directly from the Euclidean norm; because the total magnitude $\|v\|$ cannot exceed $\ell_{max}$, the magnitude of the lateral projection is geometrically restricted by the remaining length available after accounting for the longitudinal component $v_1$. Now assume $|v_1| \geq 1$ and observe that this implies $\norm{v} \geq 1$. Hence,
\begin{align}
     p^{l_1}(v_1) = \frac{a_\mu}{2\pi} \int_I \frac{1}{\|v\|^{\mu + 2}} dv = \frac{a_\mu}{2\pi} \int_I \frac{1}{{v_1^{\mu+2}\left[1+(\frac{v_2}{v_1})^2+(\frac{v_3}{v_1})^2\right]}^{\frac{\mu + 2}{2}}} dv . \label{eq:surface_integral}
\end{align}
Note that Equation \eqref{eq:surface_integral} is a surface integral. Thus, we can evaluate it by parameterizing the surface using the mapping $\mathbf{r}(w_1, w_2) = (v_1, v_1 w_1, v_1 w_2)$ and applying the standard formula for a surface integral over a parameterized region $D$:$$\int_I f(v) \, dA(x) = \int_D f(\mathbf{r}(w_1, w_2)) \left\| \frac{\partial \mathbf{r}}{\partial w_1} \times \frac{\partial \mathbf{r}}{\partial w_2} \right\| dw_1 \, dw_2.$$

where $dA(x)$ denotes an infinitesimal surface element at the point $x$ on the surface $I$ and $D = \mathbf{r}^{-1}(I)$ is the domain of the parameters $w_1$ and $w_2$.
By substituting our specific parameterization into this expression, we can transform the integral into the $w$-coordinate system to simplify the subsequent calculations. Thus, $\frac{\partial r}{\partial w_1} = v_1(0, 1, 0)^{\top}$, $\frac{\partial r}{\partial w_2} = v_1(0, 0, 1)^{\top}$ and $\norm{\frac{\partial r}{\partial w_1} \times \frac{\partial r}{\partial w_2}} = \norm{(v_1^2, 0, 0)^{\top}}=v_1^2$. To conclude this step, we observe that the change of variables $v_2 = v_1w_1$ and $v_3 = v_1w_2$ allows us to represent the set $I$ as the image of a simplified domain $D$ under the mapping $\mathbf{r}$. Specifically, $I = \mathbf{r}(D)$, where $D$ is defined as the disk:

$$D = \left\{ w \in \mathbb{R}^2 : \|w\| \leq \sqrt{\frac{\ell_{max}^2}{v_1^2} - 1} \right\}.$$

In fact, for any $w \in D$, the parameterization $\mathbf{r}(w) = (v_1, v_1w_1, v_1w_2)$ satisfies the required conditions for $D$. First, the longitudinal component is fixed at $v_1$. Second, the total magnitude satisfies the bound $\| \mathbf{r}(w) \|^2 = v_1^2(1 + w_1^2 + w_2^2) = v_1^2(1 + \|w\|^2) \leq \ell_{max}^2$, ensuring that the vector remains within the maximum step length. Therefore, we can rewrite the previous integral as follows:
\begin{align*}
    p^{l_1}(v_1)  = \frac{a_\mu}{2\pi} \int_{D} \frac{v_1^2}{{v_1^{\mu+2}(1 + \norm{y}^2)}^{\frac{\mu+2}{2}}} dy = \frac{a_\mu}{2\pi v_1^\mu} \int_{D} \frac{1}{{(1 + \norm{y}^2)}^{\frac{\mu+2}{2}}} dy.
\end{align*}
Now we focus on the remaining integral and show that it is upper bounded by a constant. Indeed,
\begin{align*}
    \int_{D} \frac{1}{{(1 + \norm{y}^2)}^{\frac{\mu+2}{2}}} dy \leq \int_{\mathbb{R}^2} \frac{1}{{(1 + \norm{y}^2)}^{\frac{\mu+2}{2}}} dy = \bigO{1}.
\end{align*}
In fact, since $\mu > 1$, we have that $\frac{\mu + 2}{2} \geq 1$. To see why the integral evaluates to a constant, we change to polar coordinates. Let $y = (r\cos\theta, r\sin\theta)$, so that $\|y\| = r$ and the area element is $dy = r \, dr \, d\theta$. The integral over $\mathbb{R}^2$ becomes:
\begin{align*}
    \int_{\mathbb{R}^2} \frac{1}{(1 + \|y\|^2)^{\frac{\mu+2}{2}}} dy &= \int_{0}^{2\pi} \int_{0}^{\infty} \frac{r}{(1 + r^2)^{\frac{\mu+2}{2}}} dr \, d\theta \\
    &= 2\pi \int_{0}^{\infty} r(1 + r^2)^{-\frac{\mu+2}{2}} dr.
\end{align*}
By applying the substitution $u = 1 + r^2$ (with $du = 2r \, dr$), the integral simplifies to:
\begin{align*}
    \pi \int_{1}^{\infty} u^{-\frac{\mu+2}{2}} du &= \pi \left[ \frac{u^{1 - \frac{\mu+2}{2}}}{1 - \frac{\mu+2}{2}} \right]_{1}^{\infty} \\
    &= \pi \left[ \frac{u^{-\mu/2}}{-\mu/2} \right]_{1}^{\infty}.
\end{align*}
Since $\mu > 1$, the term $u^{-\mu/2}$ vanishes as $u \to \infty$. Evaluating at the limits gives:
\begin{align*}
    \pi \left( 0 - \frac{1}{-\mu/2} \right) = \frac{2\pi}{\mu}.
\end{align*}
Because $\mu$ is a fixed parameter, the result $\frac{2\pi}{\mu}$ is independent of $v_1$, and therefore the expression is indeed $\bigO{1}$.
Hence, for $|v_1| \geq 1$, Equation \eqref{eq:surface_integral} can be rewritten as
\[p^{l_1}(v_1) = \bigO{1/v_1^\mu}.\] The final stage of the proof establishes that $p^{l_1}(v_1) = \Omega(1/v_1^\mu)$ holds within the restricted domain $|v_1| \leq \frac{\ell_{\text{max}}}{2}$. Under this assumption -- specifically that $v_1$ lies in this central sub-interval -- we observe that
\begin{align*}
    p^{l_1}(v_1) = \frac{a_\mu}{2\pi v^\mu} \int_{D} \frac{1}{{(1 + \norm{y}^2)}^{\frac{\mu+2}{2}}} dy \geq \frac{a_\mu}{2\pi v^\mu} \int_{D'} \frac{1}{{(1 + \norm{y}^2)}^{\frac{\mu+2}{2}}} dy
\end{align*}
where $D' = \left\{y \in \mathbb{R}^2: \norm{y} \leq 1\right \}$ and the inequality follows from $D' \subseteq D$ for $|v_1| \leq \frac{\ell_{\text{max}}}{2}$. To conclude, observe that $\int_{D'} \frac{1}{{(1 + \norm{y}^2)}^{\frac{\mu+2}{2}}} dy$ is a positive constant. Thus, we have that $p^{l_1}(v_1) = \Theta(1/v_1^{\mu})$ for $v_1 \in [1, \lmax/2]$ and $p^{l_1}(v_1) = \bigO{1/v_1^{\mu}}$ for $v_1 \in [\lmax/2, \lmax]$. To prove the claim about the variance of the projected random walk, we followed exactly the same steps as in \cite{GuinardKorman2021} (Supplementary Material, Theorem 6); hence, we omit the rest of the proof.
\end{proof}

\section{Universal Lower Bound}\label{sec:universal}

In this section, we establish Theorem \ref{thm:universal-lower-box}, which states that the expected detection time of any compact target $S$ is $\Omega(n/\Delta_B)$. Here, $\Delta_B = \Delta_B(S)$ denotes the {\em largest face area} of $S$, namely, the surface area of largest face of the box $B=\texttt{Box}(S)$ that minimizes surface area while containing $S$ (red shape in Fig.~\ref{fig:1tiling}). This lower bound is universal to any search strategy; consequently, it applies to any L\'evy walk regardless of the exponent $\mu$, even under continuous detection.


\begin{theorem}\label{thm:universal-lower-box}
    Let $X$ be any search process on the torus. Consider any compact target~$S$, then the expected time to detect $S$ is $\Omega(n/\Delta_B)$ where $\Delta_B$ denotes the largest face area of the bounding box of $S$.
\end{theorem}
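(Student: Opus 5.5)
We follow the tiling argument sketched in the Results section, made rigorous via a translation-averaging (Yao-type) reformulation. Let $B=\texttt{Box}(S)$ have sides $x_1\le x_2\le x_3$, so $\Delta_B=x_2x_3$ and $|B|=x_1x_2x_3$. If $x_1\ge \sqrt[3]{n}/c$ for a suitable constant $c$, then $\Delta_B=\Theta(n^{2/3})$ and the claim $\Omega(n^{1/3})$ is weak. It then suffices to note that, from a uniform random start, the searcher is at distance $\Omega(n^{1/3})$ from $S$ with constant probability, unless $S$ occupies a constant fraction of the torus. In that degenerate case $n/\Delta_B=O(n^{1/3})$, and we treat the bound as holding with the convention that detection time is at least a constant (or we restrict to targets with $|B|\le n/2$). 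So assume $x_1$ is small compared to $\sqrt[3]{n}$.

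\emph{Step 1 (reformulation).} Since the torus is translation invariant, a uniformly random start relative to a fixed $S$ is equivalent to a fixed start at the origin relative to $S+U$, with $U$ uniform on $\torus$. Hence
\[
t_{\text{detect}}^X(S)=\E_U\big[\E_X[\text{time until } X \text{ enters } S+U]\big].
\]
It suffices to bound this from below for any strategy $X$, including adaptive and continuous-detection ones.

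\emph{Step 2 (packing).} Choose a lattice $\Lambda$ of translates $u_1,\dots,u_K$, aligned with the axes of $B$ and with spacings $3x_1,3x_2,3x_3$, so that $K=\Omega(n/|B|)$. Partition $\torus$ into cells $C_i=u_i+[0,3x_1)\times[0,3x_2)\times[0,3x_3)$ in $B$'s coordinates, suitably wrapped. Write $U=u_I+W$ with $I$ uniform on $[K]$ and $W$ uniform in a cell. Then $S+U\subseteq u_I+W+B$, which lies in the enlarged region $E_I:=u_I+B''$, where $B''$ has side lengths $6x_1,6x_2,6x_3$. Thin $\Lambda$ by a constant factor, taking every fourth cell in each direction, so that the regions $E_i$ are pairwise at distance at least $x_1$. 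This keeps $K'=\Omega(n/|B|)$. Conditioning on $I$ lying in the thinned index set costs only a constant factor in probability, so
\[
t_{\text{detect}}^X(S)\ge c\cdot \E_{I}\big[\text{time until } X \text{ enters } E_I\big],
\]
with $I$ now uniform over the $K'$ separated regions. Entering $S+U$ requires entering $E_I$, which is what makes the time to enter $E_I$ a valid lower bound.

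\emph{Step 3 (counting).} For a fixed trajectory, order the regions $E_i$ by first entrance time $T_{(1)}<T_{(2)}<\dots$. Since the regions are pairwise $x_1$-separated and the speed is $1$, we get $T_{(j)}\ge (j-1)x_1$. The target index $I$ is independent of the trajectory only before detection. We handle this by noting that, until the walk enters $E_I$, it receives no information about $I$. So the rank of $E_I$ in the entrance order is uniform on $[K']$, by a standard deferred-decisions or coupling argument. Hence
\[
\E\big[T_{(\mathrm{rank})}\big]\ge \frac{K'-1}{2}\,x_1=\Omega\!\Big(\frac{n x_1}{x_1x_2x_3}\Big)=\Omega\!\Big(\frac{n}{\Delta_B}\Big).
\]

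The main obstacle is Step 3's uniform-rank claim for adaptive strategies, especially under continuous detection. We plan to prove it by fixing a deterministic strategy, which averaging allows. This strategy produces a single fixed path $\gamma$ as long as no detection occurs. The detection time for index $i$ is at least the time $\gamma$ first reaches $E_i$, and averaging over uniform $i$ gives the bound directly, with no independence subtleties.

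Orientation is a secondary technicality. The box need not be axis-aligned, so the lattice must be taken in $B$'s frame and made to fit the torus with periodic boundary conditions. Rather than tile exactly, we will select a maximal family of $x_1$-separated translates of $B''$ inside the torus by a volume argument: a greedy packing yields $\Omega(n/|B|)$ of them. We then replace the uniform $U$ by the uniform choice among these packed translates, perturbed within cells. The volume bound still gives a constant-probability containment.
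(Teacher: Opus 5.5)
Your proposal is correct and follows the paper's tiling argument: translation invariance turns the uniform start into a uniformly placed target, $\Theta(n/(x_1x_2x_3))$ enlarged copies of $\texttt{Box}(S)$ pairwise at distance at least $x_1$ contain $S$ in a uniformly random copy with constant probability, and ordering the copies by first entrance gives $\Omega(x_1\cdot n/(x_1x_2x_3))=\Omega(n/\Delta_B)$; your fixed pre-detection path for a deterministic strategy is a cleaner justification of the paper's ``expected position $(M+1)/2$'' step, and your greedy packing removes the paper's axis-alignment simplification. One correction to your degenerate case: the condition must be on $\texttt{Box}(S)$ rather than on $S$ (a lattice of unit balls with spacing $\log n$ filling the torus has vanishing volume fraction, lies within $O(\log n)$ of every point, and is found in polylogarithmic time although $n/\Delta_B=\Theta(\sqrt[3]{n})$), so a restriction such as your $|\texttt{Box}(S)|\le n/2$ is genuinely needed, and the paper's tiling, like your Step 2, tacitly assumes even more, namely that the enlarged boxes fit inside the torus.
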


\begin{proof}
For simplicity, assume $B$ is axis-aligned with side lengths $x_1 \le x_2 \le x_3$ along the main axes, where $x_1 \ge 2$ because the detection radius satisfies $d \ge 1$ (see Methods Section~\ref{sec:methods}). Consider an enlarged box $B'$ with side lengths $3x_1$, $3x_2$, $3x_3$ and tile the torus with copies of $B'$ (See Figure \ref{fig:1tiling} for a visual representation), placing adjacent boxes exactly $x_1$ apart along each axis. This yields $M$ boxes, with
$$
M = \Theta\left(\frac{n}{x_1 x_2 x_3}\right).
$$
Recall that our model has a fixed target $S$ and initial searcher position $X(0)$ uniform on $\torus$. This is equivalent (by translation invariance) to having a fixed initial position at the origin, i.e.,$X(0)=0$, and a uniform center of mass $x^*$. Thus, the probability that the target is fully contained in the $i$-th box $B'_i$ is $\mathbb{P}(S \subseteq B'_i) = x_1 x_2 x_3 / n = \Theta(1/M)$, indeed the subset of $B'$ where $x^*$ could lie is a box with sides $x_1$, $x_2$ and $x_3$ (it is a copy of $B$ placed in the center of $B'$). So, the probability that the target belongs to some box is a constant:
$$\mathbb{P}(S \subseteq \bigcup_i^M B'_i) = \sum_i^Mx_1 x_2 x_3 / n = \Theta(1).$$
Hence, by law of total expectation, it suffices to lower bound the expected time to find $S$, conditioning on $S$ being fully contained within some box $B'$.

Moreover, conditioning on this event, each box has equal probability to contain the target, since all boxes are identical, up to translation is space. The target box $B^\star$ containing the target is therefore uniformly distributed over $\{B'_1,\dots,B'_M\}$.

Fix a search strategy $X$, and consider a single realization of it, which induces an ordering of the boxes according to the times at which they are first visited; we denote this sequence by $\texttt{Seq}(X) = (b_1,\dots,b_M)$; note that different realizations of a randomized strategy may produce different sequences. Since the target box is chosen uniformly at random, its expected position in any such sequence is $(M+1)/2$. Averaging over all choices of $B^\star$ and all realizations of $X$, it follows that, in expectation, $X$ must visit $\Omega(M)$ distinct boxes before entering $B^\star$. Moreover, traveling between two distinct boxes requires time $\Omega(x_1)$. It follows that
\[
\mathbb{E}[t_{\mathrm{detect}}^X(S) \mid S \subseteq \bigcup_i^M B'_i]
= \Omega(x_1 M)
= \Omega\!\left(\frac{n}{x_2 x_3}\right).
\]
Recalling that $\Delta_B = x_2 x_3$ completes the proof.
\end{proof}

\section{Lower Bounds for Non-Cauchy L\'evy Walks}\label{sec:lower-bound-levy}
In this section, we derive lower bounds on the detection time of L\'evy walks that do not follow the Cauchy distribution. Specifically, for $1 < \mu < 2$, we show that the corresponding intermittent L\'evy walks are inefficient at locating targets of small surface area. Conversely, for $2 < \mu \leq 3$, we demonstrate that these L\'evy walks perform poorly when detecting  targets of large surface area, except when they are line-shaped.

To this end, we introduce the quantity $T_d$, the first time the process is expected to leave the ball of radius $d$. In the intermittent setting, it denotes the time in which the endpoint of a step exits the ball of radius $d$ centered at the starting point. Note that $T_d$ provides a lower bound on the time required to detect a target located at distance $d$ or greater.

Recall that $B=\texttt{Box}(S)$ denotes the bounding box of $S$ of minimum surface area, with side lengths $x_1 \le x_2 \le x_3$, where we assume without loss of generality that $B$ is aligned with the coordinate axes (See Figure \ref{fig:1tiling} for a visual representation).

\begin{theorem}\label{th: lower_bound_delta}
        Let $X$ be any search process on the torus. Consider any compact target $S$ with a bounding box $B=\texttt{Box}(S)$ of largest surface area $\Delta_B$ and $\delta \in [\frac{1}{2},1]$ such that $x_3 = \Delta_B^{\delta}$ and $x_2 = \Delta_B^{1-\delta}$, then the expected time to detect $S$ is \[\Omega\left(\frac{n}{\Delta_B^{2-\delta}}T_{\Delta_B^{1-\delta}}\right).\]
\end{theorem}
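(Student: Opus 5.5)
The plan is to redo the tiling argument of Theorem~\ref{thm:universal-lower-box}, but with boxes separated by gaps of size $x_2=\Delta_B^{1-\delta}$ instead of $x_1$. Each move between boxes then costs the escape time $T_{x_2}$ rather than just the distance $x_1$. As before, assume $B$ is axis-aligned with sides $x_1\le x_2\le x_3$. By translation invariance, fix $X(0)=0$ and place the target at a uniformly random position.

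\textbf{The tiling.} I replace the thin side by a side of order $x_2$. Let $B''$ be a box with side lengths $3x_2, 3x_2, 3x_3$, so it has volume $\Theta(x_2^2x_3)$. I tile the torus with copies $B''_1,\dots,B''_M$, with neighboring copies at distance at least $x_2$ apart. This gives
\[
M=\Theta\!\left(\frac{n}{x_2^2x_3}\right)=\Theta\!\left(\frac{n}{\Delta_B^{2(1-\delta)+\delta}}\right)=\Theta\!\left(\frac{n}{\Delta_B^{2-\delta}}\right).
\]
The admissible positions of a reference point of $S$ for which $S\subseteq B''_i$ contain a box of sides $x_2,x_2,x_3$, since $x_1\le x_2$. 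Hence each box contains $S$ with probability $\Theta(1/M)$, and the event $S\subseteq\bigcup_i B''_i$ has constant probability. Conditioned on this event, the box $B^\star$ holding the target is uniform among the $M$ boxes. If boxes are not exactly disjoint (e.g.\ at the torus boundary), I will discard a constant fraction of them, which only affects constants.

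\textbf{Counting box visits.} Fix a realization of $X$ and order the boxes by first entrance of a detection point, exactly as in Theorem~\ref{thm:universal-lower-box}. Since $B^\star$ is uniform, the expected number of distinct boxes entered before $B^\star$ is at least $(M-1)/2=\Omega(M)$.

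\textbf{Cost per transition.} Whenever the process enters a new box $B''_j$, it sits at some point inside $B''_j$. Every point of another box lies at distance at least $x_2$ from it. So before the next new box is entered, the process must have its step endpoint leave the ball of radius $x_2$ around that entry point. By the strong Markov property, the expected additional time is at least $T_{x_2}=T_{\Delta_B^{1-\delta}}$. Summing over the $\Omega(M)$ transitions via a Wald-type argument (linearity of expectation over stopping times) gives
\[
\Omega\!\left(M\,T_{\Delta_B^{1-\delta}}\right)=\Omega\!\left(\frac{n}{\Delta_B^{2-\delta}}\,T_{\Delta_B^{1-\delta}}\right).
\]

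\textbf{Main obstacle.} The delicate point is the per-transition bound for a general process $X$. For an arbitrary strategy, $T_d$ must be read as a uniform lower bound on the expected exit time from a ball of radius $d$, over all starting states. One must then check that the number of transitions and their durations can be combined in expectation. I would handle this with an optional-stopping argument: charge $T_{x_2}$ to each new-box entrance, and observe that the count of entrances before hitting $B^\star$ stochastically dominates a uniform index, independently of the realized path. For the L\'evy walks the increments are i.i.d.\ and translation invariant, so this uniformity is automatic.
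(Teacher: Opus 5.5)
Your proof is correct. It uses the same tiling-plus-escape-time strategy as the paper, but differs in one choice that actually matters.

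The paper keeps the enlarged boxes at $3x_1\times 3x_2\times 3x_3$, widens only the gaps to $x_2$, and then appeals to ``the same reasoning'' as Theorem~\ref{thm:universal-lower-box}. With that tiling, the period in the thin direction is $3x_1+x_2$, while the admissible window for the target in that direction is only of order $x_1$. So the probability that $S$ lies inside some box is $\Theta(x_1/x_2)$, not $\Theta(1)$. For disc-like targets ($x_1=\Theta(1)$, $x_2=\Theta(\Delta_B^{1/2})$), following the paper literally would lose a factor of order $\Delta_B^{1/2}$.

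By inflating the thin side to $3x_2$, you keep the same count $M=\Theta(n/\Delta_B^{2-\delta})$ while restoring constant containment probability. Your version is therefore the one that actually delivers the stated bound, including the disc case ($\delta=1/2$) on which the subsequent corollary relies.

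You also make explicit two points the paper leaves implicit:
\begin{itemize}
\item The rank $R$ of $B^\star$ in the first-entrance order is uniform and independent of the trajectory. Hence $\mathbb{E}[t_R]=\sum_k \Pr(R>k)\,\mathbb{E}[t_{k+1}-t_k]$, with each increment bounded below by an exit time from a ball of radius $x_2$.
\item For a general process, $T_d$ must be read as a lower bound on the expected exit time that holds uniformly over starting states and histories. For L\'evy walks this is automatic by the strong Markov property and translation invariance.
\end{itemize}
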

\begin{proof}
Consider an enlarged box $B'$ with side lengths $3x_1$, $3x_2$, and $3x_3$. Tile the torus with copies of $B'$ as in the proof of Theorem~\ref{thm:universal-lower-box}, positioning them so that any two adjacent boxes are separated by a distance of exactly $x_2=\Delta_B^{1-\delta}$. This yields
\begin{align*}
    \Theta\left(\frac{n}{(x_1+\Delta_B^{1-\delta})(x_2+\Delta_B^{1-\delta})(x_3+\Delta_B^{1-\delta})}\right) = \Theta\left(\frac{n}{\Delta_B^{1-\delta}x_2x_3}\right) = \Theta\left(\frac{n}{\Delta_B^{2-\delta}}\right)
\end{align*}
such boxes. Next, by the same reasoning used in the proof of Theorem~\ref{thm:universal-lower-box}, even if the searcher is aware of this tiling, it must examine at least half of these boxes on average before finding the one that contains the target. Moving from one box to the next requires time $T_{\Delta_B^{1-\delta}}$, so the expected detection time is
\[
    \Omega\left(\frac{n}{\Delta_B^{2-\delta}}T_{\Delta_B^{1-\delta}}\right),
\]
concluding the theorem.
\end{proof}

\subsection{Lower bounds for L\'evy walks with $1 < \mu < 2$}\label{SM:lower_small_mu}

Recall that $m_{\mathrm{detect}}^{X^{\mu}}(S)$ denotes the number of steps needed by the L\'evy walk with exponent $\mu$ to hit the target. The following claim provides a lower bound for the number of steps, which holds for any exponent $\mu\in(1,3]$. The proof is deferred to Section~\ref{sec:lower_number_steps}.

\begin{claim}\label{claim_steps}
Consider an intermittent L\'evy walk $X^\mu$ with $\mu\in(1,3]$ and a target $S$. Then,
\[
\mathbb{E}[m_{\mathrm{detect}}^{X^{\mu}}(S)] = \Omega\left(\frac{n}{V}\right),
\]
where $V$ denotes the volume of $S$.
\end{claim}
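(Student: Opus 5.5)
The plan is to exploit the fact that the initial position $X(0)$ is uniform on $\torus$. Since every increment $V(m)$ is independent of the past, each position $X(m)$ is then also uniformly distributed on $\torus$. Indeed, if $X(m-1)$ is uniform and independent of $V(m)$, then $X(m)=X(m-1)+V(m)$ modulo the torus is uniform, because the uniform measure on the torus is invariant under translations. This gives the marginal bound
\[
\Prob\bigl(X(m)\in S\bigr) = \frac{V}{n}
\]
for every $m\ge 0$, and it holds regardless of $\mu$. This is why the claim is stated for all $\mu\in(1,3]$: no property of the step-length distribution is used beyond the independence of the steps.

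Next comes a union bound. Let $N(k)=\sum_{m=0}^{k}\mathbf{1}\{X(m)\in S\}$ count the visits to $S$ up to step $k$. The event $\{m_{\mathrm{detect}}^{X^\mu}(S)\le k\}$ is contained in $\{N(k)\ge 1\}$, so by Markov's inequality
\[
\Prob\bigl(m_{\mathrm{detect}}^{X^\mu}(S)\le k\bigr)\le \mathbb{E}[N(k)] = \frac{(k+1)V}{n}.
\]
Take $k=\lfloor n/(2V)\rfloor-1$; assume $V\le n/4$, since otherwise the bound $\Omega(n/V)=\Omega(1)$ is trivial. With this choice the probability is at most $1/2$. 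Hence $m_{\mathrm{detect}}>k$ with probability at least $1/2$, and
\[
\mathbb{E}[m_{\mathrm{detect}}^{X^\mu}(S)]\ge \frac{k}{2}=\Omega(n/V).
\]
One can also write this directly as $\mathbb{E}[m]=\sum_{k}\Prob(m>k)\ge\sum_{k}\bigl(1-(k+1)V/n\bigr)^+$, which gives the same order.

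The only delicate point is the convention for whether step $0$ counts as a detection opportunity, that is, whether the search starts inside $S$. This affects only an additive constant and is absorbed by the bound above. It also matters that the torus wrap-around makes translation invariance exact. This is where the periodic boundary conditions are used; in $\mathbb{R}^3$ the argument would fail. I expect no real obstacle here. The main care is to make sure the uniform-marginal statement is stated with respect to the Lebesgue measure on $\torus$, so that $\Prob(X(m)\in S)=\mathrm{vol}(S)/n$ holds for arbitrary measurable, not necessarily connected, targets $S$.
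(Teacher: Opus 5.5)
Your argument is correct, but it reaches the bound by a different route from the paper.

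\textbf{The paper's route.} It switches to the dual picture: the walker is fixed at the origin and the target's center is uniform. It builds the set $F_{t-1}=\bigcup_{j<t}R_j$ of excluded center positions, notes that $\mathrm{vol}(F_{t-1})\le tV$, and derives a conditional hazard bound $\Pr(m_{\mathrm{detect}}=t\mid m_{\mathrm{detect}}\ge t)\le 2V/n$ for $t<n/(2V)$. It then multiplies the survival factors.

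\textbf{Your route.} You use only unconditional marginals. Translation invariance of Lebesgue measure on $\torus$ makes every $X(m)$ uniform, so $\Pr(X(m)\in S)=V/n$. A first-moment (union) bound over $k\approx n/(2V)$ steps then gives $\Pr(m_{\mathrm{detect}}>k)\ge 1/2$.

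\textbf{Comparison.} Both proofs rest on the same fact: the target's position relative to the walker is uniform and independent of the displacement sequence. So neither uses anything about $\mu$, and both apply unchanged to any strategy whose displacements are independent of the starting point. Your version is shorter and avoids reasoning about the posterior distribution of the target given no detection. It also sidesteps a small looseness in the paper. There, the survival bound $(1-2V/n)^t$ is justified only for $t<n/(2V)$ but is summed over all $t$; this is harmless once the sum is truncated. For $t<n/(2V)$, your bound $\Pr(m_{\mathrm{detect}}>t)\ge 1-(t+1)V/n$ is as strong as the paper's hazard-based one up to constants, so the paper's formulation gains nothing essential here.

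\textbf{One small caveat.} Calling the case $V>n/4$ trivial requires either the convention $m_{\mathrm{detect}}\ge 1$ or $V$ bounded away from $n$. The paper's bound $n/(2V)-1$ makes the same implicit assumption that $V$ is small relative to $n$.
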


Due to a variant of Wald's identity (see Claim~\ref{claim:timeVSmoves}), multiplying this lower bound by the average step length of a given L\'evy walk yields a lower bound on the detection time for this L\'evy process. As given by Eq.~\eqref{obs:tau}, for $\mu \in (1,2)$, the average step length is
 $\Theta(n^{\varepsilon/3})$, implying the following lower bound on the detection time:

\begin{theorem}
Consider a target $S$ of volume $V$. The detection time of an intermittent L\'evy walk with $\mu \in (1,2)$ is:
\[t_{\text{detect}}^{X^\mu}(S) = \Omega\left(\frac{n^{1+\varepsilon/3}}{V}\right).\]
\end{theorem}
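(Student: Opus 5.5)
The plan is to combine three ingredients already stated earlier: the step-count lower bound of Claim~\ref{claim_steps}, the Wald-type identity of Claim~\ref{claim:timeVSmoves}, and the mean step length from Fact~\ref{fact:exp+var}. This theorem is their direct composition, and essentially no new estimates are needed.

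First, I apply Claim~\ref{claim:timeVSmoves} to the intermittent L\'evy walk $X^\mu$ with $\mu\in(1,2)$ on $\torus$ and the target $S$. This gives
\[
t_{\text{detect}}^{X^\mu}(S)=\mathbb{E}\!\left[m_{\mathrm{detect}}^{X^\mu}(S)\right]\cdot\tau .
\]
Second, I invoke Claim~\ref{claim_steps}, which holds for every $\mu\in(1,3]$ and so in particular here, to get $\mathbb{E}[m_{\mathrm{detect}}^{X^\mu}(S)]=\Omega(n/V)$.

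Third, Fact~\ref{fact:exp+var} gives $\tau=\Theta(\ell_{\max}^{2-\mu})$ for $\mu\in(1,2)$. Substituting $\ell_{\max}=\sqrt[3]{n}/2$ and $\varepsilon=2-\mu$ yields
\[
\tau=\Theta\!\left((n^{1/3}/2)^{\varepsilon}\right)=\Theta(n^{\varepsilon/3}),
\]
where the factor $2^{-\varepsilon}$ is a constant because $\mu$ is fixed.

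Multiplying the two bounds gives $t_{\text{detect}}^{X^\mu}(S)=\Omega\!\left(\frac{n}{V}\cdot n^{\varepsilon/3}\right)=\Omega\!\left(\frac{n^{1+\varepsilon/3}}{V}\right)$. This step is legitimate because both factors are nonnegative and the identity is exact, so a lower bound on each factor gives a lower bound on the product.

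The only substantive content lies in Claim~\ref{claim_steps}, whose proof is deferred, so within this theorem I expect no real obstacle. If I had to justify that claim, I would write $m_{\mathrm{detect}}\ge \sum_{m\ge 0}\mathbf{1}\{m<m_{\mathrm{detect}}\}$ and use a union bound. The starting point is uniform and the walk on the torus preserves the uniform measure, so each $X(m)$ is uniformly distributed. Hence $\Pr[X(m)\in S]=V/n$ for each $m$, and $\Pr[m_{\mathrm{detect}}\le k]\le (k+1)V/n$. Taking $k=\lfloor n/(2V)\rfloor$ then gives $\mathbb{E}[m_{\mathrm{detect}}]\ge k/2=\Omega(n/V)$, assuming $V\le n/2$; otherwise the bound is trivial.
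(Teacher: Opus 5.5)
Your proof is correct and is the same as the paper's: the paper also obtains the bound by multiplying the step-count estimate $\Omega(n/V)$ of Claim~\ref{claim_steps} by the mean step length $\tau=\Theta(n^{\varepsilon/3})$ from Fact~\ref{fact:exp+var}, via the Wald-type identity of Claim~\ref{claim:timeVSmoves}. Your sketch for Claim~\ref{claim_steps} (each $X(m)$ is uniform, then a union bound up to step $\lfloor n/(2V)\rfloor$) is a more direct version of the paper's argument, which bounds the excluded volume $\operatorname{Vol}(F_t)\le (t+1)V$ by that same union bound and then converts it into a per-step hazard bound of $2V/n$.
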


\subsubsection{Proof of Claim \ref{claim_steps}}\label{sec:lower_number_steps}

\begin{proof}
Recall that the  searcher starts from a random point of the torus, which is equivalent to initially let it start at the origin, and locate the target's center of mass uniformly at random in the torus. Denote the target with its center of mass at the point $x$ by $S(x)$.
Similarly, define $R_t:=X(t)-S(0)$, the reflection of the target centered at $X(t)$. Observe that if, at time $t$, the L\'evy walk is at a location $X(t) \notin S$, then $R_t$ coincides with the set of positions at which the center of mass of the target cannot be located. Indeed, we have
    \begin{align*}
    X(t) \in S(x)
    &\iff \exists\, w \in S(x) \text{ such that } w = X(t) \\
    &\iff \exists\, w \in S(0) \text{ such that } x + w = X(t) \\
    &\iff \exists\, w \in S(0) \text{ such that } x = X(t) - w \\
    &\iff x \in R_t .
    \end{align*}
    Now, define the set
    \begin{align}\label{eq:ub_vol}
        F_t := \bigcup_{j=0}^t R_j,
    \end{align}
    which is obtained by taking the union of all regions excluded at each time step \(j \le t\), and therefore contains every point that cannot be the target’s center of mass after \(t\) steps without hitting the target. Observe that, for each $j$,
    \[
    \operatorname{Vol}\bigl(R_j\bigr) \le V.
    \]
    Indeed, $R_t$ results from a reflection and a shift of $S$, and hence has the same volume. Thus, we can compute the following upper bound to the volume of $F_t$:
    \[
    \operatorname{Vol}(F_t)
    = \operatorname{Vol}\!\left(\bigcup_{j=0}^t R_t\right)
    \le \sum_{j=0}^t \operatorname{Vol}\bigl(R_t\bigr)
    \le (t+1)V .
    \]
    Next, we show that when $t < \frac{n}{2V}$,
    \[
    \mathbb{P}(m_{\mathrm{detect}}^{X^{\mu}}(S) = t \mid m_{\mathrm{detect}}^{X^{\mu}}(S) \geq t) = \mathcal{O}\!\Big(\frac{V}{n}\Big).
    \]

     Recall that $m_{\mathrm{detect}}^{X^\mu}(S)$ denotes the number of steps for the L\'evy walk $X^\mu$ before it hits the target $S$, then $m_{\mathrm{detect}}^{X^{\mu}}(S) \geq t$ indicates that the target was not found in the first $t-1$ steps. Moreover, since the center of mass of the target is chosen uniformly at the start of the search, it follows that after $t-1$ unsuccessful steps the conditional distribution of $S$ at step $t$ is uniform over $\mathbb{T} \setminus F_{t-1}$, which has volume at least
    \[
    n-tV\geq n - \frac{n}{2} = \frac{n}{2}.
    \]
    Moreover, Eq.~\eqref{eq:ub_vol} guarantees that the set of possible locations for the target center has volume at most $V$.
    Therefore, the probability can be bounded as

    \[
    \mathbb{P}(m_{\mathrm{detect}}^{X^{\mu}}(S) = t \mid m_{\mathrm{detect}}^{X^{\mu}}(S) \geq t) \leq \frac{2V}{n}.
    \]
     This shows that, for $t < n/(2V)$, the probability of hitting the target in a single step is at most twice the fraction of the state space occupied by the target. We now lower bound the expectation of $m_{\mathrm{detect}}^{X^{\mu}}(S)$. By definition,
\[
\mathbb{E}[m_{\mathrm{detect}}^{X^{\mu}}(S)] = \sum_{t=1}^{\infty} \mathbb{P}(m_{\mathrm{detect}}^{X^{\mu}}(S) \ge t).
\]
Observe that the probability that the target has not been found by time $t$ can be bounded below as
\[
\mathbb{P}(m_{\mathrm{detect}}^{X^{\mu}}(S) \ge t) = \prod_{j=0}^{t-1} \Big(1 - \mathbb{P}(m_{\mathrm{detect}}^{X^{\mu}}(S) = j+1 \mid m_{\mathrm{detect}}^{X^{\mu}}(S) > j)\Big)
\ge  \Big(1 - \frac{2V}{n}\Big)^t.
\]
This tail lower bound means the distribution has at least geometric decay, implying  \[\mathbb{E}[m_{\mathrm{detect}}^{X^{\mu}}(S)] \ge \sum_{t=1}^{\infty} (1-2V/n)^t \geq \frac{1-2V/n}{2V/n} = \frac{n}{2V} - 1,\] which completes the proof.
\end{proof}




\subsection{Lower bounds for L\'evy walks with $2 < \mu \leq 3$}

The following claim concerns a general random walk process on the 3D torus and thus applies not only to our main subject of interest—the L\'evy walk—but also to other search processes.
\begin{claim}\label{claim:max-length}
Consider a random walk process $X$ on the torus $\torus$ and let $\sigma'$ denote the standard deviation of the length of the projected steps onto one of the coordinates.
\begin{itemize}
    \item For every step $m \in \mathbb{N}$, $\max_{s\leq m} \norm{X(s)-X(0)} = O(\sqrt{m}\sigma')$.
    \item Let $m_d$ be the number of steps needed to go to distance at least $d<\sqrt[3]{n}/2$, in other words, $m_d$ is the first step $m$ for which $\norm{X(m)-X(0)}\geq d$. We have $\E(m_d)=\Omega(d^2/\sigma'^2)$.
    \item If the process is intermittent and $\tau$ denotes the average length of a jump, then the expected time before reaching distance $d<\sqrt[3]{n}/2$ is $T_d=\Omega(d^2/\sigma'^2 \tau)$.
\end{itemize}
In particular, if the process is intermittent and $L$ is the maximal length in the support of the step-length distribution, then the expected time needed to go to a distance $\Omega(\sqrt{n})$ is $\Omega( \frac{n}{L})$.
\end{claim}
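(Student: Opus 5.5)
Working in the unfolded walk on $\mathbb{R}^3$ (unfolding can only increase distances to $X(0)$ compared with the torus metric), I will treat the three coordinates separately. By Theorem~\ref{thm:projection} and the symmetry of the step distribution, each coordinate $X_i(s)-X_i(0)$ is a one-dimensional martingale with i.i.d.\ mean-zero increments of variance $\sigma'^2$. Applying Doob's $L^2$ maximal inequality to each coordinate gives
\[
\E\Big[\max_{s\le m}|X_i(s)-X_i(0)|^2\Big]\le 4m\sigma'^2 .
\]
Since $\norm{x}^2=\sum_i x_i^2$, summing the three coordinates yields $\E[\max_{s\le m}\norm{X(s)-X(0)}^2]\le 12m\sigma'^2$. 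Jensen's inequality then gives $\E\max_{s\le m}\norm{X(s)-X(0)}=O(\sqrt m\,\sigma')$. This is the first bullet, read as a statement in expectation (or, via Markov, with constant probability).

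For the second bullet, I will use optional stopping. The process $M_s=\norm{X(s)-X(0)}^2-3s\sigma'^2$ is a martingale, since every coordinate increment is independent of the past with mean zero and variance $\sigma'^2$. I first apply optional stopping at the bounded time $m_d\wedge k$ and then let $k\to\infty$ by monotone convergence. This gives $3\sigma'^2\,\E[m_d]=\E\norm{X(m_d)-X(0)}^2\ge d^2$, hence $\E(m_d)\ge d^2/(3\sigma'^2)$. Here the inequality is in the useful direction: at the stopping time the distance is at least $d$, so the overshoot only helps the lower bound, and no upper control on it is needed. If $\E m_d$ were infinite the bound would hold trivially. Finiteness of $\E m_d$ before taking the limit is the one technical point I expect to require care, and the truncation at $k$ handles it.

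For the third bullet I will apply Wald's identity exactly as in Claim~\ref{claim:timeVSmoves}. The time to reach distance $d$ is $\sum_{s\le m_d}\norm{V(s)}$, and $m_d$ is a stopping time with respect to the step filtration. Wald's identity therefore gives expected time $\E[m_d]\,\tau=\Omega(d^2\tau/\sigma'^2)$. The expression ``$d^2/\sigma'^2\tau$'' in the statement should be read as $(d^2/\sigma'^2)\cdot\tau$.

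For the final assertion, let $L$ be the maximal step length. Then $\sigma'^2\le \E\norm{V}^2\le L\,\E\norm{V}=L\tau$. Substituting into the third bullet with $d$ a constant fraction of the torus diameter (I will take $d=\Theta(\sqrt n)$ as written, or more safely $d<\sqrt[3]{n}/2$), the expected time is $\Omega(d^2\tau/(L\tau))=\Omega(d^2/L)$, which becomes $\Omega(n/L)$ when $d^2=\Theta(n)$. The main obstacle here is the mismatch between ``$\sqrt n$'' and the constraint $d<\sqrt[3]n/2$ in dimension three. If that mismatch cannot be reconciled, I will state the result with $d^2$ in place of $n$, which is what the argument actually proves.
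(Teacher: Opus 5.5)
Your proposal is correct. It differs from the paper's argument mainly in the second bullet.

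\textbf{First bullet.} The paper uses Kolmogorov's inequality on each coordinate, then a union bound giving $\Pr(d^Z_{max}(m)\ge\lambda\sqrt m\,\sigma')\le 9/\lambda^2$, then a tail summation. Your Doob $L^2$ maximal inequality followed by Jensen is the same idea in a cleaner form.

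\textbf{Second bullet.} Here the paper does not use optional stopping. It derives the bound from the first bullet: Markov's inequality gives $\Pr(m^Z_d<2\E m^Z_d)\ge 1/2$, hence $\E\bigl(d^Z_{max}(2\E m^Z_d)\bigr)\ge d/2$, which it compares with the $O(\sqrt{\E m^Z_d}\,\sigma')$ upper bound.

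Your route via the martingale $\norm{Z(s)-Z(0)}^2-3s\sigma'^2$ is self-contained and independent of the first bullet. It also gives the explicit constant $\E(m_d)\ge d^2/(3\sigma'^2)$. The paper's route, by contrast, needs only the maximal bound plus Markov, with no stopping-time limit to justify.

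One small repair is needed in your limit. Monotone convergence handles $\E[m_d\wedge k]$ but not the left-hand side. Use instead $\norm{Z(m_d\wedge k)-Z(0)}^2\ge d^2\,\mathbf{1}\{m_d\le k\}$, which gives $3\sigma'^2\,\E[m_d\wedge k]\ge d^2\Pr(m_d\le k)$, and then let $k\to\infty$. If $\Pr(m_d=\infty)>0$, the bound is trivial anyway. The equality with $\E\norm{Z(m_d)-Z(0)}^2$ that you wrote would require uniform integrability, but you only need the inequality. Also, Theorem~\ref{thm:projection} is not needed: the claim concerns a general walk, and you only use i.i.d.\ mean-zero coordinate increments with common variance $\sigma'^2$, the same implicit symmetry assumption the paper makes.

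\textbf{Third bullet and final sentence.} The third bullet, via Wald, and the bound $\sigma'^2\le L\tau$ coincide with the paper. Your concern about the final sentence is justified. On $\torus$ all distances are $O(n^{1/3})$, so ``distance $\Omega(\sqrt n)$'' is a leftover from the two-dimensional setting of the cited work. The paper's proof stops at $\sigma'^2\le L\tau$ and never carries out the substitution. What actually follows is $T_d=\Omega(d^2/L)$, i.e.\ $\Omega(n^{2/3}/L)$ for $d=\Theta(n^{1/3})$, exactly as you propose to state it.
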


\begin{proof}
The proof of the claim is similar to the corresponding proof in \cite{GuinardKorman2021} (Claim 10 in the Supplementary Materials), except for minor adaptations to the 3D setting. We nevertheless include the proof here.

 Given the search process $X$ on the torus, let $Z$ be the process on $\R^3$, with $Z(0)=X(0)$ and evolving with the same relocation vectors as $X$ (except that $Z$ does not wrap around the torus due to the boundary conditions). Since the distance between $Z(m)$ and $Z(0)$, in $\R^3$, is always at least that of $X(m)$ and $X(0)$, in $\torus$, the number of steps needed to go to distance $d$ in $\torus$ is at least as high as in $\R^3$. Hence, we may analyze the process $Z$ instead of $X$.

 Define $d^Z_{max}(m)$ as the maximal distance (from the initial point) that the process $Z$ reached from step~$0$ up to step $m$, i.e.,
 \begin{align*}
    d^Z_{max}(m)=\max_{s\leq m} \norm{Z(0)- Z (s)}.
 \end{align*}
 Now write $Z=(Z_1,Z_2,Z_3)$, let $p'$ be the p.d.f. of the projected step-lengths (i.e., the p.d.f. of the step-lengths of $Z_i$, for any $i\in\{1,2,3\}$), and let $\tau'$ and $\sigma'$ be respectively its mean and standard deviation. Next, let $d^Z_{i,max}(m)$ be the maximal distance reached by the projection on coordinate $i=1,2,3$. Since steps are independent, the standard deviation of $Z_i(s)$, for $s\leq m$, is $\sqrt{s}\sigma'\leq \sqrt{m}\sigma'$. By Kolmogorov's inequality, we have for any $\lambda>0$,
 \[\Pr(d^Z_{i,max}(m) \geq \lambda \sqrt{m} \sigma' )\leq \frac{1}{\lambda^2}.\] Furthermore, since $d^Z_{max}(m)\leq \sqrt{3}\max\{d^Z_{1,max}(m),d^Z_{2,max}(m),d^Z_{3,max}(m)\}$, we have by a union bound argument, that for any $\lambda>0$,
\begin{align*}
    \Pr(d^Z_{max}(m)\geq \lambda\sqrt{m}\sigma') \leq  \sum_{i=1,2,3}\Pr\left (d^Z_{i,max}(m)\geq \frac{\lambda}{\sqrt{3}}\sqrt{m}\sigma' \right )\leq \frac{9}{\lambda^2}.\end{align*}
	Hence,
	\begin{align} \E(d^Z_{max}(m))&=\int_{s=0}^\infty \Pr\left(d^Z_{max}(m)\geq s\right)ds = \sum_{\lambda=0}^\infty \int_{\lambda'=0}^{\sqrt{m}\sigma'} \Pr\left(d^Z_{max}(m)\geq \lambda \sqrt{m}\sigma' + \lambda'\right) d\lambda' \nonumber \\
    &\leq \sqrt{m}\sigma' \left(\sum_{\lambda\geq 0} \Pr(d^Z_{max}(m)\geq \lambda \sqrt{m}\sigma')\right)=O\left (\sqrt{m}\sigma'\right),  \label{eq:max-distance-general-upper-bound}\end{align}
    where the second equality is due to separating the integration into the intervals of width $\sqrt{m}\sigma'$
    and the inequality is due to setting $\lambda'=0$ for each interval.
	Eq. \eqref{eq:max-distance-general-upper-bound} proves the first item of Claim \ref{claim:max-length}.

    Next, write the $m_d$ of the statement as $m^X_d$, to distinguish it from the similarly defined $m^Z_d$, which is the first step for which $\norm{Z(m)-Z(0)}\geq d$. As remarked above, we have $m^X_d\geq m^Z_d$. Note that for $m\geq m^Z_d$,  we have $d^Z_{max}(m)\geq d^Z_{max}(m_d)\geq d$. Therefore, by Markov's inequality,
\begin{align} \label{eq:markov-inq-distances-steps}
\E\left(d^Z_{max}(2\E(m^Z_d))\right)\geq\E\left(d^Z_{max}(2\E(m^Z_d))\mid m^Z_d<2\E(m^Z_d)\right)\cdot \Pr\left(m^Z_d<2\E(m^Z_d)\right)
\geq d \cdot \frac{1}{2}.
\end{align}
Now using Eq.~\ref{eq:max-distance-general-upper-bound} with $m=2\E(m^Z_d)$, we have $\E(d^Z_{max}(2\E(m^Z_d)))=O(\sqrt{\E(m^Z_d)}\sigma')$ and hence, by Eq.~\ref{eq:markov-inq-distances-steps},

\begin{align*}\E(m^X_d)\geq \E(m^Z_d)=\Omega\left(\frac{d^2}{\sigma'^2}\right),
\end{align*}
which proves the second item of Claim \ref{claim:max-length}.

The last item in the claim is a lower bound on $T_d=\E(T(m^X_d))$, the expected time that $X$ needs to reach distance $d$, assuming that $X$ is intermittent. To obtain it, we observe that $m^X_d$ is the hitting step of the set of nodes at distance $d$ or more in the torus. Hence, by Claim~\ref{claim:timeVSmoves}, we have $T_d=\E(m^X_d)\cdot \tau\geq \E(m^Z_d)\cdot \tau=\Omega(\frac{d^2}{\sigma'^2}\tau)$, which is exactly as needed.

Finally, observe that
\begin{align}\label{eq:variance1}
     \sigma'^2 = \int_{0}^{L} p'(\ell)\ell^2d\ell \leq L\int_{0}^{L} p'(\ell)\ell d\ell = L\tau'\leq L\tau,
\end{align}
     where the last inequality is justified by the fact that the projection reduces distances. This completes the proof of Claim \ref{claim:max-length}.
\end{proof}

Towards proving the lower bound for $\mu\in (2,3]$, we first
establish the following.

\begin{claim}\label{claim:distance_large_mu}
 Let $X^{\mu}$ be an intermittent L\'evy walk process on the torus $\torus$, for some fixed $\mu\in (2,3]$. The expected time required to reach a distance of
 $d\geq 1$ from the starting point is:
  \begin{align*}
      T_d = \begin{cases}
          \Omega(d^{\mu-1}) \text{ if } \mu \in (2,3)\\
          \Omega\left(\frac{d^2}{\log d}\right) \text{ if } \mu = 3
      \end{cases}.
  \end{align*}
\end{claim}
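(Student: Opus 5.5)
The plan is to reduce the statement to the third item of Claim~\ref{claim:max-length}, applied to a \emph{truncated} walk rather than to $X^\mu$ itself. Used directly, Claim~\ref{claim:max-length} is too weak. The projected variance of $X^\mu$ is $\sigma'^2=\Theta(\ell_{\max}^{3-\mu})$ with $\ell_{\max}=\sqrt[3]{n}/2$ (Theorem~\ref{thm:projection}), so it only gives $T_d=\Omega(d^2/n^{(3-\mu)/3})$, which is useless for small $d$. The fix is to observe that steps longer than $d$ are rare on the relevant time scale. Steps shorter than $d$ then behave like a walk truncated at $d$, whose variance depends on $d$ and not on $n$.

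Concretely, I fix $d\ge 1$ (we may assume $d<\sqrt[3]{n}/2$, otherwise the statement is vacuous or follows trivially). Let $Y$ be the walk driven by the same relocation vectors as $Z$, the unwrapped version of $X^\mu$ in $\R^3$, except that every step of length $\ge d$ is replaced by a zero step. Then $Z$ and $Y$ coincide up to the first step of length $\ge d$, which I denote $m^{\mathrm{long}}$. I then need two estimates.

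\emph{Rare long steps.} A single step has length $\ge d$ with probability $q=a\int_d^{\ell_{\max}}\ell^{-\mu}d\ell=\Theta(d^{1-\mu})$. Hence, for $m=c\,d^{\mu-1}$ with a small constant $c$, a union bound gives $\Pr(m^{\mathrm{long}}\le m)\le mq\le 1/4$.

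\emph{Slow truncated walk.} By the same computation as in Fact~\ref{fact:exp+var} and Theorem~\ref{thm:projection}, the projected step of $Y$ has second moment $\sigma_Y'^2=O(\int_1^{d}\ell^{2-\mu}d\ell)$. This equals $O(d^{3-\mu})$ for $\mu\in(2,3)$ and $O(\log d)$ for $\mu=3$. The first item of Claim~\ref{claim:max-length}, together with Markov's inequality, shows that $\max_{s\le m}\norm{Y(s)-Y(0)}\ge d$ has probability at most $O(\sqrt{m}\,\sigma_Y'/d)$. With $m=c\,d^{\mu-1}$ this is $O(\sqrt{c})\le 1/4$ for $c$ small when $\mu<3$. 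For $\mu=3$ I choose instead $m=c\,d^2/\log d$, and the long-step bound still holds since $q=\Theta(d^{-2})$.

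Combining the two estimates, with probability at least $1/2$ the walk $Z$ (and hence $X^\mu$, whose torus distance is at most the $\R^3$ distance) has not reached distance $d$ within $m$ steps. Therefore $\E(m_d)\ge m/2$, which is $\Omega(d^{\mu-1})$, respectively $\Omega(d^2/\log d)$. Finally, $m_d$ is the hitting step of the set of points at distance $\ge d$, so Claim~\ref{claim:timeVSmoves} gives $T_d=\E(m_d)\,\tau$. Since $\tau=\Theta(1)$ for $\mu\in(2,3]$ by Fact~\ref{fact:exp+var}, this concludes.

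The main obstacle I expect is bookkeeping rather than anything deep. The first item of Claim~\ref{claim:max-length} relies on Kolmogorov's inequality for zero-mean independent increments. The truncated steps are still symmetric and i.i.d., so this applies. The remaining points are to verify the variance integral of the truncated projection, where the lower cutoff at $1$ contributes only $O(1)$, and to handle the regime where $d$ is of constant order, in which the bound is trivially $\Omega(1)$.
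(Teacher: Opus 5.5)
Your proposal is correct and matches the argument the paper defers to (Claim~13 of \cite{GuinardKorman2021}). Steps of length $\ge d$ occur with probability $O(d^{1-\mu})$, so over $m$ of order $d^{\mu-1}$ (resp.\ $d^2/\log d$) steps the walk coincides with its truncation at $d$. That truncated walk has projected variance $O(d^{3-\mu})$ (resp.\ $O(\log d)$), and feeding this into Claim~\ref{claim:max-length} and Claim~\ref{claim:timeVSmoves} gives the bound. Your explicit coupling with the zero-step walk $Y$ simply makes rigorous the step the paper leaves implicit.

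The only caveat concerns the statement, not your argument. For $\mu=3$ the bound must be read for $d\ge 2$, or with $\log d$ replaced by $1+\log d$. Since $d^2/\log d\to\infty$ as $d\to 1^+$, the constant-$d$ regime is not ``trivially $\Omega(1)$'' in that case.
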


The proof for the claim in the intermittent setting is identical to that of Claim~13 in the Supplementary Materials of \cite{GuinardKorman2021}, so we omit it. Notably, the argument relies on Theorem~\ref{thm:projection}, even though that theorem is stated for $\mathbb{R}$ while the present claim concerns the torus $\torus$.
This link is made precise by Claim~\ref{claim:max-length}, which provides a lower bound on the time required to reach distance~$d$ in terms of the mean step length and the projected variance of a single step length.

\begin{theorem}\label{th: lower_bound_large_mu_surface}
Let $\mu\in(2,3]$ and let $\Delta_B$ be the largest face area of the bounding box $B=\texttt{Box}(S)$ containing a compact target $S$ with $\delta \in [\frac{1}{2},1]$ such that $x_3 = \Delta_B^{\delta}$ and $x_2 = \Delta_B^{1-\delta}$. Write $\mu=2+\epsilon$. The detection time of the intermittent L\'evy walk $X^\mu$ with respect to a target $S$ is
\begin{align*}
t_{detect}^{X^\mu}(S)=
 \begin{cases}\Omega\left(n\Delta_B^{\varepsilon(1-\delta) - 1}\right) \text{ if } \mu \in (2,3) \\
\Omega\left(\frac{n}{\Delta_B^{\delta}\log \Delta_B}\right) \text{ if } \mu = 3\end{cases}.
\end{align*}
\end{theorem}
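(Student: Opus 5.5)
The plan is to combine Theorem~\ref{th: lower_bound_delta} with the exit-time estimate of Claim~\ref{claim:distance_large_mu}. Theorem~\ref{th: lower_bound_delta} holds for any search process, and in particular for $X^\mu$. It gives
\[
t_{detect}^{X^\mu}(S)=\Omega\left(\frac{n}{\Delta_B^{2-\delta}}\,T_{\Delta_B^{1-\delta}}\right),
\]
where $T_{\Delta_B^{1-\delta}}$ is the expected time for the intermittent walk $X^\mu$ to place a step endpoint at distance at least $d:=\Delta_B^{1-\delta}=x_2$ from its current position. What remains is to substitute the bound on $T_d$ and simplify the exponents.

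For $\mu=2+\varepsilon\in(2,3)$, Claim~\ref{claim:distance_large_mu} gives $T_d=\Omega(d^{\mu-1})=\Omega(\Delta_B^{(1-\delta)(1+\varepsilon)})$. Plugging this in, the exponent of $\Delta_B$ becomes
\[
(1-\delta)(1+\varepsilon)-(2-\delta)=\varepsilon(1-\delta)-1,
\]
which yields $\Omega(n\Delta_B^{\varepsilon(1-\delta)-1})$.

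For $\mu=3$, the claim gives $T_d=\Omega(d^2/\log d)$, so $T_d=\Omega(\Delta_B^{2(1-\delta)}/\log\Delta_B)$, using $\log d\le\log\Delta_B$. The exponent is then $2-2\delta-2+\delta=-\delta$, which gives $\Omega(n/(\Delta_B^{\delta}\log\Delta_B))$.

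Three side conditions need checking before the claim can be applied.
\begin{itemize}
\item \emph{Lower bound on $d$.} Claim~\ref{claim:distance_large_mu} needs $d\ge 1$. This holds because $x_2\ge x_1\ge 2$ (the detection radius is at least $1$).
\item \emph{Upper bound on $d$.} We need $d<\sqrt[3]{n}/2$, which holds when the target fits in the torus with room for the tiling. If instead $x_2$ is of order $n^{1/3}$, the tiling has only $O(1)$ boxes. In that regime I would fall back on the universal lower bound of Theorem~\ref{thm:universal-lower-box}, which differs from the claimed bound by at most a constant, or simply restrict to targets small enough that the tiling yields $\Omega(1)$ boxes.
\item \emph{Degenerate $\Delta_B$.} When $\Delta_B=O(1)$, the $\log\Delta_B$ factor should be read as $\log(\Delta_B+2)$.
\end{itemize}

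The main obstacle I expect is justifying the phrase ``moving from one box to the next requires time $T_{x_2}$'' in Theorem~\ref{th: lower_bound_delta} for a random walk, where each box transition restarts from an arbitrary position. I would handle this with the strong Markov property. Each time the walk enters the enlarged copy of a new box, the time until it next enters a different box (at separation $x_2$) stochastically dominates the exit time from a ball of radius $x_2$ around the entry point. By translation invariance, that exit time has expectation $T_{x_2}$.

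Summing over the $\Omega(M)$ box visits required before reaching the target box, and using linearity of expectation, gives the product form in Theorem~\ref{th: lower_bound_delta}. Combining this with the exponent computations above completes the proof.
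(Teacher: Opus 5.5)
Your proposal is correct and matches the paper's proof, which likewise plugs the exit-time bound of Claim~\ref{claim:distance_large_mu} at $d=\Delta_B^{1-\delta}$ into Theorem~\ref{th: lower_bound_delta} and simplifies the exponents as you do (for $\mu=3$ the paper keeps $\log d=(1-\delta)\log\Delta_B$, which only strengthens the intermediate bound). One aside is wrong, though: when $x_2=\Theta(n^{1/3})$, the universal bound of Theorem~\ref{thm:universal-lower-box} is smaller than the claimed bound by the polynomial factor $x_2^{\varepsilon}$ (by $x_2/\log\Delta_B$ at $\mu=3$), so it is not a fallback; in that regime the claimed bound can even fail (e.g.\ $X^\mu$ finds a sparse lattice of radius-$1$ balls filling the torus in $O(1)$ expected time), so restricting to targets small relative to the torus, as the paper does implicitly, is necessary rather than optional.
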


\begin{proof}
     The Theorem follows by combining the lower bound $\Omega\left(\frac{n}{\Delta_B^{2-\delta}}T_{\Delta_B^{1-\delta}}\right)$ given by Theorem \ref{th: lower_bound_delta} with the lower bound on $T_d$ given by Claim \ref{claim:distance_large_mu}. Specifically, for the case where $\mu \in (2,3)$, we obtain:
\begin{align*}
\Omega\left(\frac{n}{\Delta_B^{2-\delta}}T_{\Delta_B^{1-\delta}}\right) = \Omega\left(\frac{n}{\Delta_B^{2-\delta}}\Delta_B^{(1-\delta)(\epsilon + 1)}\right) = \Omega(n\Delta_B^{\varepsilon(1-\delta) - 1}),
\end{align*}
which matches the desired result. On the other hand, for $\mu = 3$, the expression becomes:
\begin{align*}
\Omega\left(\frac{n}{\Delta_B^{2-\delta}}T_{\Delta_B^{1-\delta}}\right) = \Omega\left(\frac{n \Delta_B^{2(1-\delta)}}{\Delta_B^{2-\delta}(1-\delta)\log \Delta_B}\right) = \Omega\left(\frac{n}{\Delta_B^{\delta}\log \Delta_B}\right).
\end{align*}
This completes the derivation for the second case and concludes the proof.
\end{proof}

Since the elongation of a disc or a ball is $\delta=1/2$, the above theorem implies the following.

\begin{corollary}
    \label{cor: lower_bound_large_mu_surface}
Let $\mu\in(2,3]$ and consider either a ball or a disc target $S$ of surface area $\Delta$. The detection time of the intermittent L\'evy walk $X^\mu$ with respect to $S$ is
\begin{align*}
t_{detect}^{X^\mu}(S)=
 \begin{cases}\Omega\left(n\Delta^{\varepsilon/2 - 1}\right) \text{ if } \mu \in (2,3) \\
\Omega\left(\frac{n}{\Delta^{1/2}\log \Delta}\right) \text{ if } \mu = 3\end{cases}.
\end{align*}
\end{corollary}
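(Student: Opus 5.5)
The corollary is a direct specialization of Theorem~\ref{th: lower_bound_large_mu_surface} to $\delta = 1/2$. The only real work is translating the theorem's parameter $\Delta_B$ into the target's surface area $\Delta$.

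\textbf{Step 1: the bounding box.} For a ball of radius $R$ (detectable radius, so $R \geq d \geq 1$), $\texttt{Box}(S)$ is the cube of side $2R$. Hence $x_2 = x_3 = 2R$ and $\Delta_B = 4R^2$. For a disc of radius $R$ and thickness $2d$, the box has sides $2d \leq 2R \leq 2R$ (up to additive $O(d)$ terms from thickening), so again $x_2 = x_3 = \Theta(R)$ and $\Delta_B = \Theta(R^2)$. In both cases $x_3 = x_2 = \Delta_B^{1/2}$, i.e. $\delta = 1/2$. If the normalization $x_3 = \Delta_B^\delta$ holds only up to a constant factor, I would note that the proof of Theorem~\ref{th: lower_bound_delta} only uses the side lengths up to constants: the tiling count and the spacing $x_2$ are unchanged in order. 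So the theorem applies with $\delta = 1/2$ with constants absorbed into the $\Omega$.

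\textbf{Step 2: relating $\Delta_B$ to $\Delta$.} For the ball, $\Delta = 4\pi R^2 = \pi \Delta_B$. For the disc, $\Delta = 2\pi R^2 + O(dR) = \Theta(R^2) = \Theta(\Delta_B)$. More generally, both shapes are convex, so SM, Theorem~\ref{thm:convex_apx_convex} gives $\Delta_B/6 \leq \Delta \leq 6\Delta_P \leq 6\Delta_B$. Thus $\Delta = \Theta(\Delta_B)$ in all cases.

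\textbf{Step 3: substitution.} Plugging $\delta = 1/2$ into Theorem~\ref{th: lower_bound_large_mu_surface} gives two cases:
\begin{itemize}
\item For $\mu \in (2,3)$: $\Omega\bigl(n \Delta_B^{\varepsilon/2 - 1}\bigr)$.
\item For $\mu = 3$: $\Omega\bigl(n/(\Delta_B^{1/2}\log \Delta_B)\bigr)$.
\end{itemize}
Since $\Delta_B = \Theta(\Delta)$, a power $\Delta_B^{c}$ changes only by a constant factor when replaced with $\Delta^{c}$. Likewise $\log \Delta_B = \log \Delta + O(1) = \Theta(\log\Delta)$ for $\Delta$ bounded away from $1$, which holds since $\Delta \geq 4\pi d^2 \geq 4\pi$. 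This yields exactly the two stated bounds.

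\textbf{Main obstacle.} The one delicate point is Step 1 for the disc. The detectable region has rounded edges and thickness $2d$, so one must check that the minimal-surface bounding box is the expected one with $x_2 \asymp x_3 \asymp R$. The minimal-surface box containing a disc of radius $R$ has its two large sides at least $2R$ in the disc's plane, and the square of side $2R$ achieves this. So $x_2 = x_3 = 2R + O(d)$, which suffices.
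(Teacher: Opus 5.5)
Your proposal is correct and follows the paper's route: the paper obtains this corollary by setting $\delta = 1/2$ in Theorem~\ref{th: lower_bound_large_mu_surface}. Your extra checks make explicit what the paper leaves implicit. These are that the bounding box of a ball or thickened disc has $x_2 = \Theta(x_3)$, so $\delta = 1/2$ up to constants. They also include $\Delta = \Theta(\Delta_B)$, shown directly or via Theorem~\ref{thm:convex_apx_convex}, so $\Delta_B$ can be replaced by $\Delta$, including inside the logarithm since $\Delta$ is bounded away from $1$.
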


\subsection{The $\delta$ of a line target}\label{sec:line}
When the raw target is a line segment of length \(L\), the corresponding target
region is a cylindrical shape with radius equal to the detection radius
\(d \ge 1\). For this class of targets, the area of the bounding box
\(B = \texttt{Box}(S)\) satisfies \(\Delta_B = x_3 x_2\), where the largest side is
\(x_3 = L + 2d\) and the second largest side is \(x_2 = 2d\). By the definition of \(\delta\),
it follows that
\[
\Delta_B^\delta = x_3 = \frac{\Delta_B}{2d}.
\]
Taking the logarithm with base \(\Delta_B\) on both sides yields
\[
\delta = 1 - \frac{\log(2d)}{\log(\Delta_B)}.
\]
Since \(\Delta_B \ge (2d)^2\), it follows that the parameter \(\delta\) is always
bounded below by \(\tfrac{1}{2}\).

\section{The Detection time of the Cauchy Walk}\label{sec:detection-time-cauchy}
In this section, we aim to prove Theorem \ref{thm:upper-bound-cauchy-box}, which states that the expected time required for the Cauchy walk to find a  target $S$ is $\bigO{n \log^3 n / \Delta_P}$. Here, $\Delta_P = \Delta_P(S)$ denotes the {\em projected surface area} of the target---defined as the maximum of the areas of the projections of $S$ onto each of the six faces of the minimum surface area bounding box $B = \texttt{Box}(S)$.
We begin with the following lemma, whose proof is identical to the proof of Lemma 19 in \cite{GuinardKorman2021}, and is therefore omitted from this manuscript.

\begin{lemma}\label{detect_upper_bound}
    Consider a random walk process $Z$ on $\mathbb{R}^3$ and its projection $X$ on the torus $\mathbb{T}_n$ and denote by $Z^{z_0}$ the process $Z$ starting at $Z(0)=z_0$. Let $S \subset \torus$, then for any integer $m_0$ we have
\begin{align} \label{eq_detect_upper_bound}
    \mathbf{E}(m_{detect}^X(S)) = \bigO{m_0 \cdot \frac{ \sup_{z_0 \in S} \sum_{m=0}^{m_0} \Pr(Z^{z_0}(m) \in S) }{ \sum_{m=m_0}^{2m_0} \Pr(Z(m) \in S) }}.
\end{align}
\end{lemma}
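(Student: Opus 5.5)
The plan is a second-moment (renewal) argument over windows of $2m_0$ steps, followed by a geometric-trials restart. Fix a starting point $x$ and look at the window of steps $\{m_0,\dots,2m_0\}$. Let $N$ be the number of indices $m$ in this window with $X(m)\in S$. Since $X(m)\in S$ exactly when $Z(m)$ lies in a lift of $S$, first-moment linearity gives
\[
\E[N]=\sum_{m=m_0}^{2m_0}\Pr\bigl(Z(m)\in S\bigr),
\]
which is the denominator of \eqref{eq_detect_upper_bound}. Next, $\Pr(N\ge 1)=\E[N]/\E[N\mid N\ge 1]$, so it remains to bound the conditional expectation from above.

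For that I would use the strong Markov property at the first index $\sigma\in[m_0,2m_0]$ with $X(\sigma)\in S$. After $\sigma$, at most $m_0$ steps remain in the window, and the walk restarts at $z_0=X(\sigma)\in S$. The increments are i.i.d. and the dynamics are translation invariant, so the process after $\sigma$ is a copy of $Z^{z_0}$. Hence
\[
\E[N\mid N\ge 1]\le \sup_{z_0\in S}\sum_{m=0}^{m_0}\Pr\bigl(Z^{z_0}(m)\in S\bigr),
\]
which is the numerator. Combining the two bounds, the walk detects $S$ within one window of $2m_0$ steps with probability at least
\[
p:=\frac{\sum_{m=m_0}^{2m_0}\Pr(Z(m)\in S)}{\sup_{z_0\in S}\sum_{m=0}^{m_0}\Pr(Z^{z_0}(m)\in S)}.
\]

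I would then chain windows: partition time into consecutive blocks of $2m_0$ steps, and use the Markov property at each block boundary. If every block succeeds with probability at least $p$ regardless of the state at its start, then $m_{detect}$ is stochastically dominated by $2m_0$ times a geometric variable with parameter $p$. This gives $\E[m_{detect}^X(S)]\le 2m_0/p$, which is \eqref{eq_detect_upper_bound}.

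The main obstacle is this uniformity over the starting point of each block. After a failed block the walker's position is no longer uniform, so the denominator must hold for the block's actual starting point. I would handle it in one of two ways.

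\begin{itemize}
\item[(i)] Interpret $\Pr(Z(m)\in S)$ as the worst case over the starting point, i.e.\ $\inf_{x}\Pr(Z^{x}(m)\in S)$. The argument above then goes through verbatim, because the numerator bound already holds for every starting point.
\item[(ii)] Keep the uniform start, and exploit that $m_0$ steps of burn-in inside each block decouple the block from its initial position. Concretely, I would show that $\Pr(Z^{x}(m)\in S)$ is comparable, up to constants, for all $x$ once $m\ge m_0$. This is where the radial monotonicity of Corollary~\ref{cor:monotonicity} would enter in the later applications.
\end{itemize}

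I would try (i) first, since it is exactly the 3D transcription of Lemma~19 of \cite{GuinardKorman2021}, and only then check that the bounds used when applying the lemma are indeed uniform in the start.
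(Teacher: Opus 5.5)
Your proposal is correct and follows essentially the argument the paper relies on. The paper omits the proof and defers to Lemma~19 of \cite{GuinardKorman2021}, which likewise counts visits in the window $[m_0,2m_0]$, bounds $\E[N\mid N\ge 1]$ by the strong Markov property at the first visit, and chains blocks of $2m_0$ steps into a geometric number of trials. Your reading~(i), with the denominator taken uniformly over the starting point, is the right one: the literal statement with $Z$ started at $0$ fails for small $m_0$ and $S$ adjacent to the origin. Reading~(i) is also exactly what the application supplies, since Lemma~\ref{lower_bound_prob} with $m>m_0$ exceeding the torus diameter gives the same bound from every start, so option~(ii) is unnecessary.
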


\subsection{Technical lemmas and claims}

Next, we present two lemmas, analogous to Lemmas 20 and 21 in \cite{GuinardKorman2021}, whose proofs require only minor modifications. The full proofs are deferred to Section \ref{sec:auxilary claims}.

\begin{lemma} \label{lower_bound_prob}
    For any constant $\alpha > 0$, there exist a constant $c>0$ such that for any integer $m \in [1, \alpha \lmax]$, and any $x\in \R^3$, with $\norm{x} \leq m$,
    \begin{align}
        p^{Z(m)}(x) \geq \frac{c}{m^3}.
    \end{align}
\end{lemma}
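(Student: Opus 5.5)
We need to show that for any constant $\alpha>0$ there is $c>0$ with $p^{Z(m)}(x)\ge c/m^3$ whenever $1\le m\le \alpha\lmax$ and $\norm{x}\le m$. The idea is that after $m$ Cauchy steps, a constant fraction of the mass of $Z(m)$ sits in a ball of radius of order $m$ around the origin. Radial monotonicity (Corollary~\ref{cor:monotonicity}) then lets us transfer this mass into a pointwise density bound at radius $m$.

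\textbf{Step 1: reduce to a probability bound.} By Corollary~\ref{cor:monotonicity}, $p^{Z(m)}$ is radial and non-increasing. For $\norm{x}\le m$ we therefore have $p^{Z(m)}(x)\ge p^{Z(m)}(y)$ for every $y$ with $\norm{y}=m$. For any $K\ge 1$ the shell $\{m\le\norm{y}\le Km\}$ has volume $O(K^3m^3)$, and on it the density is at most its value at radius $m$. Hence
\[
\Pr(\norm{Z(m)}\le Km)\le \tfrac{4\pi}{3}m^3\,p^{Z(m)}(x)\ +\ \tfrac{4\pi}{3}K^3m^3\,p^{Z(m)}|_{\norm{\cdot}=m}.
\]
It then suffices to show that $\Pr(m\le\norm{Z(m)}\le Km)\ge c_0$ for some constants $K,c_0>0$. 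Monotonicity then gives $p^{Z(m)}(x)\ge p^{Z(m)}|_{\norm{\cdot}=m}\ge c_0/(\tfrac{4\pi}{3}K^3m^3)$.

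\textbf{Step 2: upper tail, $\Pr(\norm{Z(m)}> Km)\le 1/4$.} I would truncate each step at length $m$. The probability that some step among the first $m$ exceeds length $Km/2$ is at most $m\cdot O(1/(Km))=O(1/K)$, since for $\mu=2$ the tail is $\Pr(\ell>t)=\Theta(1/t)$. For the truncated steps, the second moment is $O(m)$ per step, so the sum has second moment $O(m^2)$ (the mean is zero by symmetry). Chebyshev then bounds the probability of exceeding $Km/2$ by $O(1/K^2)$. Choosing $K$ large makes both contributions at most $1/8$.

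\textbf{Step 3: lower tail, $\Pr(\norm{Z(m)}<m)\le 1-2c_0$.} This is the main obstacle, because a Cauchy sum could in principle concentrate. The constraint $m\le\alpha\lmax$ matters here: jumps of length $\Theta(m)$ still have probability $\Theta(1/m)$ each, so they occur $\Theta(1)$ times among $m$ steps.

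My first approach is to condition on the event $E$ that exactly one step, say step $j$, has length in $[3m,\,4m]$ (assuming $4m\le\lmax$; otherwise rescale the constants using $\alpha$), while all the other steps satisfy the truncation of Step~2 with partial sum of norm at most $Km/2$. One would like to pick the thresholds so that $\Pr(E)=\Omega(1)$; for $m\ll\lmax$ this amounts to a Poisson-like count. On $E$, the direction of step $j$ is uniform and independent of the rest. Given the rest $R$ with $\norm{R}\le Km/2$, the sum $R+V(j)$ has norm at least $m$ unless $V(j)$ lands in a ball of radius $m$ around $-R$. The probability of that, over uniform directions with length in $[3m,4m]$, is at most a constant strictly less than $1$.

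A cleaner alternative, avoiding the need to make the jump large compared with $K$, is to apply Corollary~\ref{cor:monotonicity} to $Z(m)$ conditioned on $E$. The conditional law of $V(j)$ is radial, though not monotone. Instead I would use the crude bound that the density of $V(j)$ restricted to lengths $[3m,4m]$ is $O(1/m^3)$ pointwise. Then $\Pr(\norm{R+V(j)}<m\mid R)=O(m^3\cdot m^{-3})$, and this constant can be made small by narrowing to a length range where the conditional density is at most $\eta/m^3$. Combining this with Step~2 yields the window probability $c_0$.

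In the remaining edge case, where $m$ is comparable to $\lmax$, the jump lengths $[3m,4m]$ may exceed $\lmax$. Here I would replace them by $[\lmax/2,\lmax]$ and absorb the constant ratio $\alpha$ into $c$. For small $m$ (bounded by a constant), the claim follows directly, because the density of $Z(m)$ is positive and continuous on bounded balls.
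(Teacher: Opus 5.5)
Your Steps~1 and~2 follow the paper's route: radial monotonicity (Corollary~\ref{cor:monotonicity}) plus confinement, which is essentially Claim~\ref{claim:X-m-less-than-cm}. Your single-jump escape in Step~3 is a valid replacement for Claim~\ref{claim:X-far}, but only as long as $4m\le\lmax$. In that range, condition on the first step whose length lies in $[3m,4m]$. That step has density $\frac{3m}{\pi}\norm{v}^{-4}\le\frac{1}{27\pi m^3}$, so it lands in any ball of radius $m$ with probability at most $4/81$, whatever the other steps are. Such a step exists with probability at least $1-e^{-a/12}$. This already covers bounded $m$ uniformly in $\lmax$, so your continuity remark is not needed.

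\textbf{The gap is the range where $m$ is comparable to or larger than $\lmax$, which you treat as an edge case.} There the one-jump mechanism fails. Every step is shorter than $\lmax$, so once $m\ge\lmax$ and $\norm{R}\le m-\lmax$, the event $\norm{R+V(j)}<m$ is certain. Quantitatively, a step conditioned on length in $[\lmax/2,\lmax)$ has density up to $\frac{4}{\pi\lmax^3}$. Multiplying by $\frac{4\pi}{3}m^3$ gives $\frac{16m^3}{3\lmax^3}$, which exceeds $1$ as soon as $m>(3/16)^{1/3}\lmax\approx 0.57\lmax$. Saying the ratio $\alpha$ is ``absorbed into $c$'' does not address this. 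Nor is the case marginal: in the proof of Theorem~\ref{thm:upper-bound-cauchy-box} the lemma is used only for $m>m_0\ge\sqrt[3]{n}=2\lmax$, exactly the range your argument does not reach.

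What is missing is an escape estimate that uses several long steps. Here is one way, for $m\in(\lmax/4,\alpha\lmax]$.
\begin{itemize}
\item The number $N$ of steps with length in $[\lmax/2,\lmax)$ is $\mathrm{Bin}(m,a/\lmax)$ with mean in $(a/4,a\alpha]$. Hence $\Pr(N\ge k)\ge c(\alpha)>0$ for $k=\lceil 4\alpha\rceil$.
\item Given the index set of these steps, their directions are i.i.d.\ uniform. With probability $4^{-k}$ the first $k$ of them point within angle $\pi/3$ of $e_1$, and then their sum $W$ has $\norm{W}\ge k\lmax/4\ge m$.
\item The sum $R$ of the remaining steps is conditionally independent of $W$ and symmetric. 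From $\norm{W+R}^2+\norm{W-R}^2\ge 2\norm{W}^2$ you get $\Pr(\norm{W+R}\ge\norm{W})\ge 1/2$.
\item Therefore $\Pr(\norm{Z(m)}\ge m)\ge\frac12 4^{-k}c(\alpha)$.
\end{itemize}
In this range Step~2 needs no truncation, since $\E\norm{Z(m)}^2\le ma\lmax<4am^2$. Alternatively, prove your bound on the ball of radius $2m$ whenever $8m\le\lmax$, and convolve $O(\alpha)$ such blocks.

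Two smaller fixes are also needed.
\begin{itemize}
\item The displayed inequality in Step~1 has the inner-ball term the wrong way round: the density on $B(0,m)$ is at least $p^{Z(m)}(x)$, not at most. The bound you actually use, $\Pr(m\le\norm{Z(m)}\le Km)\le\frac{4\pi}{3}K^3m^3\,p^{Z(m)}|_{\norm{\cdot}=m}$, is correct.
\item Step~2 must give $\Pr(\norm{Z(m)}>Km)\le c_0$, not just $1/4$. So $K$ has to be chosen after $c_0$, which is unproblematic with your second variant of Step~3 since it does not involve $K$. The truncation must also be at $Km/2$ rather than $m$: this gives second moment $O(Km)$ per step and a Chebyshev bound $O(1/K)$.
\end{itemize}
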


\begin{lemma} \label{ball_upper_bound_log}
    For any constant $\alpha > 0$, there exists a constant $c' > 0$ such that, for any integer $m \in [2, \alpha \lmax]$, and for any $x \in \mathbb{R}^3$, we have
    \begin{align}
        \Pr(Z(m) \in B(x)) \leq \frac{c' \log^3 m}{m^3}.
    \end{align}
\end{lemma}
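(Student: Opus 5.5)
The plan is to reduce to the ball centred at the origin and then bound $\Pr(Z(m)\in B(0))$ with a smoothed Fourier inversion. This should give the bound $O(1/m^3)$, which is even stronger than the claimed $O(\log^3 m/m^3)$. Throughout, $Z=Z^{\mathrm{cauchy}}$ starts at $Z(0)=0$ and $B(x)$ denotes the unit ball around $x$.

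\textbf{Reduction to the origin.} By Corollary~\ref{cor:monotonicity}, the density $p^{Z(m)}$ is radial and non-increasing. Hence any unit ball is dominated by the one centred at the origin: the map $y\mapsto y-x$ sends $B(x)$ onto $B(0)$, and one checks via the radial rearrangement that $\Pr(Z(m)\in B(x))\le \Pr(Z(m)\in B(0))$. This is the standard fact that translating a ball towards the mode of a radially decreasing density can only increase its mass. So it suffices to bound $\Pr(Z(m)\in B(0))$.

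\textbf{Fourier bound.} Fix a function $h\ge \mathbf{1}_{B(0)}$ whose Fourier transform $\hat h$ is non-negative, bounded, and supported in $\{\|\xi\|\le R\}$; for example, a suitably scaled square of the inverse transform of a compactly supported bump. Let $\varphi(\xi)=\E[e^{i\xi\cdot V}]$ be the step characteristic function. Since $V$ is isotropic, $\varphi(\xi)=\E\bigl[\sin(s\ell)/(s\ell)\bigr]$ with $s=\|\xi\|$, and $\varphi$ is real. Then
\[
\Pr(Z(m)\in B(0))\le \E[h(Z(m))]=(2\pi)^{-3}\int_{\|\xi\|\le R}\hat h(\xi)\,\varphi(\xi)^m\,d\xi\le C\int_{\|\xi\|\le R}|\varphi(\xi)|^m\,d\xi .
\]
I would split this integral into three ranges of $s=\|\xi\|$.

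\emph{Range $s\le 2/\lmax$.} Bound $|\varphi|\le 1$. This region has volume $O(\lmax^{-3})=O(\alpha^3/m^3)$, using $m\le\alpha\lmax$.

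\emph{Range $2/\lmax\le s\le s_0$, for a small constant $s_0$.} Restrict the expectation to $\ell\in[1/s,2/s]\subseteq[1,\lmax]$. On this interval $1-\sin(s\ell)/(s\ell)\ge 1-\sin 1$, and for the Cauchy density its probability is $a\int_{1/s}^{2/s}\ell^{-2}\,d\ell=as/2$. Therefore $1-\varphi(\xi)\ge c_1 s$. For $\ell\le 1/s$ the integrand is non-negative, so it does not hurt. Also $\varphi\ge -1+c$ on this range, since for small $s$ most of the mass sits at $\ell\le 1$. Hence $|\varphi|^m\le e^{-c_1 ms}$, and $\int_{\R^3}e^{-c_1 m\|\xi\|}\,d\xi=O(1/m^3)$.

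\emph{Range $s_0\le s\le R$.} The law of $V$ is absolutely continuous (its density behaves like $\|v\|^{-2}$ near $0$, which is integrable), so $|\varphi(\xi)|<1$ for every $\xi\ne0$. Since $\varphi$ is continuous and this annulus is compact, $\sup|\varphi|\le 1-c_2$ there. The contribution is therefore $O((1-c_2)^m)=O(m^{-3})$.

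\textbf{Uniformity and the main obstacle.} The constants $c_1,c_2$ must be uniform in $\lmax$ (that is, in $n$), because the law of $V$ depends on $\lmax$. For $c_1$ this is explicit, since $a=\Theta(1)$ uniformly. For $c_2$ I expect the main difficulty. I would handle it by comparing with the untruncated step law: the truncated mass beyond $\lmax$ is $O(1/\lmax)$, so it perturbs $\varphi$ by $O(1/\lmax)$, while the untruncated characteristic function is bounded away from $1$ in modulus on $[s_0,R]$. The requirement $m\ge2$ only matters to make $\log^3 m$ positive.

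If the Fourier route runs into trouble, the fallback is the elementary scheme of Lemma~21 in \cite{GuinardKorman2021}. Condition on the number of steps longer than $m/\log m$. If there are none, bound the density via the truncated variance together with the one-dimensional projections (Theorem~\ref{thm:projection}); otherwise use the density bound of a single long step. This is presumably where the $\log^3 m$ factor originates.
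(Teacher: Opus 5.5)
Your proof is correct. After the reduction to $B(0)$ via Corollary~\ref{cor:monotonicity}, which you share with the paper, it takes a genuinely different route.

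The paper argues probabilistically. By Claim~\ref{claim:X-far}, with probability $1-O(m^{-3})$ the walk reaches distance $cm/\log m$ at some step $s\le m$. Restarting there (Markov property) and using the universal density bound $p^{Z(s)}(y)\le 3/(4\pi\norm{y}^3)$ gives $\Pr(Z(m)\in B(0))\le (cm/\log m-1)^{-3}+O(m^{-3})$. So the $\log^3 m$ is exactly the price of pushing the no-excursion probability down to $m^{-3}$; it does not come from counting long steps, as in your fallback.

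Your smoothed Fourier inversion instead uses the explicit isotropic characteristic function $\varphi=\E[\sin(s\ell)/(s\ell)]$. It needs no excursion estimate, so it does not depend on the omitted proof of Claim~\ref{claim:X-far}, and it yields the sharp $O(m^{-3})$. The paper's route, in turn, is elementary and reuses the 2D template unchanged.

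The uniformity in $\lmax$ that you flag as the main obstacle is in fact easy. Since $\sin x/x\ge -0.22$ for all $x$, you get $\varphi\ge -0.22$ everywhere. For $s\in[s_0,R]$ you have $1-\varphi(\xi)\ge a\int_{1/2}^{1}\bigl(1-\sin(s\ell)/(s\ell)\bigr)\,d\ell$ with $a\ge 1/2$, and this is bounded below by a constant depending only on $s_0$ and $R$. No comparison with the untruncated law is needed. In the middle range you drop all $\ell\notin[1/s,2/s]$; this is fine because $1-\sin x/x\ge 0$ for every $x$, not just for $\ell\le 1/s$.

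Your sharper bound also pays off downstream. In the proof of Theorem~\ref{thm:upper-bound-cauchy-box} the near contribution becomes $O(r^2/m^3)$, so taking $r=m$ gives $\Pr(Z^{z}(m)\in B(S^\star))=O(1/m)$. This removes one logarithm and yields $O(n\log^2 n/\Delta_P)$.
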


\subsection{Optimal hitting time}

Now we can use all the claims and lemmas above to prove the following theorem.

\begin{theorem} \label{thm:upper-bound-cauchy-box}
    Consider the Cauchy walk $X^\text{cauchy}$ on the torus. The hitting time of the walk with respect to a compact target $S$ of projected surface area $\Delta_P$ is
    \begin{align}
        \tdetect = \bigO{\frac{n\log^3 n}{\Delta_P}}.
    \end{align}
\end{theorem}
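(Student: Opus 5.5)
We will prove the theorem by applying Lemma~\ref{detect_upper_bound} to a well-chosen subset $S^\star\subseteq S$. Since $m_{detect}^X(S)\le m_{detect}^X(S^\star)$, it suffices to bound the hitting step of $S^\star$. We then convert steps to time via Claim~\ref{claim:timeVSmoves}, using $\tau=\Theta(\log n)$ for $\mu=2$ (Fact~\ref{fact:exp+var}). Hence the target bound on the expected number of steps is $\bigO{n\log^2 n/\Delta_P}$.

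\emph{Construction of $S^\star$.} Let $F$ be the face of $\texttt{Box}(S)$ onto which the projection $P$ of $S$ has area $\Delta_P$. Tile the plane of $F$ by unit squares. At least $\Theta(\Delta_P)$ of these squares meet $P$ substantially; since $S$ contains a $d$-neighbourhood of the raw target with $d\ge1$, one can take, for instance, the squares whose centre lies in $P$, and thicken if necessary. For each such square we choose one point of $S$ projecting into it and a unit ball $B(y_i)\subseteq S$ around a nearby point, which exists by the $d$-thickening. Keeping a constant-fraction subfamily whose squares are pairwise at distance $\ge 3$ makes the balls disjoint. Call the union $S^\star$, with $k=\Theta(\Delta_P)$ balls. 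The key geometric property is that distinct balls project to distinct, separated unit squares. Therefore any ball of radius $r\ge1$ in $\R^3$, whose projection is a disc of radius $r$, meets at most $\bigO{r^2}$ of the balls. The same bound holds in the torus for $r\le \lmax$.

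\emph{Numerator and denominator.} Take $m_0=\Theta(\lmax)=\Theta(n^{1/3})$, with $m_0,2m_0\le\alpha\lmax$.

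For the denominator, the walk starts uniformly and the torus process stays uniform. Hence $\Pr(X(m)\in S^\star)=\mathrm{Vol}(S^\star)/n=\Theta(\Delta_P/n)$ for every $m$, and the sum over $m\in[m_0,2m_0]$ is $\Theta(m_0\Delta_P/n)$.

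For the numerator, fix $z_0\in S^\star$ and split by step: $m=0$ contributes $\bigO{1}$, and for $m\ge2$ we bound $\Pr(Z^{z_0}(m)\in S^\star)$. Group the balls by dyadic distance shells $\{2^j\le\norm{y_i-z_0}<2^{j+1}\}$ in the unfolded lattice of copies in $\R^3$. For $m\le m_0\le\lmax$ the walk's displacement rarely exceeds order $m\log m$, and copies at distance $\gtrsim n^{1/3}$ can be handled by the same shell count, each shell containing $\bigO{4^j}$ balls.

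For shells with $2^j\lesssim m$, Lemma~\ref{ball_upper_bound_log} gives each ball probability $\bigO{\log^3 m/m^3}$, so the total is $\bigO{m^2\cdot\log^3 m/m^3}=\bigO{\log^3 m/m}$.

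For shells with $2^j\gg m$, we use radial monotonicity (Corollary~\ref{cor:monotonicity}): the density is at most $\frac{3}{4\pi}(2^j)^{-3}$. But $2^j\gg m$ steps away requires a long jump, so a sharper tail bound of the form $p^{Z(m)}(x)\lesssim m/\norm{x}^{4}$ is needed. This follows from the Cauchy step tail $\Pr(\norm V>r)\sim 1/r$ together with the fact that one step dominates a large displacement. Summing $4^j\cdot m\,2^{-4j}$ over $2^j\ge m$ gives $\bigO{1/m}$.

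Summing over $m\le m_0$ then gives a numerator of $\bigO{\log^4 n}$. This is one logarithm too many.

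\emph{Main obstacle.} The hard step is sharpening the near-range estimate so that $\sum_{m\le m_0}\Pr(Z^{z_0}(m)\in S^\star)=\bigO{\log^2 n}$ rather than $\log^4 n$. Lemma~\ref{ball_upper_bound_log} is the crude pointwise bound. The log factors there arise because the typical displacement after $m$ Cauchy steps is $\Theta(m)$ only up to logarithms. What I would try first is to use Lemma~\ref{ball_upper_bound_log} only for balls at distance $\lesssim m/\log m$, and the $m^{-3}$-order density bound (from the Lemma~\ref{lower_bound_prob}-type scaling with monotonicity) for the bulk. That should yield $\bigO{\log m/m}$ per step and hence $\bigO{\log^2 n}$ in total.

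With a numerator of $\bigO{\log^2 n}$, Lemma~\ref{detect_upper_bound} gives
$\E[m_{detect}]=\bigO{m_0\cdot\log^2 n\cdot n/(m_0\Delta_P)}=\bigO{n\log^2 n/\Delta_P}$. Multiplying by $\tau=\Theta(\log n)$ yields the claimed $\bigO{n\log^3 n/\Delta_P}$.
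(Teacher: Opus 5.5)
Your overall architecture matches the paper's: a packing $S^\star$ of $\Theta(\Delta_P)$ unit balls inside $S$ with separated projections, Lemma~\ref{detect_upper_bound} with $m_0=\Theta(n^{1/3})$, then multiplication by $\tau=\bigO{\log n}$. But the decisive step is left open. As written, your numerator is $\bigO{\log^4 n}$, which only yields $\bigO{n\log^5 n/\Delta_P}$. The repair you propose needs an $\bigO{m^{-3}}$ upper bound on the density of $Z(m)$ at distances between $m/\log m$ and $m$, and no such bound is available. In that range monotonicity gives only $\norm{x}^{-3}$, which is as large as $\log^3 m/m^3$ near the cutoff. Lemma~\ref{lower_bound_prob} is a lower bound, and an $\bigO{m^{-3}}$ upper bound would be a local-limit estimate.

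The missing observation is that beyond the cutoff no sharp estimate is needed. Split at $r=m/\log m$:
\begin{itemize}
\item The $\bigO{r^2}$ balls within distance $r$ of the start contribute $\bigO{r^2\log^3 m/m^3}=\bigO{\log m/m}$ by Lemma~\ref{ball_upper_bound_log}.
\item For all remaining balls, the crude bound $p^{Z^z(m)}(y)\le \frac{3}{4\pi\norm{y-z}^3}$ from Corollary~\ref{cor:monotonicity} suffices. The projected separation leaves only $\bigO{k}$ balls at distance in $[k,k+1)$, so their total is $\sum_{k>r}\bigO{k\cdot k^{-3}}=\bigO{1/r}=\bigO{\log m/m}$. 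In your notation, $\sum_{2^j\ge r}4^j2^{-3j}=\bigO{1/r}$.
\end{itemize}
Summing $\bigO{\log m/m}$ over $m\le m_0$ gives $\bigO{\log^2 n}$. This is exactly the paper's argument. It also shows your $m/\norm{x}^4$ tail bound is superfluous: monotonicity alone already gives $\bigO{1/m}$ for the shells with $2^j\ge m$.

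The denominator also needs fixing. Stationarity gives $\Pr(X(m)\in S^\star)=\operatorname{Vol}(S^\star)/n$ only on average over the starting point. Lemma~\ref{detect_upper_bound}, however, rests on a restart argument, so its denominator must be bounded below for a walk started at an arbitrary point. With an averaged denominator the lemma is false: for $m_0=1$ it would predict $\bigO{n/\operatorname{Vol}(S)}$ steps for every walk, even one with arbitrarily short steps.

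The paper avoids this by taking $m_0$ at least the diameter of the torus. Then, for $m\ge m_0$, every point of $S^\star$ has a copy within distance $m$ of any start. Lemma~\ref{lower_bound_prob} then gives $\Pr(Z^z(m)\in S^\star)\ge c\operatorname{Vol}(S^\star)/m^3$ uniformly in $z$. This yields the same order $\Omega(\Delta_P/n^{2/3})$ for the sum, so you only need $m_0\ge \sqrt{3}\,n^{1/3}/2$ and a matching $\alpha$.

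Finally, your claim that shells of the periodic unfolding beyond scale $n^{1/3}$ still contain $\bigO{4^j}$ balls is false when $\Delta_P$ is large: a shell of radius $\rho$ contains about $(\rho/n^{1/3})^3\Delta_P$ balls. This is harmless. By monotonicity each such shell contributes $\bigO{\Delta_P/n}$, hence $\bigO{\frac{\Delta_P}{n}\log m}$ per step, and $\bigO{\log n}$ overall since $\Delta_P=\bigO{n^{2/3}}$.
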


\begin{proof}
    By definition of {\em projected surface area}, there exists a projection $\pi$ onto one face of the bounding box $B=\texttt{Box}(S)$ such that $\texttt{Area}(\pi(S)) = \Delta_P$.
    Moreover, following \cite{vershynin2018high}, the $3$-packing number of any two-dimensional set with area $\Delta_P$ is at least $c\Delta_P$ where $c$ is a universal constant. Here, the $3$-packing number refers to the maximum cardinality of a subset of points where each pair is separated by a distance of at least $3$. Let $S^{\text{proj}} \subseteq \pi(S)$ be a set of points realizing the packing number. For each $w \in S^{\text{proj}}$, we define its pre-image under the projection, $\pi^{-1}(w)$, as the point in $S$ closest to the projection plane:
    \begin{align*}
    \pi^{-1}(w) = \arg\min \{d(w, w') : w' \in S, \pi(w') = w\}.
    \end{align*}
    For each $w' \in \pi^{-1}(S^{\text{proj}})$, we select a point $x \in S$ such that $\|x - w'\| = 1$ and the distance from $x$ to the surface of $S$ is at least $1$. The existence of such a point follows directly from the definition of target, given the assumption $d \geq 1$ for the detection radius. Let $S^* \subset S$ denote the set of these chosen points. Since projections are non-expansive (i.e., they do not increase distances), the mapping from $S^{\text{proj}}$ to $S^*$ is injective, and thus $|S^*| = |S^{\text{proj}}| = c\Delta_P$. Observe that the union of balls $B(S^*)$ is contained in $S$. Consequently, $\E\big(t_{\mathrm{detect}}^{X}(S)\big) \leq \E\big(t_{\mathrm{detect}}^{X}(B(S^\star))\big)$, therefore we focus on $B(S^\star)$.\\

    Recall from Claim \ref{claim:timeVSmoves} that $\E\big(t_{\mathrm{detect}}^{X}(B(S^\star))\big)
    = \E\big(m_{\mathrm{detect}}^{X}(B(S^\star))\big)\, \tau$, where $\tau$ is the expected step length of the random process. Thus, we can prove our theorem by giving an upper bound to Eq.~\ref{eq_detect_upper_bound} (Lemma \ref{detect_upper_bound}) applied to $B(S^\star)$ with $m_0 = \sqrt[3]{n}$. We begin by lower bounding the denominator, then we proceed by upper bounding the numerator.

    \paragraph{Lower bound for the denominator in Eq.~\ref{eq_detect_upper_bound}.} Note that any $x \in B(S^\star)$ trivially satisfies $\norm{x} \leq m$ for any $m \geq m_0 + 1$, because $m_0$ was chosen to be slightly more than the diameter of the torus. Therefore, we can apply Lemma \ref{lower_bound_prob} to get a lower bound on the denominator.
    \begin{align*}
        \sum_{m=m_0+1}^{2m_0}\Pr(Z(m)\in B(S^\star) ) &= \sum_{m=m_0+1}^{2m_0} \int_{x\in B(S^\star)}p^Z_m(x)dv\\
        &\geq \sum_{m=m_0+1}^{2m_0} \frac{c}{m^3}\lvert B(S^\star)\rvert =\Omega\left( \frac{\Delta_P}{n^{2/3}}\right).
    \end{align*}

    \paragraph{Upper bound for the numerator in Eq.~\ref{eq_detect_upper_bound}.}

    Next, we provide an upper bound to the numerator of the r.h.s of Eq. \ref{eq_detect_upper_bound} which is the expected number of returns to $B(S^\star)$ within the first $m_0$ steps, conditioning on the fact that $Z(0)=z$, for some $z\in B(S^*)$. Let us denote this process by $Z^{z}$ (note that $Z=Z^0$). Then,
    \begin{align*}
    \sum_{m=0}^{m_0}\Pr(Z^{z}(m)\in B(S^\star)) &\leq 2+\sum_{m=2}^{m_0} \Pr(Z^{z}(m)\in B(S^\star)). \label{eq:returns-except-3}
    \end{align*}
    Clearly, the probability density function $p^{Z^{z}(m)}$ of $Z^{z}(m)$ is obtained by a translation from $p^{Z(m)}$. Thus, by Corollary \ref{cor:monotonicity}, we have for any $y\in \mathbb{R}^3$:
    \begin{align*}
            p^{Z^z(m)}(y) = \bigO{\frac{1}{\norm{y-z}^3}}.
    \end{align*}
    In particular, for $y$ such that $\norm{y-z} \geq 2$,
    \begin{align*}
        \Pr(Z^z(m) \in B(y)) = \bigO{\frac{1}{(\norm{y-z}-1)^3}}
    \end{align*}
    Next, denote by $x_z \in S^\star$ the point whose ball contains $z$. Now, let $r>0$ be a value that we define later and let us denote by $S^\star_{\leq r}$ the subset of $S^\star$ whose points are at distance at most $r$ from $x_z$. Analogously, define $S^\star_{>r} = S^\star \setminus S^\star_{\leq r}$. We proceed with the following decomposition:
    \begin{align*}
    \Pr(Z^{z}(m)\in B(S^\star)) = \Pr\left(Z^{z}(m)\in B(S^\star_{>r})\right) + \Pr\left(Z^{z}(m)\in B(S^\star_{\leq r})\right).
    \end{align*}

    Note that $|\{ w \in \pi(S^*): \|w-\pi(x_z)\| \leq r \}| \leq 4r^2$, thus by the non-expansiveness of the projections we have $|S^\star_{\leq r}| \leq 4r^2$. Hence, using Lemma \ref{ball_upper_bound_log} we get:
    \begin{align*}
        \Pr\left(Z^{z}(m)\in B(S^\star_{\leq r})\right) = \bigO{\frac{r^2c'\log^3m}{m^3}}.
    \end{align*}
    Next, we aim to upper bound $\Pr\left(Z^{z}(m)\in B(S^\star_{>r})\right)$. By the triangle inequality, for any $x \in S^\star_{>r}$, we have $\norm{x-z} \geq \norm{x-x_z} -1 > 1$. Hence, we get:
    \begin{align*}
        \Pr\left(Z^{z}(m)\in B(S^\star_{>r})\right) &= \sum_{x \in S^*_{>r}}{\bigO{\frac{1}{(\norm{x-z}-1)^3}}} \\
        &= \sum_{x \in S^*_{>r}}{\bigO{\frac{1}{(\norm{x-x_z}-2)^3}}}\\
        &= \sum_{x \in S^*_{>r}}{\bigO{\frac{1}{(\norm{\pi(x)-\pi(x_z)}-2)^3}}}\\
        &= \sum_{k \in \mathbb{Z}: k\geq r+2 } \bigO{\frac{1}{k^2}} \\
        &= \bigO{ \frac{1}{r} }.
    \end{align*}
    Where the third equality follows from the non-expansiveness of the projections. Therefore, by setting $r = m / \log m$, we get:
    \begin{align*}
        \Pr(Z^{z}(m)\in B(S^\star)) = \bigO { \frac{\log m}{m}}.
    \end{align*}
    Hence, the numerator of Eq.~\ref{eq_detect_upper_bound} can be upper bounded as follows:
    \begin{align*}
        \sum_{m=0}^{m_0}\Pr(Z^{z}(m)\in B(S^\star)) &= 2+\sum_{m=2}^{m_0} \bigO { \frac{\log m}{m}}\\
        &= \bigO { \log^2 n }.
    \end{align*}
    Altogether, combining the bounds for the numerator, those for the denominator, and Lemma \ref{detect_upper_bound} (with $m_0 = 2\sqrt[3]{n}$), we get the following result:

    \begin{align}
            \E(\mdetect) = \bigO { \frac{n \log^2 n }{\Delta_P} }.
    \end{align}
     Finally, using Claim \ref{claim:timeVSmoves}, we multiply this bound by the expected step-length $\tau = \bigO{\log n}$ to get the desired upper bound for the detection time, namely, \[\tdetect = \bigO{\frac{n \log^3 n }{\Delta_P}}.\]
\end{proof}

\subsection{Proofs of technical Lemmas and Claims}\label{sec:auxilary claims}
\subsubsection{Proof of Lemma \ref{detect_upper_bound}}
Since the proof of this lemma is the same as that of Lemma 19 in \cite{GuinardKorman2021}, we omit it here.
\subsubsection{Proof of Lemmas \ref{lower_bound_prob} and \ref{ball_upper_bound_log}}

\begin{observation}\label{obs:tail-law}
 The probability to do a step of length at most $\ell> 0$ is $a\ell$ if $\ell \leq 1$ and $a(2-\frac{1}{\ell})$ if $\ell >1$. For integers $\lmax\geq\ell_2 \geq \ell_1\geq 1$, the probability to choose a length in $[\ell_1,\ell_2]$ is $a( \frac{1}{\ell_1}-\frac{1}{\ell_2})$.
\end{observation}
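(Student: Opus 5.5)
The plan is to integrate the step-length density directly. Since the observation is used for the Cauchy walk, I specialize Eq.~\eqref{eq: pdf-step-length} to $\mu=2$: the density is $p(\ell)=a$ on $[0,1]$, $p(\ell)=a\ell^{-2}$ on $(1,\ell_{\max})$, and $0$ beyond. The normalization is $a=\bigl(1+\int_1^{\ell_{\max}}\ell^{-2}\,d\ell\bigr)^{-1}=(2-1/\ell_{\max})^{-1}$. I will take $\ell\le\ell_{\max}$ throughout, which is the range in which the observation is applied; for $\ell\ge\ell_{\max}$ the cumulative probability is simply $1$.

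\textbf{Case $\ell\le 1$.} The density is the constant $a$ on $[0,\ell]$, so
\[
\Pr(\norm{V}\le \ell)=\int_0^{\ell}a\,dt=a\ell .
\]

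\textbf{Case $1<\ell\le\ell_{\max}$.} I split the integral at $1$:
\[
\Pr(\norm{V}\le \ell)=\int_0^1 a\,dt+\int_1^{\ell}a t^{-2}\,dt=a+a\Bigl(1-\frac{1}{\ell}\Bigr)=a\Bigl(2-\frac{1}{\ell}\Bigr).
\]
As a consistency check, at $\ell=\ell_{\max}$ this equals $a(2-1/\ell_{\max})=1$, matching the value of $a$ computed above.

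\textbf{Interval statement.} For integers $1\le\ell_1\le\ell_2\le\ell_{\max}$, the interval $[\ell_1,\ell_2]$ lies in the region where the density is $a\ell^{-2}$ (the single point $\ell_1=1$ has measure zero). Hence
\[
\int_{\ell_1}^{\ell_2}a t^{-2}\,dt=a\Bigl(\frac{1}{\ell_1}-\frac{1}{\ell_2}\Bigr).
\]
The same value follows by subtracting the two cumulative expressions from the previous case.

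There is no real obstacle. The only points needing care are that the statement implicitly sets $\mu=2$, and that $\ell$ must be restricted to $\ell\le\ell_{\max}$ so the truncation does not enter the formulas.
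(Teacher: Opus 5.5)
Your proposal is correct and is the same argument the paper relies on: the paper states this observation without proof because it follows by direct integration of the step-length density specialized to $\mu=2$. Your two caveats are accurate and worth keeping. The statement tacitly assumes $\mu=2$. It also tacitly assumes $\ell\le\ell_{\max}$, since beyond $\ell_{\max}$ the cumulative probability is $1$ rather than $a(2-1/\ell)$.
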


The next claim quantifies the probability that the Cauchy process goes to a distance of at least $d$ after $m$ steps.

\begin{claim}\label{claim:X-far}For any integer $m\geq 2$ and any real $d\in [1,\frac{\lmax}{3}]$  we have,
\begin{align*}
\Pr\left(\exists s\leq m \mbox{~s.t.~} \norm{ Z(s) }\geq d
\right)\geq 1-e^{-cm/d},
\end{align*}
for some constant $c>0$.
 In particular this lower bound is at least
 \begin{itemize}
    \item $1-O(m^{-3})$ if $d=c'\frac{m}{\log m}$ with $c'>0$ a small enough constant,
    \item $\Omega(1)$ if $d=c'm$ for any constant $c'>0$ with $c'm\leq \lmax/3$.
\end{itemize}
\end{claim}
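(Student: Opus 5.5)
The plan is to use the heavy tail of the Cauchy step-length law rather than any diffusive estimate. A single step has length at least $3d$ with probability of order $1/d$. Once such a step occurs, the walk is either already far away or is thrown far away.

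\textbf{Step 1: one step from a near position.} By Observation~\ref{obs:tail-law}, for $3d\le \lmax$ a single step has length at least $3d$ with probability at least $a\bigl(\frac{1}{3d}-\frac{1}{\lmax}\bigr)$, which is $\Omega(1/d)$ since $d\le \lmax/3$ (if needed one shrinks the range slightly, e.g.\ uses length in $[3d,\lmax]$ with $3d\le \lmax$, giving probability $\ge a(\frac1{3d}-\frac1{\lmax})\ge c_0/d$ after adjusting constants, or uses threshold $2d$). Suppose that before step $s$ we have $\norm{Z(s-1)}<d$ and the $s$-th step has length $\ge 2d$. The triangle inequality then gives $\norm{Z(s)}\ge 2d-d=d$. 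Direction uniformity is not even needed here. So, conditionally on the past, whenever $\norm{Z(s-1)}<d$ the probability that $\norm{Z(s)}\ge d$ is at least $p:=c_0/d$.

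\textbf{Step 2: iterate.} Let $E_s$ be the event that $\norm{Z(j)}<d$ for all $j\le s$. Since step lengths are independent of the past, $\Pr(E_s\mid E_{s-1})\le 1-p$. Hence $\Pr(E_m)\le (1-p)^m\le e^{-c_0 m/d}$, and the complement is exactly the event in the claim. Here one simply uses the threshold $2d\le \lmax$, which holds since $d\le \lmax/3$, so Observation~\ref{obs:tail-law} gives probability $a(\frac1{2d}-\frac1{\lmax})\ge a(\frac1{2d}-\frac1{3d})=\frac{a}{6d}$. Note that $a=\Theta(1)$ for $\mu=2$.

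\textbf{Step 3: the special cases.} For $d=c'm/\log m$ the bound becomes $1-e^{-(c/c')\log m}=1-m^{-c/c'}$, which is $1-O(m^{-3})$ once $c'\le c/3$. For $d=c'm$ it becomes $1-e^{-c/c'}=\Omega(1)$. The cases $d<1$ never arise, and for $d$ near $1$ the bound is trivial. I expect no real obstacle. The only care point is checking that the chosen jump threshold stays below $\lmax$, which is exactly why the claim assumes $d\le \lmax/3$.
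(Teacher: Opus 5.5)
Your proof is correct and is essentially the argument the paper relies on; the paper omits the proof and defers to the analogous 2D claim (Claim~24 of Guinard--Korman). In that argument, a single step of length at least $2d$ has probability at least $a\bigl(\frac{1}{2d}-\frac{1}{\lmax}\bigr)\ge\frac{a}{6d}$, precisely because $d\le\lmax/3$. Whenever $\norm{Z(s-1)}<d$, such a step forces $\norm{Z(s)}\ge d$ by the triangle inequality, and independence over the $m$ steps yields $(1-\frac{a}{6d})^m\le e^{-am/(6d)}$ with $a\ge 1/2$.

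Delete the $3d$-threshold false start in Step~1: at $d=\lmax/3$ it gives probability $a\bigl(\frac{1}{3d}-\frac{1}{\lmax}\bigr)=0$. Also state explicitly that the two special cases require $d\in[1,\lmax/3]$, for example $m$ large enough that $c'm/\log m\ge 1$.
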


As the argument is identical to the proof of the analogous claim in the two-dimensional case (Claim 24 in the Supplementary Materials of \cite{GuinardKorman2021}), we omit it.

\begin{claim}\label{claim:X-m-less-than-cm}
\begin{itemize}
    \item For any constant $c>0$, there exists a constant $\delta>0$ such that, for any two integers $1\leq s\leq m$, we have $\Pr(\norm{ Z(s)} \leq cm )\geq \delta$.
    \item For any constant $0<\delta<1$, there exists a (large enough) constant $c>0$ such that, for any two integers $1\leq s\leq m$, we have $\Pr(\norm{ Z(s)} \leq cm )\geq \delta$.
\end{itemize}
\end{claim}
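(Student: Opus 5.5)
For the first item, fix $c>0$ and $1\le s\le m$. I will bound the tail $\Pr(\norm{Z(s)}>cm)$ away from $1$ by truncating steps. Let $A$ be the event that all of the first $s$ steps have length at most $cm/2$. By Observation~\ref{obs:tail-law}, a single step exceeds length $\ell\ge1$ with probability $a/\ell$ (up to the truncation at $\lmax$). So the union bound gives $\Pr(A^c)\le 2as/(cm)\le 2a/c$. This is not small when $c$ is small, so I will handle that case separately.

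For small $c$, I will instead use the event $A_0$ that all $s$ steps have length at most $1$. Then $\norm{Z(s)}\le s$, which is at most $cm$ only if $s\le cm$; this covers the regime $s\le cm$. For $s>cm$ I will argue differently. By Corollary~\ref{cor:monotonicity}, the density of $Z(s)$ is radial and non-increasing. By Lemma~\ref{lower_bound_prob} (with $\alpha$ chosen so that $s\le m\le \alpha\lmax$; if $m$ exceeds this range, the torus truncation makes the tail trivial and the claim is immediate), we have $p^{Z(s)}(x)\ge c_0/s^3$ on the ball $\norm{x}\le s$. Integrating over the ball of radius $cm\le cs$ yields
\[
\Pr(\norm{Z(s)}\le cm)\ge c_0\,\frac{4\pi}{3}\,\frac{(cm)^3}{s^3}\ge c_0\,\frac{4\pi}{3}\,c^3,
\]
using $s\le m$. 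This is a constant depending only on $c$, which gives the first item in this regime. The remaining regime, $s\le cm$ with $c$ small, follows from the same integration applied on the ball of radius $cm\ge s$: monotonicity gives probability at least the mass of the ball of radius $s$, which is $\ge c_0\frac{4\pi}{3}$. So in fact Lemma~\ref{lower_bound_prob} plus monotonicity suffice for both regimes, and $\delta=c_0\frac{4\pi}{3}\min(c,1)^3$ works.

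For the second item, given $\delta<1$, I need $\Pr(\norm{Z(s)}>cm)\le 1-\delta$ for some large constant $c$. I will truncate at level $cm/2$ and then use Chebyshev. Let $Y$ be the walk with steps truncated at length $L=cm/2$. Then $\Pr(Z(s)\neq Y(s))\le 2as/(cm)\le 2a/c$. By Fact~\ref{fact:exp+var} with $\mu=2$, the truncated steps have second moment $O(L)$, so $\E\norm{Y(s)}^2\le s\cdot O(cm)\le O(cm^2)$ by symmetry of the increments. Markov's inequality then gives $\Pr(\norm{Y(s)}>cm)\le O(cm^2)/(c^2m^2)=O(1/c)$. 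Hence $\Pr(\norm{Z(s)}>cm)=O(1/c)$, which is at most $1-\delta$ once $c$ is large. The main obstacle is the range restriction in Lemma~\ref{lower_bound_prob} for the first item, i.e. ensuring $m\le\alpha\lmax$. Since the claim is used only for $m=O(\sqrt[3]{n})$, I would state it with that implicit assumption and choose $\alpha$ accordingly.
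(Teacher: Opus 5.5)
Your second item is correct. Truncating at length $cm/2$, union-bounding the probability that some step exceeds it by $2a/c$, and applying Chebyshev to the truncated walk (per-step second moment $O(cm)$) is the paper's argument in additive rather than multiplicative form. It holds for every $m$, with no relation between $m$ and $\lmax$ needed.

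The first item, however, has a genuine gap: it is circular. You derive it from Lemma~\ref{lower_bound_prob}, but in the paper that lemma sits downstream of this claim. The paper proves it ``given the claims established above'', following Lemma~20 of \cite{GuinardKorman2021}. The present claim, that $Z(s)$ stays within distance $O(m)$ with constant probability, is one of those inputs. The density bound $p^{Z(m)}(x)\ge c_0/m^3$ is strictly stronger than what you are proving, so it cannot be used here.

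Your treatment of large $m$ is also incorrect. $Z$ is the walk on $\R^3$, not on the torus, so nothing makes $\Pr(\norm{Z(s)}>cm)$ trivially small when $m\gg\lmax$ and $c$ is small. Restricting the claim to $m=O(\sqrt[3]{n})$ proves a weaker statement. There is also a typo: in the regime $s>cm$ the containment you need is $cm<s$, not $cm\le cs$.

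The repair is short and uses only what you already have. Apply your second item with $\delta=1/2$ to get a constant $C$ with $\Pr(\norm{Z(s)}\le Cm)\ge 1/2$ for all $1\le s\le m$. By Corollary~\ref{cor:monotonicity}, the density of $Z(s)$ is radial and non-increasing, so its average over $B(0,r)$ is non-increasing in $r$. Hence for $c<C$,
\[ \Pr(\norm{Z(s)}\le cm)\ \ge\ (c/C)^3\,\Pr(\norm{Z(s)}\le Cm)\ \ge\ \tfrac12 (c/C)^3 . \]
This proves the first item for all $m$, with no range restriction and no circular dependence. It is also shorter than the paper's route for this item, which chooses a tiny truncation constant $c''$ and pays $\Pr(\mathcal{A})\ge 2^{-1/c''}$.
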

\begin{proof}
    Fix an integer $m \ge 1$ and let $c''$ be a constant to be specified later. Define $\mathcal{A}$ as the event that each of the first $m$ steps has length at most $\ell = c'' m$. Then, for any integer $s \le m$ and any constant $c > 0$, we have

\begin{align}
    \label{eq:norm-X-cond-small-steps}\Pr(\norm{ Z(s)} \leq cm ) \geq \Pr(\mathcal{A}) \cdot\Pr(\norm{ Z(s)} \leq cm  \mid\mathcal{A}).
\end{align}
We now analyze each term appearing on the right-hand side of Eq.~\eqref{eq:norm-X-cond-small-steps} individually and show that:
\begin{itemize}
    \item For the first item of Claim~\ref{claim:X-m-less-than-cm}, we choose $c''>0$ such that both factors become constants, so their product is bounded below by some constant $\delta$.
    \item For the second item of Claim~\ref{claim:X-m-less-than-cm}, where the bound $\delta$ is prescribed, we will show that both terms can be made at least $\sqrt{\delta}$ by choosing $c$ and $c''$ appropriately.
\end{itemize}
 Proceeding with the first term in the r.h.s~of Eq.~\eqref{eq:norm-X-cond-small-steps}, by Observation \ref{obs:tail-law}, we have:
\begin{align*}
\Pr(\mathcal{A})=\begin{cases}(ac''m)^m \text{ if } c''m\leq 1 \\
(2a)^m(1-\frac{1}{2c''m})^m \text{ if } c''m\in [1,\lmax]\\
1 \text{ if } c''m\geq \lmax \end{cases}. \end{align*}
For $1\leq m\leq \frac{1}{c''}$, we have $(ac''m)^m \geq (ac''m)^{\frac{1}{c''}}$ as $ac''m\leq c''m \leq 1$, and $ (ac''m)^{\frac{1}{c''}}\geq (ac'')^{\frac{1}{c''}}$ as $m\geq 1$. For the second item,  note that the function $(1-\frac{\alpha}{x})^x=e^{x\log(1-\frac{\alpha}{x})}$ is increasing in $x\geq \alpha$ and thus, for $x\geq 2\alpha$, we have $(1-\frac{\alpha}{x})^x\geq 2^{-2\alpha}$. Applying this with $\alpha=\frac{1}{2c''}$, we have, $(1-\frac{1}{2c''m})^m\geq 2^{-\frac{1}{c''}}$, for $m\geq \frac{1}{c''}$. Overall, using $2a\geq 1$, we get
\begin{align*}
\Pr(\mathcal{A})\geq \begin{cases}(\frac{c''}{2})^{\frac{1}{c''}} \text{ if } c''m\leq 1 \\
2^{-\frac{1}{c''}} \text{ if } c''m\in [1,\lmax]\\
1 \text{ if } c''m\geq \lmax\end{cases}. \end{align*}
Hence,
\begin{itemize}
    \item $\Pr(\mathcal{A})=\Omega(1)$ for any given $c''>0$.
    \item Moreover, for the second item of Claim~\ref{claim:X-m-less-than-cm}, where $0<\delta<1$ is prescribed, we may choose $c''$ sufficiently large (in particular, $c''\ge 1$, so that $c''m \ge 1$) to guarantee that $\Pr(\mathcal{A}) \ge 2^{-1/c''} \ge \sqrt{\delta}$.
\end{itemize} We are now ready to lower bound the second factor in Eq.~\eqref{eq:norm-X-cond-small-steps}, namely
$\Pr(\|Z(s)\|\le cm \mid \mathcal{A})$. We introduce the following notation: for a random variable $X$,
let $X^{\mathcal{A}}$ denote $X$ conditioned on the event $\mathcal{A}$.
Our first objective is to show that
\begin{align}\label{eq:Z-Chebyshev}
    \Pr(\|Z^{\mathcal{A}}(s)\| \le cm)
    \ge 1 - \frac{16s\,\E(\|V^{\mathcal{B}}\|^2)}{c^2 m^2},
\end{align}
where $V^{\mathcal{B}} = (V_1^{\mathcal{B}}, V_2^{\mathcal{B}}, V_3^{\mathcal{B}})$ is a one-step increment of the walk on $\R^3$,
conditioned on the event $\mathcal{B}$ that its length is at most $c''m$.

Eq.~\eqref{eq:Z-Chebyshev} will be established by applying Chebyshev's inequality on each of the projections on the axes and using a union bound argument. Specifically, decomposing the walk $Z$ on the three axes, by writing $Z= (Z_1,Z_2,Z_3)$, we first use a union bound to obtain:
\begin{align*}
\Pr(\norm{ Z^{\mathcal{A}}(s)} > cm  ) &\leq  \Pr( \exists i=1,2,3 \mbox{~s.t.~} \lvert Z_i^{\mathcal{A}}(s)\rvert  > cm/3)\\
&  \leq \Pr(\lvert Z_1^{\mathcal{A}}(s)\rvert  > cm/3)+\Pr(\lvert Z_2^{\mathcal{A}}(s)\rvert  > cm/3) +\Pr(\lvert Z_3^{\mathcal{A}}(s)\rvert  > cm/3)\\
&\leq 3 \Pr( \lvert Z_1^{\mathcal{A}}(s)\rvert  > cm/3),
\end{align*}
where we used the symmetry to deduce that $Z_1$, $Z_2$, and $Z_3$ share the same distribution.
Hence,
\begin{align*}
\Pr(\norm{ Z^{\mathcal{A}}(s)} \leq cm ) \geq 1-3\Pr(\lvert Z_1^{\mathcal{A}}(s)\rvert  > cm/3).
\end{align*}
Next, we aim to lower bound the right-hand side.
Using that $\E[Z_1^{\mathcal{A}}(s)] = 0$ for all $s$,
Chebyshev's inequality yields:
\begin{align*}
\Pr(\lvert Z_1^{\mathcal{A}}(s) \rvert > cm/3 )\leq \frac{9\Var(Z_1^{\mathcal{A}}(s) )}{c^2m^2}.
\end{align*}
Since $Z_1^{\mathcal{A}}(s)$ is the sum of $s$ independent steps that follow the same law as $V_1^{\mathcal{B}}$, we have:
\begin{align*} \Var(Z_1^{\mathcal{A}}(s))= s\Var(V_1^{\mathcal{B}}).\end{align*}
As the expectation of $V_1^{\mathcal{B}}$ is zero, we have $\Var(V_1^{\mathcal{B}})=\E((V_1^{\mathcal{B}})^2)$. Furthermore, since $\lvert V_1^{\mathcal{B}}\rvert \leq \norm{V^{\mathcal{B}}}$, we obtain:
\begin{align*}\Var(Z_1^{\mathcal{A}}(s))\leq s\E(\norm{V^{\mathcal{B}}}^2),
\end{align*}
which concludes the proof of Eq.~\eqref{eq:Z-Chebyshev}. Next, we estimate $\E(\|V^{\mathcal{B}}\|^{2})$.
If $c''m \le 1$, then under the conditioning on $\mathcal{A}$, the step length is chosen uniformly at random from $[0, c''m]$.
 Thus, its second moment is \begin{align}\label{eq:norm-V-m-small}
    \E(\norm{V^{\mathcal{B}}}^2)=\int_{0}^{c''m}\ell^2 \frac{d\ell}{c''m}=\frac{(c''m)^2}{3}.
\end{align}
On the other~hand, if $c''m\geq 1$, then $V^{\mathcal{B}}$ is a Cauchy walk with cut off $\ell_{max}=c''m$. Hence, its second moment is
\begin{align}
   \E(\norm{V^{\mathcal{B}}}^2)&= a'\int_0^1 \ell^2d\ell +a'\int_1^{c''m} \ell^2\ell^{-2}d\ell \\
     &\leq a'\int_0^{c''m} 1d\ell=a'c''m\leq c''m.\label{eq:norm-V-m-big}
\end{align}
Overall, by Eqs.~\eqref{eq:Z-Chebyshev}, \eqref{eq:norm-V-m-small} and \eqref{eq:norm-V-m-big} we find that, for $s\leq m$,
\begin{align*}
    \Pr(\norm{ Z^{\mathcal{A}}(s)} \leq cm  )\geq \begin{cases}1-\frac{27sc''^2}{3c^2} \text{ if } c''m\leq 1\\
    1-\frac{27sc''}{c^2m} \text{ if } c''m\geq 1
    \end{cases}\\
    \geq \begin{cases}1-\frac{27c''}{3c^2} \text{ if } c''m\leq 1\\
    1-\frac{27c''}{c^2} \text{ if } c''m\geq 1
    \end{cases}.
\end{align*}
We then conclude the proof of Claim \ref{claim:X-m-less-than-cm} by observing the following.
\begin{itemize}
    \item For the first item of Claim \ref{claim:X-m-less-than-cm}, we have proved that $\Pr(\mathcal{A})=\Omega(1)$ for any constant $c''>0$. Hence, we may now choose $c''$ small enough so that $\Pr(\norm{ Z^{\mathcal{A}}(s)} \leq cm  )=\Omega(1)$.
    \item For the second item of Claim \ref{claim:X-m-less-than-cm}, we have already chosen $c''$ to be large (in order to have $\Pr(\mathcal{A})\geq \sqrt{\delta}$, but we are free to choose $c$ large enough so that $\Pr(\norm{ Z^{\mathcal{A}}(s)} \leq cm  )\geq \sqrt{\delta}$.
\end{itemize}
\end{proof}
To conclude, we give the proofs of Lemmas \ref{lower_bound_prob} and \ref{ball_upper_bound_log}.

\paragraph{Proof of Lemma \ref{lower_bound_prob}}
Given the claims established above, the proof of Lemma~\ref{lower_bound_prob} follows the same steps as the proof of Lemma~20 in~\cite{GuinardKorman2021}. Therefore, we omit it.

\paragraph{Proof of Lemma \ref{ball_upper_bound_log}}

Let $\alpha>0$ and $m\in [2,\alpha \lmax]$. By the monotonicity property stated in Corollary~\ref{cor:monotonicity}, it suffices to prove the result for $x=0$. Indeed, for any $x\in\R^{3}$, the sets $B(0)\setminus B(x)$ and $B(x)\setminus B(0)$ have the same volume $A$ and

\begin{align*}
\Pr\left (Z(m)\in B(x)\setminus B(0)\right )&\leq A \max_{y\in B(x)\setminus B(0)}  \{ p^{\norm{Z(m)}}(y)\}   \\& \leq A\min_{y\in B(0)\setminus B(x)}\rvert \{ p^{\norm{Z(m)}}(y)\} \\& \leq \Pr\left (Z(m)\in B(0)\setminus B(x)\right ),
\end{align*}

where the second inequality relies on the monotonicity property together with the observation that any point in
$B(x)\setminus B(0)$ is more than unit distance from the origin, and thus farther from $0$ than every point in $B(0)\setminus B(x)$.
 This shows that $\Pr(Z(m)\in B(x))\leq \Pr(Z(m)\in B(0))$, hence it is
 sufficient to prove the required upper bound for $x=0$. Intuitively, we argue that with high probability there is some step $s \le m$ at which $Z(s)$ becomes
“distant’’ (at least $cm/\log m$ away).
Once this is conditioned on, the probability of being in $B(0)$ at step $m$ is small, thanks to the
monotonicity property (Corollary~\ref{cor:monotonicity}). Formally, consider a (small) positive constant $c$, and let $\mathcal{A}$ be the event that there is some $s\leq m$ for which $\norm{Z(s)}\geq cm/\log m$. Consider $B(0)$ the ball of radius $1$ with center $0$. Write
\begin{align}
\Pr(Z(m)\in B(0))&= \Pr(Z(m)\in B(0) \cap \mathcal{A})+\Pr(Z(m)\in B(0) \cap \neg \mathcal{A}) \nonumber \\
&\leq \Pr(Z(m)\in B(0)\mid \mathcal{A})+\Pr(\neg \mathcal{A}), \label{eq:Z-in-M-cond-distant}
\end{align}
By the first item of Claim \ref{claim:X-far}, taking $c$ to be sufficiently small, we have
\begin{align*}
\Pr(\neg \mathcal{A})=O(m^{-3}).
\end{align*}
To rewrite the remaining term in Eq.~\eqref{eq:Z-in-M-cond-distant}, we introduce the notation $Z^{x}$
for the Cauchy process on $\R^{3}$ with cutoff $\ell_{\max}$ starting from $Z(0)=x$.
Because our process starts at $0$, we have $Z = Z^{0}$.
Observe further that $Z^{x}$ has the same distribution as $Z^{0}$ translated by $x$.
Using this notation, we obtain the next bound, with the second inequality following from the Markov property:

\begin{align*}
  \Pr(Z^0(m)\in B(0)\mid \mathcal{A})  &\leq \max_{s\leq m}\Pr(Z^0(m)\in B(0)\mid \norm{Z^0(s)}\geq cm/\log m) \\
  &\leq \max_{s\leq m}\sup_{\norm{x}\geq cm/\log m} \Pr(Z^x(m-s)\in B(0))  \\
 &= \max_{s\leq m}\sup_{\norm{x}\geq cm/\log m} \Pr(Z^x(s)\in B(0)) \\
 &= \max_{s\leq m}\sup_{\norm{x}\geq cm/\log m} \Pr(Z^0(s)\in B(-x) )\\
 &= \max_{s_\leq m}\sup_{\norm{x}\geq cm/\log m}  \Pr(Z(s)\in B(x))
 \end{align*}
 Use now Corollary~\ref{cor:monotonicity} that gives $p^{Z(m)}(x) \leq \frac{3}{4\pi\norm{ x}^3}$. Hence, for any $x\in \R^3$ with $\norm{x}>1$, we have
 \begin{align*}
\Pr(Z(m)\in B(x))=\int_{B(x)} p^{Z(m)}(y) dy \leq \int_{B(x)} \frac{3}{4\pi(\norm{ x}-1)^3}dy= \frac{1}{(\norm{x}-1)^3}. \end{align*}
Let $m(c)$ be the largest integer $m>0$ such that $cm/\log m\leq 2$. For $m>m(c)$, we have
\begin{align*}
    \Pr\left (Z(s)\in B(x)\right ) & \leq \max_{s\leq m} \frac{1}{(cm\log m-1)^{3}}=\frac{1}{(cm\log m-1)^{3}}
\end{align*}
Overall, we find that, for $m>m(c)$
\begin{align*}
\Pr\left (Z(m)\in B(0)\right ) \leq \frac{1}{(cm/\log m-1)^{3}} + \frac{c'}{m^{3}},
\end{align*}
which we can bound by $\frac{c_2\log^3 m}{m^3}$ for some constant $c_2>0$. Since $m(c)$ is a constant, there is some other constant $c_3>0$ for which, for any $m\in [2,m(c)]$, we have $\Pr(Z(m)\in B(0))\leq \frac{c_3\log^3 m}{m^3}$. We then obtain, for any $m\geq 2$,
\begin{align*}\
\Pr(Z(m)\in B(0)) \leq \frac{\max\{c_2,c_3\} \log^3 m}{m^3},
\end{align*}
which concludes the proof.

\section{Approximately convex targets}\label{sec:approximately_convex}
In this section, we introduce the class of {\em approximately convex} targets and show that (1) the Cauchy walk is almost optimal for this class and (2) it contains all convex targets.

\begin{definition}
A compact target $S$ is approximately convex if
\[
\frac{\Delta_P}{\Delta_B} \geq \frac{1}{36}.
\]
\end{definition}

Observe that (1) this ratio is always at most 1 by definition, and (2) if the condition holds,
it remains valid upon scaling the target by any factor $r > 0$. Moreover, the class of valid targets grows larger as the right-hand side constant decreases.
Although this constant could be relaxed further, the remainder of this section establishes
that its current value already captures a broad class of targets.
Moreover, Section~\ref{sec:counterexample} presents a counterexample of a target that is not approximately convex, valid for any constant.

\subsection{Optimality of the Cauchy walk}
In this section, we demonstrate that for approximately convex targets, the Cauchy walk achieves optimal performance up to polylogarithmic factors. To show this, we first note that, by the definition of approximate convexity, both the universal lower bound (Theorem \ref{thm:universal-lower-box}) and the upper bound for $\mu = 2$ (Theorem \ref{thm:upper-bound-cauchy-box}) can be equivalently expressed in terms of either parameter --- up to constant factors. Consequently, the Cauchy walk identifies approximately convex targets in nearly optimal time, regardless of whether $\Delta_B$ or $\Delta_P$ serves as the governing parameter. In particular, the expected detection time differs from the universal lower bound by at most a polylogarithmic factor of $\mathcal{O}(\log^3 n)$.

\subsection{Convex targets are approximately convex}
\label{sec:conex}This subsection demonstrates that convex targets are inherently approximately convex. Recall that we always assume that targets are piecewise smooth  sets with well-defined surface area and projected surface area.

\begin{theorem}\label{thm:convex_apx_convex}
    Every convex target is approximately convex.
\end{theorem}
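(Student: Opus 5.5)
The plan is to show directly that $\Delta_P \geq \Delta_B/36$ for convex $S$. Fix $B=\texttt{Box}(S)$ with sides $x_1\le x_2\le x_3$, and let $F$ be its largest face, of area $\Delta_B=x_2x_3$. Let $\pi$ be the orthogonal projection onto the plane of $F$. Since $\Delta_P$ is the maximum over the six face projections, it suffices to prove $\texttt{Area}(\pi(S))\ge x_2x_3/36$. The projection $\pi(S)$ is a planar convex set, and because $S$ is convex and compact, so is $\pi(S)$.

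\textbf{Step 1: $\pi(S)$ is spread out in both face directions.} I will use the minimality of $B$: each face of $B$ touches $S$, for otherwise the box could be shrunk, reducing its surface area. Let $e_2,e_3$ be the edge directions of $F$. Then the extent of $\pi(S)$ along $e_3$ equals the extent of $S$ along $e_3$, which is $x_3$, and likewise along $e_2$ it is $x_2$. So $\pi(S)$ is a convex planar set inscribed in the rectangle $\pi(B)=F$ of size $x_2\times x_3$, touching all four sides.

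\textbf{Step 2: area lower bound for such a convex set.} Choose points $p,q\in\pi(S)$ on the two sides of length $x_2$, so that $q-p$ has $e_3$-component $x_3$. Choose $r,s\in\pi(S)$ on the two sides of length $x_3$, so their $e_2$-components differ by $x_2$. By convexity, the convex hull of $p,q,r,s$ lies in $\pi(S)$.

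The segment $pq$ spans the full height $x_3$. At least one of $r,s$ lies at horizontal ($e_2$) distance at least $x_2/2$ from the line through $p$ and $q$, measured along $e_2$. The resulting triangle therefore has area at least $\tfrac12\cdot x_3\cdot x_2/2=x_2x_3/4$. This follows by slicing horizontally: each horizontal slice of the triangle has width proportional to its distance from the apex, which integrates to half of base times height along $e_2$.

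This gives $\texttt{Area}(\pi(S))\ge \Delta_B/4\ge\Delta_B/36$. The generous constant $1/36$ also absorbs the case where the minimum-surface-area box is not the minimum-volume or "touching" box. It also absorbs losses from the thickening by the detection radius.

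\textbf{Main obstacle.} The delicate point is Step 1: justifying that the minimum surface area bounding box has each face supporting $S$. Translating a face inward strictly decreases surface area, so any non-touching face contradicts minimality. Compactness ensures that a minimizing box exists. I also need to ensure that the width claims hold along the face's own edge directions, not just along some direction. This follows since $F$'s edges are parallel to box edges.

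A remaining subtlety is ties or degenerate cases, such as $x_1$ being tiny, where $S$ is nearly planar. These are handled by the same argument, since it never uses $x_1$. If the touching argument fails for non-generic minimizers, I would fall back on a weaker Steinhaus/Blaschke-type bound. That fallback would show that a convex body in a box has width at least a constant fraction of each side, with constants still above $1/36$.
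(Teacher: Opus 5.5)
\textbf{There is a genuine gap in Step 2.} Step 1 is correct, but Step 2 uses nothing about $\pi(S)$ beyond convexity and touching all four sides of $F$, and that information alone does not give the bound. Your claim that one of $r,s$ lies at horizontal distance at least $x_2/2$ from the line through $p,q$ holds only when $pq$ is parallel to $e_3$, or when $r,s$ are at the same height. Only then do the two horizontal offsets add up to $x_2$. Once $pq$ is tilted, both offsets can vanish.

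Concretely, use coordinates $(e_2,e_3)$ on $F=[0,x_2]\times[0,x_3]$. The parallelogram with vertices $(0,0),(w,0),(x_2,x_3),(x_2-w,x_3)$ is convex and touches all four sides, so it has every property Step 1 provides. Yet $r=(0,0)$ and $s=(x_2,x_3)$ are forced, both lie within horizontal distance $w$ of any admissible line $pq$, and the area is $wx_3$, an arbitrarily small fraction of $x_2x_3$. The same example defeats your Steinhaus/Blaschke fallback, since the strip is thin although it touches every side. Minimality of $B$ under inward translation of faces is too weak. What rules out such configurations is minimality of $B$ under \emph{rotations}, which your argument never uses.

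\textbf{How the paper proceeds.} It avoids $\pi(S)$ and passes through the surface area $\Delta$. Cauchy's projection formula $\texttt{Area}(\pi_{u^\perp}(S))=\frac12\int_{\partial S}|n(x)^\top u|\,dA(x)$, summed over the three axes of $B$ with $\|n\|_1\ge 1$, gives $\Delta_P\ge\Delta/6$. The bound $\Delta\ge\Delta_B/6$ comes from an auxiliary box $B'$ aligned with a diameter direction $u_1$ of $S$ and the diameter direction $u_2$ of its projection onto $u_1^\perp$. In $B'$ the diameter segment lies along an axis and spans it fully, which is exactly when your two-triangle argument is valid. It gives a projection of area at least $D_1D_2/2=\Delta_{B'}/2$, and $6\Delta_{B'}\ge \mathrm{SA}(B')\ge \mathrm{SA}(B)\ge 2\Delta_B$ transfers this to $\Delta_B$. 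The two factors $1/6$ give the $1/36$.

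\textbf{Repairing your direct route.} Add one rotation step: rotate $B$ about $e_1$ so that an in-plane axis is parallel to a diameter of $K=\pi(S)$.
\begin{itemize}
\item The rotated box has sides $x_1$, $D=\operatorname{diam}K$ and $w'$, where $w'$ is the width of $K$ orthogonal to that diameter.
\item Your triangle argument now correctly gives $Dw'\le 2\,\texttt{Area}(K)$.
\item Minimality of surface area gives $x_2x_3+x_1(x_2+x_3)\le Dw'+x_1(D+w')$.
\item Using $D\le\sqrt{x_2^2+x_3^2}\le x_2+x_3$ and $x_1\le x_3\le D$, this becomes $x_2x_3\le 2Dw'\le 4\,\texttt{Area}(K)$.
\end{itemize}
This recovers your claimed $\Delta_P\ge\Delta_B/4$, now with a valid proof.
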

\begin{proof}
Consider a convex target $S$ of surface area $\Delta$. To establish Theorem~\ref{thm:convex_apx_convex}, It suffices to show that
\[
\frac{\Delta}{\Delta_B} \geq \frac{1}{6},
\qquad
\frac{\Delta}{\Delta_P} \le 6,
\]
Where $\Delta$, $\Delta_B$, and $\Delta_P$ denote the surface area, face surface area, and largest projection area of $S$, respectively.

\begin{lemma}\label{lm:convex_b}
    $\frac{\Delta}{\Delta_B}\geq \frac{1}{6}$.
\end{lemma}
\begin{proof}

    Let $x,y \in S$ be the points that maximize $\norm{x-y}$  then the diameter of $S$ is defined as $D_1:=\norm{x-y}$. (For simplicity, we assume compactness of $S$ and therefore such $x$ and $y$ exist; otherwise, we can take $x$ and $y$, whose distance from each other is arbitrarily close to the diameter of $S$). This is realized along some unit vector $u_1 = (x-y)/\norm{x-y}$. Next, let $D_2$ be the diameter of the projection of $S$ in the orthogonal plane $u_1^\perp$ and let $u_2$ be its corresponding unit vector (in the direction of the vector realizing of $D_2$). Consider a bounding box $B'$ of $S$ whose largest face $Q'$ is parallel to the plane spanned by $u_1$ and $u_2$. Let $\texttt{Area}(\pi_{Q'}(S))$ denote the area of the projection of $S$ onto $Q'$. Since $S$ is convex, its projection onto $Q'$ is a planar convex set containing two orthogonal lines of length $D_1$ and $D_2$, and therefore its area is $\texttt{Area}(\pi_{Q'}(S)) \geq \frac{1}{2}D_1 D_2$. Next, observe that since $Q'$ is the largest face of the bounding box $B'$, we have:


    \[
    \Delta_{B'} =\texttt{Area}(Q') = D_1 D_2.
    \]
    Combining the two bounds yields


    \[
    \texttt{Area}(\pi_{Q'}(S)) \geq \frac{1}{2}\Delta_{B'}.
    \]
    Finally, since $\texttt{Area}(\pi_{Q'}(S))$ is the area of a projection of $S$, the
    surface area of $S$ satisfies
    \[
    \Delta \ge \texttt{Area}(\pi_{Q'}(S)),
    \]
    hence $\Delta \geq \frac{1}{2}\Delta_{B'}$. Observe that, by the definition, $\Delta_B$ is the largest face area of the smallest bounding box and for any target we have
    $\Delta_{B'} \geq \frac{1}{3}\Delta_{B}$.
    Consequently, $\Delta \geq \frac{1}{6}\Delta_{B}$, which completes the proof.
\end{proof}

\begin{lemma}\label{lm:convex_p}
    $\frac{\Delta}{\Delta_P} \leq 6$.
\end{lemma}

\begin{proof}
     Recall that we assume that the boundary $\partial S$ of $S$ is  piecewise smooth.  Let $u \in \mathbb{R}^3$ be a unit vector, and let
    $\pi_{u^\perp}$ denote the orthogonal projection onto the plane orthogonal to $u$.

    Then the area of the orthogonal projection of $S$ onto $u^\perp$ is given by
    \[
    \operatorname{Area}\!\bigl(\pi_{u^\perp}(S)\bigr)
    = \frac{1}{2} \int_{\partial S} \lvert n(x)^{\top} u \rvert \, dA(x),
    \]
    where $n(x)$ denotes the  unit vector that is orthogonal to the surface $\partial S$ at $x$, and
    $dA(x)$ is the surface area element on $\partial S$ (See \cite{vouk1948projected} for a reference). By definition, $\Delta_P$ is areas of the largest projection of $S$ onto the six faces of its bounding box $B$. Since the box is composed of three pairs of parallel faces aligned with an orthonormal basis $\{e_1, e_2, e_3\}$, we have:
    \[
    \Delta_P \geq \frac{1}{3}\sum_{i=1}^3 \operatorname{Area}(\pi_{e_i^\perp}(S)) = \frac{1}{6} \int_{\partial S} \sum_{i=1}^3 |n(x)^\top e_i| \, dA(x)
    ,
    \]
 Note that $\sum_i |n(x)^\mathsf{T}e_i|$ is the $L_1$ norm of the unit normal $n(x)$. Given that $1 \leq \|n(x)\|_1 \leq \sqrt{3}$ for any unit vector in $\mathbb{R}^3$, integrating these bounds over the surface $\partial S$ yields $\Delta_P \geq \Delta/6$, which concludes the proof.

\end{proof}

Finally, combining Lemmas \ref{lm:convex_b} and \ref{lm:convex_p} with the definition of approximate convexity completes the proof of Theorem \ref{thm:convex_apx_convex}.
\end{proof}

\subsection{Supporting evidence for $\Delta_p$ as a main parameter governing detection time}\label{sec:counterexample}

We have seen that for (approximately) convex targets, the parameters $\Delta$, $\Delta_B$, and $\Delta_P$ are all proportional, making the universal lower bound and the upper bound for the Cauchy strategy compatible. To further understand which parameter could best represent the running time, we next illustrate that both $\Delta$ and $\Delta_B$ could not be used in the formula in Theorem~\ref{thm:upper-bound-cauchy-box} instead of $\Delta_P$.

\begin{itemize}
    \item {\bf Surface area $\Delta$.}
Consider a construction parameterized by an integer
$r$: define a connected target $S$ that fits within a large ball of radius
$r$, and consists of $\Theta(r^3)$ disjoint balls of constant radius. This target has total surface area
$\Theta(r^3)$, and if Theorem \ref{thm:upper-bound-cauchy-box} were to apply with this parameter, its detection time would be $\bigO{\frac{n \log^3 n}{r^3}}$. However, detecting this target necessarily entails detecting the  bounding box $\texttt{Box}(S)$, which has surface area
$\Theta(r^2)$ and requires at least
$\Omega({\frac{n}{r^2}})$ time, by the universal lower bound. When
$r\gg \log^3 n$, this leads to a contradiction, implying that in Theorem \ref{thm:upper-bound-cauchy-box}, the parameter $\Delta_P$ cannot be directly replaced with the surface area $\Delta$ of a general target.\\

\item {\bf Bounded surface area $\Delta_B$.} Consider a target $S$ consisting of two orthogonal line segments of length $L$ and
unit thickness. Define the inflated target as $S^{(r)} = \{rx \: \text{ with } \: x \in S\}$, then the surface area of the inflated target scales as $\sim r^2L$,
implying that the maximal projection area satisfies $\Delta_P^{(r)} \approx c\, r^2 L$. In contrast, the surface area of the bounding box scales as$\Delta_B^{(r)} \approx c' \, r^2 L^2$.
Consequently,
\[
\frac{\Delta_P^{(r)}}{\Delta_B^{(r)}} \sim \frac{1}{L}.
\]
Therefore, for sufficiently large $L$, this ratio falls below $1/12$, and the target fails to be approximately convex.

\end{itemize}

\section{Diffusive L\'evy walks searching a line target}\label{sec:simpleRW}
\subsection{L\'evy walks on the discrete torus}
In this section, we consider a discrete 3D torus $\mathbb{T}$ containing $n$ nodes, and a target $S$ that is a path of length $L$ parallel to one of the axes.  Recall that the searcher starts at a node chosen uniformly at random,
hence, we may assume without loss of generality that $S$ lies on the $x$-axis. The searcher moves over this graph according to a L\'evy walk $X$ of parameter $\mu$ following the general construction of \cite{clementi2021search}; a precise definition of this process is provided in Methods, Section \ref{sec:methods}. We let $t_{\mathrm{detect}}^{X}(S)$ represent the expected time until the search process $X$ finds the target $S$ for the first time.

The goal of this section is to prove the following:

\begin{theorem}\label{thm:diff_lines}
     Let $X$ be a L\'evy walk with parameter $\mu > 2$ on the discrete three-dimensional cubic torus graph with $n$ nodes and let the target $S$ be a path of length $L$ parallel to one of the axes. Then,
    $$
    \mathbb{E}[t_{\mathrm{detect}}^{X}(S)] = \bigO{\frac{n\log{n}}{L}}.
    $$
 \end{theorem}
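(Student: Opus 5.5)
We follow the intuition sketched in the Results section. We work with the number of steps and convert to time at the end: since $\mu>2$, the expected step length is $\tau=\Theta(1)$ by Fact~\ref{fact:exp+var}, and the discrete analogue of Claim~\ref{claim:timeVSmoves} holds (lazy steps cost $0$ time, so time is at most steps times $\Theta(1)$ in expectation). It therefore suffices to show $\E[\mhit^X(S)]=\bigO{\frac{n\log n}{L}}$.

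\textbf{Step 1 (projection).} Write $X=(X_1,X_2,X_3)$ with $S\subseteq\{(x,0,0)\}$. The pair $(X_2,X_3)$ is a Markov chain on the 2D discrete torus with $n^{2/3}$ nodes. Its increments are the $(y,z)$-marginals of the 3D L\'evy increment, possibly zero (a lazy step), and it inherits the same heavy tail. We invoke Theorem~\ref{th:2d_hit} to get that the expected number of steps for $(X_2,X_3)$ to hit the point $v=(0,0)$, from any starting point, is $\mathcal{T}=\bigO{n^{2/3}\log n}$.

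\textbf{Step 2 (mixing along the axis).} At the hitting step $\sigma$ of $v$, we need $X_1(\sigma)$ to be roughly uniform on the cycle of length $n^{1/3}$. The difficulty is that $\sigma$ is a stopping time correlated with the increments. The plan is to condition on the entire $(y,z)$-trajectory. The direction and coordinate split of each step make $X_1$, given the $(y,z)$-increments, a sum of independent symmetric increments. Many of these are nondegenerate, notably the steps that are purely along the $x$-axis.

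Theorem~\ref{th:1d_mix} then gives a 1D mixing bound of $\bigO{n^{2/3}}$ steps for diffusive steps, which is at most the typical size of $\sigma$. From this we want: with constant probability $\sigma\ge c\,n^{2/3}$, and on that event $X_1(\sigma)$ has total variation distance at most $1/4$ from uniform. This yields $\Pr(X(\sigma)\in S)\ge c' L/n^{1/3}$.

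To handle the case where $\sigma$ is small, we run the argument from a uniformly random start. In that case $\Pr(\sigma<c\,n^{2/3})$ is small, because the projected start is uniform on $n^{2/3}$ points. Alternatively, we insert a burn-in of $\mmix^X$ steps before each trial.

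\textbf{Step 3 (geometric trials).} Define rounds. Each round waits for the projection to hit $v$, and then either detects $S$ or does not. Each round costs $\bigO{\mathcal{T}+n^{2/3}}$ expected steps from any state, by the uniform-in-start bound of Theorem~\ref{th:2d_hit} plus the burn-in. Each round succeeds with probability $p=\Omega(L/n^{1/3})$ regardless of the past, by the strong Markov property. Wald's identity for the number of rounds then gives
\[
\E[\mhit^X(S)]=\bigO{\mathcal{T}/p}=\bigO{\frac{n\log n}{L}}.
\]
Multiplying by $\tau=\Theta(1)$ concludes the argument.

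\textbf{Main obstacle.} The hard step is Step 2: establishing approximate uniformity of $X_1$ at the random time $\sigma$, uniformly over the state at the start of a round. The cleanest route I would try first is the burn-in: at the start of each round, force $\Theta(n^{2/3})$ steps, which is enough to mix the $x$-coordinate. Then use that, conditioned on the $(y,z)$-path, the subsequent $x$-increments only further convolve an already near-uniform distribution, and convolution with symmetric kernels preserves closeness to uniform. This gives uniformity at $\sigma$ at an additive cost of $\bigO{n^{2/3}}$ per round, which is absorbed into $\mathcal{T}$.
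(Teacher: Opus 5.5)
Your skeleton matches the paper's proof. You project onto the $yz$-plane and use the $\bigO{n^{2/3}\log n}$ hitting bound for the isotropic projected walk (Theorem~\ref{th:2d_hit_gen} via Lemma~\ref{th:2d_hit_gen_proj}). You separate rounds by a forced wait: the paper's $\tau_i$, the first visit of $X_{yz}$ to $S_{yz}$ after $\tau_{i-1}+T$, is exactly your burn-in. You then sum a geometric number of rounds. However, Step~2 fails as written. Total variation distance at most $1/4$ from uniform only gives $\Pr(X_1(\sigma)\in S_x)\ge L/n^{1/3}-1/4$, which is vacuous once $L<n^{1/3}/4$: a law within $1/4$ of uniform can put zero mass on $S_x$. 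You need accuracy $\varepsilon\le L/(2n^{1/3})$. The paper takes $T=\mmix^{(x)}(1/n^2)=\bigO{n^{2/3}\log n}$, using $\mmix(\varepsilon)=\bigO{\mmix\log(1/\varepsilon)}$, which gives pointwise probabilities $1/n^{1/3}\pm 1/n^2$. The extra $\log n$ is absorbed into $\mathcal{T}$, so your per-round cost should read $\bigO{\mathcal{T}+n^{2/3}\log n}$.

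The second gap is the one you flagged, and your main route does not close it. Let $b$ be the end of the burn-in. Both $\sigma$ and $D=X_1(\sigma)-X_1(b)$ depend on $X_{yz}(b)$, so the convolution step needs $X_1(b)$ near-uniform \emph{conditionally on} $X_{yz}(b)$. Lemma~\ref{th:1d_mix} only concerns the marginal chain $X_x$. In the discrete model the coordinates of an increment are dependent ($|\Delta x|+|\Delta y|+|\Delta z|=\ell$), so marginal mixing does not give this. The paper's write-up makes the same jump when it passes from a deterministic $m\ge T$ to the random time $\tau_i$.

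The clean repair is your alternative: burn in the full 3D chain for $\mmix^X(1/n^2)$ steps. Let $\nu$ be the law of $X(b)$ and $h(w,u)=\Pr_{(w,u)}(X(\sigma)\in S)$. Translation invariance in $x$ gives $n^{-1/3}\sum_w h(w,u)=L/n^{1/3}$ for every $u$. Hence $\sum\nu h\ge L/n^{1/3}-1/n^2$, uniformly over the state at the start of the round.

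You must then bound $\mmix^X(1/n^2)$; the hitting-time route of Lemma~\ref{th:1d_mix} would only give $\bigO{n}$ in 3D. Instead, note that each nearest-neighbour transition of the lazy chain has probability at least a constant. So its Dirichlet form dominates a constant multiple of that of the simple random walk on the 3D torus, and the spectral gap is $\Omega(n^{-2/3})$. This gives $\mmix^X(\varepsilon)\le\mathrm{gap}^{-1}\log\frac{n}{\varepsilon}=\bigO{n^{2/3}\log n}$. With these two changes your Step~3 goes through.
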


 It’s reasonable to assume that $X$ is a lazy random walk, where at each step, the probability of staying in the same position is $1/2$. Indeed, if we show that the lazy version of the process detects the target in time $\bigO{\frac{n \log n}{L}}$, the detection time for the non-lazy version will also be bounded by the same time.
 
\subsection{Upper bounds on  hitting (and mixing) times of L\'evy walks}

To prove the theorem, we first establish upper bounds on the hitting and mixing steps for a L\'evy walk with $\mu > 2$ on either the discrete 1D cycle or the 2D torus. While such bounds have been qualitatively noted in the literature (e.g. \cite{chiarini2021constructing}), we were unable to find a formal proof; we therefore provide a rigorous derivation below. Specifically, we prove that these values are bounded from above by the corresponding hitting and mixing steps of a simple random walk.

\begin{theorem}\label{th:2d_hit}
A L\'evy walk with parameter $\mu > 1$ starting from a fixed node on a one-dimensional cycle with $m$ nodes has expected hitting step $\bigO{m^2}$, whereas on a two-dimensional torus with $m^2$ nodes its expected hitting step is $\bigO{m^2 \log m}$.
\end{theorem}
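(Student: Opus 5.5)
The plan is to compare the L\'evy walk with a simple random walk through electrical network theory, following the strategy the paper announces just before the statement. The discrete L\'evy walk on the cycle $C_m$ (resp. the torus $\mathbb{T}^2_m$) is a reversible random walk on a weighted graph. Its transition kernel is $P(u,v)=p(\ell)/N_\ell$, where $\ell$ is the graph distance between $u$ and $v$ and $N_\ell$ is the number of nodes at distance $\ell$. This kernel is symmetric and vertex-transitive, so the stationary distribution is uniform. We view it as a network with conductances $c(u,v)=P(u,v)$, so the total conductance is $\sum_{u,v}c(u,v)=|V|$ (for $m$ or $m^2$ nodes). By the commute time identity, $\E_a[\tau_b]+\E_b[\tau_a]=c_{\mathrm{tot}}\,\eres{a}{b}{X}$. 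By transitivity, $\E_a[\tau_b]=\E_b[\tau_a]$ (vertex-transitive walks have symmetric hitting times). Hence the expected hitting step is $\frac{|V|}{2}\eres{a}{b}{X}$.

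The key step is a Rayleigh monotonicity comparison. The L\'evy network contains every nearest-neighbour edge with conductance $P(u,v)=p(1)/N_1$. Here $p(1)=\Theta(1)$ because $\mu>1$ makes the normalizing constant bounded, and $N_1=2$ or $4$. Deleting all longer edges can only increase effective resistance. Therefore $\eres{a}{b}{X}\le C\cdot \eres{a}{b}{\mathrm{SRW}}$, where the latter is the resistance of the unit-conductance nearest-neighbour grid. The classical estimates then finish the argument. On $C_m$ the resistance between any two nodes is at most $m/4$, giving hitting step $\bigO{m\cdot m}=\bigO{m^2}$. On $\mathbb{T}^2_m$ the resistance is $\bigO{\log m}$, which is standard via Nash-Williams/flow arguments or known from the SRW hitting time $\Theta(m^2\log m)$ on the 2D torus. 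This gives $\bigO{m^2\log m}$.

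The main obstacle is laziness and the exact form of the kernel. The discrete model (Methods) samples $\ell\ge 1$ and then a uniform node at that Manhattan distance, so $N_\ell$ must be computed with torus wrap-around. I would check that the kernel is still symmetric, which holds because distance is symmetric and $N_\ell$ is independent of the base node. Laziness only changes the conductances by a factor $1/2$ and adds self-loops, which do not affect resistance, so hitting steps change by at most a constant factor. Finally, the statement says "starting from a fixed node" without fixing the target. I would take the worst case over targets, which the resistance bound handles uniformly since the SRW resistance bound is uniform over all pairs.
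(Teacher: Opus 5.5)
Your proposal is correct and follows essentially the same route as the paper. You write the hitting step as $\frac{|V|}{2}\mathcal{R}(a\leftrightarrow b)$ via the commute-time identity and transitivity, delete all non-nearest-neighbour edges by Rayleigh monotonicity, and compare with the simple random walk, whose resistance is $O(m)$ on the cycle and $O(\log m)$ on the 2D torus. The paper merely abstracts the properties you use (transitivity, symmetry, a constant nearest-neighbour transition probability) into an ``isotropic walk''. One small slip: Nash-Williams gives lower bounds on resistance, so the $O(\log m)$ upper bound should come from Thomson's principle or, as you also note, from the known SRW hitting time.
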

The proof of this theorem is derived from several structural properties of the L\'evy walk on either the cycle or the two-dimensional torus. We categorize all processes satisfying these conditions as {isotropic walks}.

\begin{definition}\label{def:isotropic-walk}
    We say that a Markov chain on a torus of any dimension is an { isotropic walk} if it satisfies the following properties: 
    \begin{enumerate}
        \item The process is transitive, meaning that the behavior of the walk looks the same from any starting point.
        \item The process is symmetric, i.e., the probability of moving from $u$ to $v$ is the same as moving from $v$ to $u$.
        \item The probability of transition from any node to an adjacent neighbor (at distance $1$) is a constant $0 < c \leq 1$ invariant with respect to the chosen specific neighbor.
    \end{enumerate}
\end{definition}
Theorem~\ref{th:2d_hit} follows from Theorem~\ref{th:2d_hit_gen} below, by  applying on it the L\'evy walk on the two-dimensional torus and on the cycle, since both are isotropic walks. Indeed, the first two conditions of Definition~\ref{def:isotropic-walk} 
hold by construction, while the third is satisfied because the 
constant $c$ is at least one-quarter the probability of a 
unit-length step. This probability, given by the normalization 
constant $\left(\sum_{l=1}^{l_{\max}} l^{-\mu}\right)^{-1}$, 
remains lower-bounded by a constant for all $\mu > 1$.

\begin{theorem}\label{th:2d_hit_gen}
 An isotropic walk starting from a fixed node on a one-dimensional cycle with $m$ nodes has expected hitting step $\bigO{m^2}$, whereas on a two-dimensional torus with $m^2$ nodes its expected hitting step is $\bigO{m^2 \log m}$.
\end{theorem}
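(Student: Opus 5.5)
We compare the isotropic walk to a simple random walk through electrical networks and commute times. Let $P$ be the transition matrix of the isotropic walk on the torus $G$, which is either the cycle with $m$ nodes or the 2D torus with $m^2$ nodes. By symmetry (property 2) and transitivity (property 1), $P$ is doubly stochastic and symmetric, so the uniform distribution $\pi$ is stationary and the chain is reversible with respect to $\pi$. We view it as an electrical network on the complete graph on the vertex set, with conductance $C(u,v)=\pi(u)P(u,v)=P(u,v)/N$, where $N$ is the number of nodes. For a reversible chain the commute time identity gives
\[
\E_a[\tau_b]+\E_b[\tau_a] = \frac{\mathcal{R}(a\leftrightarrow b)}{\pi(a)}\cdot \pi(a)\sum_{u,v}C(u,v)\cdot\frac{1}{\pi(a)}\Big/ \ldots ,
\]
or more cleanly, with conductances $c(u,v)=P(u,v)$ and total conductance $c_G=\sum_{u,v}P(u,v)=N$,
\[
\E_a[\tau_b]+\E_b[\tau_a]=c_G\,\mathcal{R}_P(a\leftrightarrow b)=N\,\mathcal{R}_P(a\leftrightarrow b),
\]
where $\mathcal{R}_P$ is effective resistance with edge resistances $1/P(u,v)$. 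Hence it suffices to bound the effective resistance between any two nodes.

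The key step is Rayleigh's monotonicity principle. Delete every edge of the network except the nearest-neighbour edges $\{u,v\}$ with $\|u-v\|=1$. Deleting edges (setting their conductance to zero) can only increase effective resistance. By property 3, each remaining edge has conductance exactly $c$, a constant. So $\mathcal{R}_P(a\leftrightarrow b)\le \frac{1}{c}\,\mathcal{R}_{\mathrm{grid}}(a\leftrightarrow b)$, where $\mathcal{R}_{\mathrm{grid}}$ is the unit-resistance effective resistance of the standard cycle or 2D torus graph. Standard estimates then apply:
\begin{itemize}
\item on the cycle, $\mathcal{R}_{\mathrm{grid}}\le m$, since it is at most the length of one arc;
\item on the 2D torus, $\mathcal{R}_{\mathrm{grid}}(a\leftrightarrow b)=O(\log m)$, by the Nash-Williams bound in the reverse direction (i.e.\ constructing a unit flow from $a$ to $b$ with energy $O(\log m)$ by spreading it over concentric annuli, or by citing the known result for $\mathbb{Z}_m^2$).
\end{itemize}
Plugging in gives $\E_a[\tau_b]\le N\mathcal{R}_P=O(m\cdot m)=O(m^2)$ on the cycle and $O(m^2\log m)$ on the torus, for every pair $a,b$. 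This covers a fixed start and any target node, and hence any nonempty target set, since hitting a set is no slower than hitting one of its points.

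The main obstacle is verifying that the commute-time identity applies with our normalization. This requires reversibility, which follows from symmetry of $P$ together with uniform $\pi$. Laziness is also harmless: self-loops contribute only a constant factor to $c_G$ when included, and the identity still holds with $c_G=\sum_{u,v}P(u,v)=N$. The other point needing care is the $O(\log m)$ torus resistance bound. If it is not cited, I would prove it with the explicit flow that sends $1/(8k)$ units across each edge at $\ell_\infty$-distance $k$ from $a$ and symmetrically into $b$, giving energy $\sum_{k\le m}O(k)\cdot O(1/k^2)=O(\log m)$.
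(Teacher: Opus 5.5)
Your proposal is correct and follows essentially the same route as the paper. Both arguments view the walk as an electrical network with conductances $P(u,v)$ and use the commute-time identity to bound the hitting step by $N\,\mathcal{R}(a\leftrightarrow b)$; the paper additionally uses transitivity to get exactly half the commute time. Both then apply Rayleigh monotonicity to keep only the nearest-neighbour edges, each of conductance $c$.

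The one difference is at the last step. The paper converts back to the simple random walk and cites its known hitting-time bounds from Levin--Peres, whereas you bound the grid resistance directly. Citing the $O(\log m)$ resistance on the two-dimensional torus is fine. However, upper bounds on resistance come from Thomson's principle, not Nash-Williams. Your sketched $\ell_\infty$-shell flow would also need care at the corners and in joining the flow out of $a$ with the flow into $b$.
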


\begin{proof}

    We show that the expected hitting step of an isotropic walk is bounded above (up to constants) by the expected hitting time of the simple random walk on the one or two dimensional discrete torus, respectively. Since the latter is known to satisfy the claimed  upper bounds (e.g., \cite{levin2017markov}), the same bounds follow for isotropic walks.

    \paragraph{Hitting step on the two-dimensional torus}

    We can interpret the isotropic walk on the two-dimensional torus as a random walk on a complete weighted graph. Here, the edges represent all possible transitions, with weights assigned according to the corresponding transition probabilities of the original walk. Denote this graph $G$ and observe that it is transitive and undirected due to the definition of isotropic walk, hence the expected hitting time between node $a$ and $b$ is given by the product between the number of nodes in the graph and the effective resistance between those two nodes (this follows from Proposition 10.10 and Proposition 10.7 in \cite{levin2017markov}). In symbols,
    \begin{align}\label{eq:ht_G}
        \E_{a}[\mhit(b)] = \frac{m^2}{2}\eres{a}{b}{G}.
    \end{align}
    where $\mathbb{E}_a[\mhit(b)]$ denotes the expected hitting step of node $b$ when the process starts from node $a$, and $m^2$ is the total number of nodes in the two-dimensional torus. Observe that, by Rayleigh’s Monotonicity Law (Corollary 9.13 in \cite{levin2017markov}), deleting edges from $G$ cannot decrease the effective resistance. Hence, letting $G'$ be the graph obtained from $G$ by removing all edges but those between neighboring nodes on the two-dimensional torus, the right-hand side of Eq.~\eqref{eq:ht_G} is at most $\frac{m^2}{2}\eres{a}{b}{G'}$. By isotropy, every edge of $G'$ has the same weight $0<c\leq 1$, rescaling all edge weights by a factor $1/c$ yields the graph $c^{-1}G'$, i.e., $G'$ with all edge weights equal to 1; this graph is exactly the two-dimensional discrete torus and note that its effective resistance $\eres{a}{b}{c^{-1}G'}$ is $c\eres{a}{b}{G'}$. Under this rescaling, the expected hitting step is upper bounded by $\frac{m^2}{2c}\,\eres{a}{b}{c^{-1}G'}$. But $c$is lower bounded by a constant and the hitting step of the simple walk on the two dimensional torus is $2m^2\eres{a}{b}{c^{-1}G'}$ (See Eq. (10.27) in \cite{levin2017markov}), so we have that
    \begin{align*}
        \E_{a}[\mhit(b)] = \bigO{\E_{a}^\frac{G'}{c}[\mhit(b)]} = \bigO{m^2\log m},
    \end{align*}
    where $\mathbb{E}^{\frac{G'}{c}}_{a}\!\left[\mhit(b)\right]$ denotes the expected hitting step of the node $b$ for the simple random walk on the two-dimensional torus, started from $a$; and the second equality follows from Proposition~10.21 in \cite{levin2017markov}. 
    
    \paragraph{Hitting step on the cycle}
    We proceed exactly as in the two-dimensional case; the only difference is that we now start from isotropic walk on the cycle and we build the corresponding auxiliary graphs from its one-step transitions. Let $C$ be the weighted graph formed from a cycle in which each node is connected to every other node. For each pair of nodes, we assign to the their edge the weight given by the one-step probability that the walk moves between them. Since the process is symmetric, this weight is the same in both directions, so $C$ is an undirected weighted graph. Now define $C'$ from $C$ by deleting all edges except those joining nearest neighbors on the cycle (i.e., nodes at  distance $1$ on the cycle graph). By translation invariance, $C'$ is a cycle in which every remaining edge has the same weight $0<c\leq 1$. If we rescale all edge weights by the factor $c^{-1}$, the resulting walk coincides with the simple random walk on the cycle. It is well known that for the simple random walk on a cycle with $m$ nodes, the maximal expected hitting step is $\bigO{m^2}$ (see Example~10.27 in \cite{levin2017markov} for a formal proof).

This completes the proof of Theorem~\ref{th:2d_hit_gen}.
\end{proof}

\subsection{Proof of Theorem~\ref{thm:diff_lines}}

Following the same steps as in the proof of Claim~\ref{claim:timeVSmoves}, one sees that the claim also applies in the discrete setting, where $\tau$ denotes the expected step length with respect to the Manhattan distance. For $\mu > 2$, this expectation is $\bigO{1}$ (see Fact~\ref{fact:exp+var}). Therefore, to prove Theorem~\ref{thm:diff_lines}, it suffices to prove:
\begin{align}\label{eq:discr_m_ub}
\max_a\mathbb{E}_a[\mhit^{X}(S)] = \bigO{\frac{n\log n}{L}},
\end{align}
where $\E_a\left[\cdot\right]$ denotes the expected value when $X$ starts from node $a$. Indeed, 
\[
\E\left[\mhit^X(S)\right] = \E_\pi\left[\mhit^X(S)\right] \leq \max_{a}\E_a\left[\mhit^X(S_{yz})\right],
\]
where $\E_\pi[\cdot]$ denotes expectation under the uniform initial distribution $\pi$, and the first equality follows from the assumption that the initial node is chosen uniformly at random.

We are interested in the projections of the 3D walk $X$ on both the $x$-axis, and on the $yz$-plane. In particular, let $m_{\text{mix}}^{(x)}(\cdot)$ be the mixing step of the 1D random walk $X_x$ and $\mhit^{(yz)}(\cdot)$ be the hitting step of the 2D random walk $X_{yz}$.


\begin{theorem}\label{thm:SRW_line}
Let $X$ be an irreducible and aperiodic Markov chain on the three-dimensional cubic torus with $n$ nodes.
Assume that $X$ satisfies the following:
\begin{itemize}
    \item 
$X$ has uniform stationary distribution, 
\item $\mmix^{(x)}(1/n^{2})=\bigO{\E\left[\mhit^{(yz)}(S_{yz})\right]}$, and \item
$\E_a\left[\mhit^{(yz)}(S_{yz})\right]=\bigO{n^{2/3}\log n}$ for every $a$ on the $yz$-plane. 
\end{itemize}
Then
\[
\max_{a}\E_a\!\left[m_{\mathrm{hit}}^{X}(S)\right]=\bigO{\frac{n\log n}{L}}.
\]
\end{theorem}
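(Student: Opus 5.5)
We will use a phase (renewal) argument built on the decomposition of $X$ into its projection $X_x$ on the $x$-axis and its projection $X_{yz}$ on the $yz$-plane. Write $T_{\mathrm{hit}} := \max_a \E_a[\mhit^{(yz)}(S_{yz})] = \bigO{n^{2/3}\log n}$, where $S_{yz}$ is the single point onto which the line target projects. Write $M := \mmix^{(x)}(1/n^2)$, which by hypothesis is $\bigO{T_{\mathrm{hit}}}$. Set the phase length to $K := C\,(T_{\mathrm{hit}} + M)$ for a large constant $C$. The goal is to show that from an arbitrary starting node $a$, the walk hits $S$ within $K$ steps with probability at least $p = \Omega(L/n^{1/3})$. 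By the Markov property we can then restart at the end of each failed phase. This gives $\max_a \E_a[\mhit^X(S)] \le K/p = \bigO{n^{2/3}\log n \cdot n^{1/3}/L} = \bigO{(n\log n)/L}$, as claimed.

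To lower-bound the success probability of a phase, we first let the walk run for $M$ steps. Then, from that moment, we wait for the first time the $yz$-projection visits $S_{yz}$. By Markov's inequality this happens within $2T_{\mathrm{hit}}$ further steps with probability at least $1/2$. Let $\sigma$ denote this (random) hitting step, counted from time $0$. At step $\sigma$ the walker sits on the line $\{S_{yz}\}\times \text{cycle}$ of length $n^{1/3}$, and it lies in $S$ iff its $x$-coordinate falls in the segment $I$ of length $L$. We need $\Pr(X_x(\sigma)\in I)\gtrsim L/n^{1/3}$.

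The main obstacle is that $\sigma$ is a stopping time determined by the $yz$-coordinates, which are correlated with the $x$-coordinate: in the Lévy walk, a step's length is shared among its coordinates. We would handle this by conditioning on the whole $yz$-trajectory, equivalently on the sequence of $yz$-increments. Given it, the $x$-increments are independent but no longer identically distributed. Each one is still symmetric, and it still has a uniformly positive probability of being a lazy or unit move, since a step that is null in $yz$ leaves the $x$-move with a uniformly positive chance of being $\pm1$ or $0$. Our first attempt would be to argue more crudely instead. Conditioned on the past up to step $\sigma - M'$, with $M' = M$, the $x$-coordinate over the final $M$ steps is close to mixed. Indeed, the symmetric moves coming from lazy steps (probability $1/2$, independent of everything) already form a lazy walk on the cycle that mixes within $M$ steps up to error $1/n^2$.

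If this coupling is too delicate, the fallback is to use laziness more explicitly. We model each step as first flipping a fair coin for staying put, and define $\sigma$ through the non-lazy $yz$-moves only. Alternatively, we interleave: after the $yz$-hit, we run $M$ more steps during which, by the hypothesis on $\mmix^{(x)}$ applied via a union bound, the $x$-coordinate is $1/n^2$-close to uniform. The $yz$-coordinate then returns to $S_{yz}$ with comparable frequency, since by transitivity and symmetry the expected number of visits to $S_{yz}$ in the window is of the same order. A second-moment (visit-count) estimate then gives the required $\Omega(L/n^{1/3})$ overall success probability per phase. Finally, uniformity of the stationary distribution ensures that the mixed $x$-law is uniform on the cycle, which yields exactly the factor $L/n^{1/3}$, minus an additive error of $n^{-2}$ that is negligible.
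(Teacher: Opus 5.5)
Your skeleton is the paper's: let the $x$-projection mix, wait for the $yz$-projection to hit $S_{yz}$, argue that the walk is then in $S$ with probability $\Omega(L/n^{1/3})$, and pay for a geometric number of phases of length $\bigO{n^{2/3}\log n}$. Restarting at deterministic multiples of $K$, which needs only a lower bound on the per-phase success probability, is tidier bookkeeping than the paper's $\tau_i$/$J$ telescoping.

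\textbf{The gap.} The missing piece is the step you flag yourself and then do not close: a lower bound on $\Pr(X_x(\sigma)\in I)$ at the $yz$-stopping time $\sigma$. The paper's proof covers this only by asserting that $\tau_i-\tau_{i-1}\ge T\ge \mmix^{(x)}(1/n^2)$ makes $X_x(\tau_i)$ near-uniform. That is exactly the inference you distrust, so you cannot borrow it.

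Your replacements do not work:
\begin{itemize}
\item In the ``crude'' argument, $\sigma-M$ is not a stopping time. Conditioning on the history up to $\sigma-M$ therefore does not leave the last $M$ $x$-increments distributed as a fresh walk, because the fact that $\sigma$ closes the window is information about those very steps.
\item Lazy steps do not move the $x$-coordinate at all, so they cannot ``form a lazy walk on the cycle''.
\item The second-moment fallback is only named. Its second moment needs 3D return estimates to $S$ that none of the hypotheses supply.
\end{itemize}

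\textbf{The listed hypotheses cannot suffice.} Take the random walk on $\mathbb{Z}_N^3$, $N=n^{1/3}$, with i.i.d.\ increments
\begin{itemize}
\item $(0,0,0)$ w.p.\ $\tfrac12-\epsilon$,
\item $(\pm1,0,0)$ w.p.\ $\epsilon/2$ each,
\item $\pm(1,1,0)$ and $\pm(1,0,1)$ w.p.\ $\tfrac18$ each.
\end{itemize}
This chain is irreducible and aperiodic, and its stationary law is uniform. Both projections are lazy nearest-neighbour walks, so all three hypotheses hold. However, $X_x-X_y-X_z$ changes only at rate $\epsilon$. From any start where this quantity lies outside $I-s_y-s_z$ (such starts exist whenever $L<n^{1/3}$), the expected hitting step of $S$ is at least $1/\epsilon$, e.g.\ $n^{10}$ for $\epsilon=n^{-10}$.

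\textbf{How to close it.} The missing step must use the i.i.d.-increment structure of the lazy L\'evy walk, along the route you mention first and then abandon.
\begin{itemize}
\item Conditioned on the whole $yz$-increment sequence, the $x$-increments are independent.
\item Those at steps with zero $yz$-increment are i.i.d., lazy, and have mass bounded below on $\pm1$, i.e.\ they form an isotropic walk on the cycle.
\item In your scheme $\sigma\ge M$. By a Chernoff bound, at least $M/4$ of the first $M$ steps have zero $yz$-increment, with probability $1-e^{-\Omega(M)}$.
\item Take $M=Cn^{2/3}\log n$ with $C$ large. These steps alone make $X_x(\sigma)$ $n^{-2}$-close to uniform given the $yz$-trajectory. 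This follows from the hitting-time bound, mixing via hitting times, and amplification, exactly as in Lemma~\ref{th:1d_mix}.
\item Convolving with the remaining conditionally independent increments preserves this closeness.
\end{itemize}
This gives per-phase success at least $(\tfrac12-o(1))(L/n^{1/3}-n^{-2})$, after which your renewal argument yields $\bigO{n\log n/L}$.
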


\begin{proof}
    Let $T$ be a positive integer to be defined later and set $\tau_0:=0$. For $i\ge 1$, let $\tau_i$ be the first step after $\tau_{i-1}+T$ steps, at which the projected walk $X_{yz}$ visits the projected target set $S_{yz}$. Formally,
    \[
    \tau_i \;:=\; \inf\bigl\{\, m\ge \tau_{i-1}+T \;:\; X_{yz}(m)\in S_{yz}\,\bigr\}.
    \]
    Likewise, let $J$ be the smallest $i\geq 1 $ such that $X(\tau_i)\in S$, i.e., the smallest index at which the 3D walk hits $S$. Since the detection step is the first $m$ with $X(m)\in S$, we have the bound
    \[
    m^X_{\mathrm{hit}}(S)\ \leq \tau_J.
    \]
    that holds for every starting node. Moreover, by telescoping,
    \[
    \tau_J=\sum_{i=1}^{J}(\tau_i-\tau_{i-1})
          =\sum_{i=1}^{\infty}(\tau_i-\tau_{i-1})\mathbf 1_{\{J\ge i\}}.
    \]
    Taking expectations with respect to a fixed starting node $X(0)=a$,
    \begin{align}\label{eq:line_det}
    \mathbb E_a[m_{\mathrm{hit}}^{X}(S)]
    \le \sum_{i=1}^{\infty}\mathbb E_a\!\left[(\tau_i-\tau_{i-1})\mathbf 1_{\{J\ge i\}}\right].
    \end{align}
    For each $i$, apply the tower property with respect to $X(0),X(1),\dots, X({\tau_{i-1}+T})$, i.e., the initial position $X(0)$, together with the first $\tau_{i-1}+T$ steps:
    \begin{align}\label{eq:tower_decomp}
    \mathbb E_a\!\left[(\tau_i-\tau_{i-1})\mathbf 1_{\{J\ge i\}}\right]
    =
    \mathbb E_a\!\left[\mathbf 1_{\{J\ge i\}}\,
    \mathbb E\!\left[\tau_i-\tau_{i-1}\mid X(0),\dots, X({\tau_{i-1}+T})\right]\right],
    \end{align}
where the equality holds because $\mathbf 1_{\{J\ge i\}}$ is a function of $X(0),\dots, X({\tau_{i-1}+T})$.

    \begin{claim}
       $\E_a[m_{\mathrm{hit}}^{X}(S)] \leq \left(T + \bigO{n^{2/3}\log{n}}\right)\mathbb{E}_a\left[J\right].$
    \end{claim}
    \begin{proof}
        Let $\mathbb{E}_{a}\!\left[m_{\mathrm{hit}}^{(yz)}(S_{yz})\right]$ be the expected value of $m_{\mathrm{hit}}^{(yz)}(S_{yz})$ when $a$ is the initial location of the searcher in the projected chain. Observe
        that after the deterministic waiting steps of $T$ steps after $\tau_{i-1}$ (and before $\tau_i$), the additional expected steps required to
        hit $S_{yz}$ depends only on the projected process started from position $X_{yz}(\tau_{i-1}+T)$.
        Hence, the inner conditional expectation in Eq.~\eqref{eq:tower_decomp} can be rewritten as
    \begin{align*}
    \mathbb{E}\!\left[\tau_i-\tau_{i-1}\,\middle|\, X(0),\dots, X(\tau_{i-1}+T)\right]
    = T + \mathbb{E}_{X_{yz}(\tau_{i-1}+T)}\!\left[m_{\mathrm{hit}}^{(yz)}(S_{yz})\right].
    \end{align*}
    By assumption, the expected hitting step is upper bounded by $\bigO{n^{2/3}\log{n}}$, which is independent of $i$. Thus, we can rewrite Eq.~\eqref{eq:tower_decomp} as follows
     \begin{align}\label{eq:tower_ub}
        \mathbb E_a\!\left[(\tau_i-\tau_{i-1})\mathbf 1_{\{J\ge i\}}\right] = \left(T + \bigO{n^{2/3}\log{n}}\right)\mathbb{E}_a\!\left[1_{\{J\ge i\}}\right].
    \end{align}
   Substituting Eq.~\eqref{eq:tower_ub} in Eq.~\eqref{eq:line_det}, we obtain
    \begin{align}\label{eq:ub_walk}
        \mathbb E_a[m_{\mathrm{hit}}^{X}(S)] &\leq \left(T + \bigO{n^{2/3}\log{n}}\right)\sum_{i\geq 1}\mathbb{E}_a\!\left[1_{\{J\ge i\}}\right] \notag \\
        &= \left(T + \bigO{n^{2/3}\log{n}}\right)\mathbb{E}_a\left[J\right].
    \end{align}
    This completes the proof of the claim.
\end{proof}

\begin{claim}
    $\mathbb{E}_a[J]=\Theta\!\left(\frac{n^{1/3}}{L}\right)$
\end{claim}
    \begin{proof}

    By assumption, the mixing step of the process projected onto the $x$-axis is $\mmix^{(x)}(1/n^2) \leq Cn^{2/3}\log n$ for some positive constant $C$ and $n$ large enough. Hence, after $\mmix^{(x)}(1/n^2)$ steps the distribution of the process on the $x$-axis is approximately uniform. In particular, the probability that $X_x$ is on a node $w$ is
\begin{align*}
\left[ \frac{1}{n^{1/3}} - \frac{1}{n^2}, \frac{1}{n^{1/3}} + \frac{1}{n^2} \right],
\end{align*}
Moreover, for every $m \geq \mmix^{(x)}(1/n^2)$ and for all $u$ and $w$ lying on the $x$-axis:
\begin{align*}
    \Prob\!\big(X_x(\tau_{i-1}+m) = w \mid X_x(\tau_{i-1}) = u\big) = \Theta\left(\frac{1}{n^{1/3}}\right).
\end{align*}
 Thus, if we set \[T = Cn^{2/3}\log n,\] then $X_x(\tau_i)$ is almost uniform on the $x$-axis, regardless of its location at step $\tau_{i-1}$. Formally, since $T\geq \mmix^{(x)}(1/n^2)$ and $\tau_i-\tau_{i-1}\geq T$, we obtain that for every $v$ and $u$ on the $x$-axis:
\begin{align}\label{eq:proba_xproj}
    \Prob(X_x(\tau_i)=v\mid X_x(\tau_{i-1}) = u) = \Theta\left(\frac{1}{n^{1/3}}\right).
\end{align}
Observe that Eq.~\eqref{eq:proba_xproj} implies that if we know the starting point and that $X_x(\tau_{i-1})$ did not hit the target, then the probability of hitting it at step $\tau_{i}$ is $\Theta(L/n^{1/3})$. That is,
\begin{align*}
    & \Prob\!\left(
        X_x(\tau_i)\in S_x
        \,\middle|\,
        \bigcap_{j=1}^{i-1} X_x(\tau_j)\notin S_x,\, X(0)=a
    \right) \\
    &= \sum_{u}
    \Prob\!\left(
        X_x(\tau_i)\in S_x
        \,\middle|\,
        X_x(\tau_{i-1})=u
    \right)
    \Prob\!\left(
        X_x(\tau_{i-1})=u
        \,\middle|\,
        \bigcap_{j=1}^{i-1} X_x(\tau_j)\notin S_x,\, X(0)=a
    \right) \\
    &= \Theta\!\left(\frac{L}{n^{1/3}}\right) \sum_{u}
    \Prob\!\left(
        X_x(\tau_{i-1})=u
        \,\middle|\,
        \bigcap_{j=1}^{i-1} X_x(\tau_j)\notin S_x,\, X(0)=a
    \right)\\
    &= \Theta\!\left(\frac{L}{n^{1/3}}\right),
\end{align*}
    where the second equality follows from the fact that $S_x$ has cardinality $L$ together with Eq.~\eqref{eq:proba_xproj}, while the third equality holds because the sum is taken over all possible events and therefore equals~1.

 Finally, we can prove that the expected value of $J$ is $\mathcal{O}\left(n^{1/3}/L\right)$. In symbols,

\begin{align*}
    \Prob(J \ge k \mid X(0) =a)
    &= \Prob\!\left(\bigcap_{i=1}^{k-1} X_x(\tau_i)\notin S_x\,\middle|X(0)=a\right) \\
    &= \prod_{i = 1}^{k-1}
    \Prob\!\left(
        X_x(\tau_i)\notin S_x
        \,\middle|\,
        \bigcap_{j=1}^{i-1} X_x(\tau_j)\notin S_x,\, X(0)=a
    \right) \\
    &= \prod_{i = 1}^{k-1}\left[
        1-\Prob\!\left(
            X_x(\tau_i)\in S_x
            \,\middle|\,
            \bigcap_{j=1}^{i-1} X_x(\tau_j)\notin S_x,\, X(0)=a
        \right)
    \right] \\
    &= \prod_{i = 1}^{k-1}\left[1-\Theta\!\left(\frac{L}{n^{1/3}}\right)\right] \\
    &= \left[1-\Theta\!\left(\frac{L}{n^{1/3}}\right)\right]^{k-1}.
\end{align*}
Consequently, $J$ has a geometric tail, and hence $\mathbb{E}_a[J]=\Theta\!\left(\frac{n^{1/3}}{L}\right)$.
\end{proof}
Altogether, substituting this bound and the definition of $T$ into Eq.~\eqref{eq:ub_walk} yields
    $\mathbb{E}_a\left[m_{\mathrm{hit}}^{X}(S)\right]=\mathcal{O}\!\left(\frac{n\log n}{L}\right)$.
    This completes the proof of Theorem~\ref{thm:SRW_line}.\end{proof}

Observe that the L\'evy walk on the three-dimensional torus is irreducible and aperiodic, and admits the uniform distribution as its stationary measure ($X$ satisfies Eq.~(1.29) in \cite{levin2017markov}). Therefore, it suffices to show that $\mmix^{(x)}(1/n^{2})$ and $\max_a\E_a\left[\mhit^{(yz)}(S_{yz})\right]$ are both $\bigO{n^{2/3}\log n}$ in order to apply Theorem~\ref{thm:SRW_line} to the L\'evy walk.

\begin{lemma}\label{th:2d_hit_gen_proj}
    $X_x$ has expected hitting step $\bigO{n^{2/3}}$ and $X_{yz}$ has expected hitting step $\bigO{n^{2/3}\log n}$ for every starting node.
\end{lemma}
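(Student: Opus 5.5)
The plan is to reduce the lemma to Theorem~\ref{th:2d_hit_gen}. The key observation is that both projections $X_x$ and $X_{yz}$ are themselves isotropic walks in the sense of Definition~\ref{def:isotropic-walk}: $X_x$ on the cycle with $m=n^{1/3}$ nodes, and $X_{yz}$ on the two-dimensional torus with $m^2=n^{2/3}$ nodes. Once this is checked, Theorem~\ref{th:2d_hit_gen} gives expected hitting steps $\bigO{m^2}=\bigO{n^{2/3}}$ and $\bigO{m^2\log m}=\bigO{n^{2/3}\log n}$ respectively, uniformly over the starting node, which is exactly the claim.

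First I would verify that each projection is a Markov chain. A step of the 3D walk from $u$ is a uniformly random node at Manhattan distance $\ell$, with $\ell$ drawn independently of $u$ (and with laziness $1/2$). The displacement vector is therefore independent of the current position and i.i.d.\ across steps. Its projection onto the $x$-coordinate, or onto the $yz$-coordinates, is then also i.i.d.\ and independent of the projected position. Hence each projected process is a random walk with i.i.d.\ increments on the corresponding cycle or torus. Transitivity is then immediate, since the increment law does not depend on the position.

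Next come symmetry and the nearest-neighbour condition. For symmetry, I would note that the uniform distribution over the sphere of Manhattan radius $\ell$ is invariant under coordinate sign flips and coordinate permutations. The projected increment law is therefore invariant under $v\mapsto -v$, so $P(u,u+v)=P(u,u-v)=P(u+v,u)$. The same invariance, applied to sign flips and to the swap of $y$ and $z$, shows that all projected unit moves ($\pm e_x$ on the cycle; $\pm e_y,\pm e_z$ on the 2D torus) have the same probability $c$.

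It remains to bound $c$ below by a constant, and I expect this to be the only step needing care. I would lower bound $c$ by a single event: the 3D walk takes a non-lazy step with $\ell=1$, which has probability at least $\tfrac12\left(\sum_{l\le \ell_{\max}} l^{-\mu}\right)^{-1}=\Omega(1)$ for $\mu>1$, and the chosen neighbour is the specific one, which happens with probability $1/6$. Such a step produces exactly a unit projected move in the prescribed direction. Other steps could also project to a unit move, but they only increase $c$, so $c=\Omega(1)$ and $c\le 1$.

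One subtlety is that a projected step may be zero, for example when the 3D step moves only along $y$ or $z$ in the case of $X_x$. This is harmless: Definition~\ref{def:isotropic-walk} permits holding, and the electrical-network argument of Theorem~\ref{th:2d_hit_gen} only deletes edges, so self-loops do not matter. Applying Theorem~\ref{th:2d_hit_gen} with $m=n^{1/3}$ then concludes the proof.
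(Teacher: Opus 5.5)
Your proposal is correct and follows the paper's own argument. The paper proves the lemma in one line: it observes that $X_x$ and $X_{yz}$ are isotropic walks on a cycle with $n^{1/3}$ nodes and on a two-dimensional torus with $n^{2/3}$ nodes, and then invokes Theorem~\ref{th:2d_hit_gen}. Your checks of the three isotropy conditions fill in details the paper leaves implicit: translation-invariant i.i.d.\ increments, sign-flip and $y$--$z$-swap symmetry, and the bound $c\ge \tfrac{1}{12}\bigl(\sum_{l\le \ell_{\max}} l^{-\mu}\bigr)^{-1}=\Omega(1)$ from unit 3D steps.
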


Lemma~\ref{th:2d_hit_gen_proj} follows from the observation that $X_x$ and $X_{yz}$ 
are isotropic walks on a cycle with $n^{1/3}$ nodes and a two-dimensional torus 
with $n^{2/3}$ nodes, respectively. Hence, the bounds on the hitting steps follow by applying Theorem~\ref{th:2d_hit_gen} to the projected L\'evy walks. Finally, we can use the hitting step on the cycle to upper bound $\mmix^{(x)}(1/n^{2})$, the {\em mixing step}, i.e., the minimum number of steps needed for the total variation distance to the stationary distribution to drop below $1/n^2$.

\begin{lemma}\label{th:1d_mix}
    The mixing step along the $x$-line is $\mmix^{(x)}(1/n^{2})=\bigO{n^{2/3}\log n}$.
\end{lemma}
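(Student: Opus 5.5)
We bound the mixing step of the projected walk $X_x$ on the cycle of $N:=n^{1/3}$ nodes through its hitting times. The paragraph preceding the lemma indicates this route: use the $\bigO{N^2}=\bigO{n^{2/3}}$ hitting step of Lemma~\ref{th:2d_hit_gen_proj} to control $\mmix^{(x)}$. We may assume $X_x$ is lazy, since laziness of the 3D walk (probability $1/2$ of staying put) makes $X_x$ lazy with holding probability at least $1/2$. It is also transitive and symmetric with uniform stationary distribution, being an isotropic walk on the cycle.

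First we get constant-accuracy mixing from hitting times. For lazy reversible chains, the mixing time is bounded by the maximal expected hitting time of sets of stationary measure at least $1/2$ (Levin--Peres--Wilmer, Chapter~10, Theorem~10.22 and the surrounding results). This gives $\mmix^{(x)}(1/4)=\bigO{\max_{a,b}\E_a[\mhit^{(x)}(b)]}=\bigO{N^2}$. An even simpler route on a vertex-transitive chain is Proposition~10.28 / Example 10.27-type bounds, which relate $\tmix$ to $t_{\text{hit}}$. If these do not apply cleanly, a coupling argument works instead: run two copies and move only one of them at each step, chosen by a fair coin. This is legitimate by laziness and keeps the marginals correct. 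The difference process is then itself an isotropic walk on the cycle, and it hits $0$ within $\bigO{N^2}$ expected steps by Theorem~\ref{th:2d_hit_gen}. By Markov's inequality, the coupling succeeds by step $4\cdot\bigO{N^2}$ with probability at least $3/4$, which gives $\mmix^{(x)}(1/4)=\bigO{N^2}$.

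Second, we boost the accuracy by submultiplicativity. The standard inequality $d(k\,\tmix)\le 2^{-k}$ (Levin--Peres--Wilmer, Eq.~(4.36)) yields $\mmix^{(x)}(\varepsilon)\le \lceil\log_2(1/\varepsilon)\rceil\,\mmix^{(x)}(1/4)$. Taking $\varepsilon=1/n^2$ gives $\bigO{\log n}$ repetitions. Hence $\mmix^{(x)}(1/n^2)=\bigO{n^{2/3}\log n}$. For the coupling route this is equally direct: the coupling fails over $k$ independent blocks with probability at most $4^{-k}$.

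The main obstacle is the first step, since the difference process must stay within the isotropic-walk framework. In the one-copy-moves coupling, the difference increments are symmetric with the same jump law, so the difference is transitive and symmetric. Its nearest-neighbour transition probability is half that of $X_x$, a positive constant, so Theorem~\ref{th:2d_hit_gen} applies. We must also check that the projection $X_x$ of the discrete Manhattan-distance L\'evy walk is a Markov chain with symmetric, translation-invariant increments, which follows from the symmetry of the 3D step distribution.
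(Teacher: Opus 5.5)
Your main route is exactly the paper's. The paper applies Theorem~10.22 of Levin--Peres--Wilmer to the lazy projected chain $X_x$, obtaining $\mmix^{(x)}\le 2\max_{a,b}\E_a[\mhit^{(x)}(b)]+1=\bigO{n^{2/3}}$ via Lemma~\ref{th:2d_hit_gen_proj}, and then amplifies to accuracy $1/n^2$ at the cost of an $\bigO{\log n}$ factor, just as you do. Your coupling fallback is also sound, with one harmless slip: under the standard lazy coupling the difference walk's nearest-neighbour probability is twice that of $X_x$, not half, which does not matter since only positivity is needed.
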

\begin{proof}
By construction, the chain $X_x$ has a probability of at least $1/2$ of remaining at its current state at each step. Recall that we defined $\mmix^{(x)} := \mmix^{(x)}(1/4)$; consequently, applying Theorem~10.22 in \cite{levin2017markov}, the mixing step of the projected chain $X_x$ satisfies
\[
\mmix^{(x)} \le 2\,M_{\mathrm{hit}}^{(x)} + 1,
\qquad
M_\mathrm{hit}^{(x)} := \max_{a,b}\,\mathbb{E}_a\!\left[\mhit^{(x)}(b)\right],
\]
where $\mathbb{E}_a[\cdot]$ denotes expectation for the chain started from $a$, and $\mhit^{(x)}(b)$ is the hitting step of node $b$ for $X_x$. From Theorem~\ref{th:2d_hit_gen_proj},
one obtains that $M_\mathrm{hit}^{(x)}$ is $\bigO{n^{2/3}}$,
and therefore $\mmix^{(x)} = \bigO{n^{2/3}}$. Finally, by the standard amplification bound for total variation mixing steps,
\[
\mmix^{(x)}(\varepsilon)=\bigO{\mmix^{(x)} \log\frac{1}{\varepsilon}}
=\bigO{n^{2/3}\log\frac{1}{\varepsilon}}.
\]
In particular, setting $\varepsilon = 1/n^2$ gives
\[
\mmix^{(x)}(1/n^2)=\bigO{n^{2/3}\log n}.
\]
\end{proof}
Altogether, Lemmas~\ref{th:1d_mix} and~\ref{th:2d_hit_gen_proj} enable us to apply Theorem~\ref{thm:SRW_line} to the L\'evy walk on the torus, thereby completing the proof of Eq.~\eqref{eq:discr_m_ub}, thus completing the proof of Theorem~\ref{thm:diff_lines}.

\bibliography{nc_refs}
\newpage

\end{document}